\newtheorem{theorem}{Theorem}
\newtheorem{corollary}[theorem]{Corollary}
\newtheorem{definition}[theorem]{Definition}
\newtheorem{example}[theorem]{Example}
\newtheorem{lemma}[theorem]{Lemma}
\newtheorem{proposition}[theorem]{Proposition}
\newtheorem{remark}[theorem]{Remark}
\newenvironment{proof}[1][Proof]{\noindent\textbf{#1.} }{\ \rule{0.5em}{0.5em}}
\newcommand{\R}{\mathbb{R}}
\newcommand{\E}{\operatorname{\mathbb{E}}}
\newcommand{\diff}{\mathrm{d}}
\newcommand{\B}{\mathcal{B}}
\newcommand{\F}{\mathcal{F}}
\newcommand{\M}{\mathcal{M}}
\newcommand{\trho}{\tilde{\rho}}
\newcommand{\tpsi}{\tilde{\psi}}
\DeclareMathOperator*{\argmin}{arg\,min}
\DeclareMathOperator*{\esssup}{ess\,sup}
\DeclareMathOperator*{\Var}{Var}
\newcommand{\keywords}[1]{\textbf{Keywords } #1}
\newcommand{\sqb}[1]{\ensuremath{ \left[ #1 \right] }}
\newcommand{\of}[1]{\ensuremath{\left( #1 \right)}}
\newcommand{\norm}[1]{\ensuremath{ \left\Vert #1 \right\Vert }}
\newcommand{\cb}[1]{\ensuremath{ \left\{ #1 \right\} }}
\let\abs=\envert
\title{Elicitability of Return Risk Measures\thanks{We are very grateful to seminar participants at the University of Vienna and the University of Amsterdam for their comments and suggestions. 
This research was funded in part by the Netherlands Organization for Scientific Research under an NWO-Vici grant 2020--2025 (Ayg\"un and Laeven).}
}
\author{M\"ucahit Ayg\"un 
\\
{\footnotesize Dept. of Quantitative Economics}\\
{\footnotesize University of Amsterdam}\\
{\footnotesize and Tinbergen Institute}\\
{\footnotesize \texttt{M.Aygun@uva.nl}}
\\
\and Fabio Bellini 
\\
{\footnotesize Dept. of Statistics and Quantitative Methods}\\
{\footnotesize University of Milano Bicocca}\\
{\footnotesize \texttt{Fabio.Bellini@unimib.it}}\\
\and Roger J.~A.~Laeven
\\
{\footnotesize Dept. of Quantitative Economics}\\
{\footnotesize University of Amsterdam, CentER}\\
{\footnotesize and EURANDOM}\\
{\footnotesize \texttt{R.J.A.Laeven@uva.nl}}
}
\begin{document}
	
\maketitle
\begin{abstract}
Informally, a risk measure is said to be elicitable if there exists a suitable
scoring function such that minimizing its expected value recovers the risk measure.\
In this paper, we analyze the elicitability properties of
the class of return risk measures (i.e., normalized, monotone and positively homogeneous risk measures).\
First, we provide dual representation results for convex and geometrically convex return risk measures.\
Next, we establish new axiomatic characterizations of Orlicz premia (i.e., Luxemburg norms).\
More specifically, we prove, under different sets of conditions, that Orlicz premia naturally arise
as the only elicitable return risk measures.\
Finally, we provide a general family of strictly consistent scoring functions for Orlicz premia,
a myriad of specific examples
and a mixture representation suitable for constructing Murphy diagrams.\
\end{abstract}

\keywords{Return risk measures, elicitability, Orlicz premia, consistent scoring functions, geometric convexity.}

\section{Introduction}

Since the seminal work of Savage (\cite{S71}) and Osband (\cite{O85}), an expanding and increasingly sophisticated literature
has studied elicitability properties of risk measures.\
Classes of risk measures may, or may not, admit families of strictly consistent scoring functions, and hence be elicitable,
with important implications for evaluating model performance and competing forecasts (see e.g.,\ \cite{G11}, \cite{P11}, \cite{BB15}, \cite{Z16}, \cite{NZ17}, \cite{CDS10}, and the references therein).\
For example, Average Value-at-Risk per se is not elicitable, but it is jointly elicitable with Value-at-Risk since it admits bivariate strictly consistent scoring functions (\cite{G11}, \cite{FZ16}).

Recently, \cite{BLR18} introduced the class of return risk measures, consisting of normalized, monotone and positively homogeneous risk measures.\
Return risk measures provide relative (or geometric) assessments of risk.\
They evaluate how much additional riskless log-return makes a financial position acceptable---whence their name.\
They constitute the relative counterparts of the class of monetary risk measures (\cite{FS11}, \cite{D12}), reminiscent of how relative risk aversion relates to absolute risk aversion.\
Their dynamic extensions, dynamic return risk measures, have been studied in
\cite{BLR21}.\footnote{Return risk measures
that allow for probability distortion were recently analyzed in \cite{WX22},
whereas applications of return risk measures to capital allocation can be found in \cite{MFS21} and \cite{CCR21}.}

Whereas elicitability properties of monetary risk measures are by now quite well understood,
little is known about the elicitability properties of return risk measures.\
This paper aims to fill this gap by analyzing the elicitability properties of return risk measures, with a particular emphasis on Orlicz premia, also known as Luxemburg norms, 
which as we will see play a central role in the theory of elicitable return risk measures.\
Orlicz premia, and the links between risk measures and Orlicz space theory, have been extensively studied in the financial and actuarial mathematics literature
(see e.g.,\ \cite{HG82}, \cite{BF08}, \cite{CL08}, \cite{CL09}, \cite{D12}, \cite{LS14}, \cite{BLR18}, \cite{BLR21} and the references therein); however, their connection to statistical decision theory in general, and elicitability in particular, has not been uncovered to our best knowledge.

This paper makes three main contributions.\
We start by providing dual representation results for convex and geometrically convex return risk measures
and clarify their precise relationship.\
In full generality, the dual representation takes the form of a supremum of discounted logarithmic certainty equivalents,
where the discount factor can be interpreted as an index of model plausibility under ambiguity.\
We show that convex return risk measures occur as a special case in the richer class of geometrically convex return risk measures, and
we also analyze their law-invariant representations.\
Furthermore, we introduce and analyze the class of optimized return risk measures and derive their dual representation.

Second, we establish new characterization results for Orlicz premia.\ 
We prove that Orlicz premia naturally arise as the only return risk measures that are elicitable.\
It has been shown in \cite{O85} that an elicitable risk measure must satisfy the convex
level sets (CxLS) property.\
We establish that a law-invariant geometrically convex return risk measure with the CxLS property is necessarily an Orlicz premium.\
We also show that requiring identifiability for return risk measures singles out the class of Orlicz risk measures:
under weak regularity conditions, they are the only identifiable, law-invariant, monotone and positively homogeneous measures of risk.\
These are our central results, the preparations and mathematical details of which are somewhat involved.

Third, we provide a general, rich family of scoring functions that we prove to be strictly consistent with Orlicz premia. 
A plethora of examples illustrates the generality of our new family of scoring functions.\
Special attention is devoted to scoring functions of the relative error form in view of their appealing properties in forecast evaluation.\
We also provide a mixture representation of the general family of scoring functions in terms of elementary scoring functions, depending on a low-dimensional parameter.\ 
This enables the use of so-called Murphy diagrams to compare competing forecasts simultaneously with respect to a full class of strictly consistent scoring functions, and we illustrate this in two examples.\

Statistical decision theory demonstrates that some classes of functionals
do not allow for meaningful point forecast evaluation by means of expected scores.\
Functionals that admit a strictly consistent scoring function,
guaranteeing that accurate forecasts are rewarded more than inaccurate forecasts,
are referred to as elicitable (see Definition~\ref{def:eli} for a formal definition).\
Our characterization results reveal the important place of the class of Orlicz premia
in the extensive literature on risk measures.\
This is graphically illustrated in Figure~\ref{fig:Vdiagram}.\
We know from \cite{W06}, \cite{BB15} and \cite{DBBZ14} that convex shortfall risk measures
occur as the subclass of monetary risk measures that are elicitable.\
Furthermore, the only elicitable law-invariant coherent risk measures are given by expectiles (\cite{Z16}, \cite{DBBZ14}).\
We establish in Theorems~\ref{th:axiom_2} and~\ref{th:ident} that Orlicz premia naturally arise as elicitable return risk measures.\
Furthermore, in Theorem~\ref{th:expectiles}, we provide a direct proof of the result that the only convex Orlicz premia that are translation invariant (and, hence, coherent risk measures) are the expectiles.\

The rest of this paper is organized as follows.\
In Section~\ref{sec:rrm}, we recall the general properties of return risk measures and derive some useful continuity properties.\
In Section~\ref{sec:dual}, we provide dual representation results for geometrically convex 
and convex return risk measures,
explicate their connection and analyze optimized return risk measures.\
In Section~\ref{sec:Orl}, we establish our characterization results for Orlicz premia.\
Section~\ref{sec:scoring} presents our results on families of scoring functions strictly consistent with Orlicz premia 
including many examples.\

\begin{figure}
\begin{tikzpicture}
\begin{scope}
\draw[black,very thick] (-2,0) circle(3);
\draw[black,very thick](2,0) circle(3);
\clip (-2,0) circle(3);
\clip(2,0) circle(3);
\draw[black,thick] (0,-2.7) circle(3);
\draw[black,thick] (0,4.2) circle(3);
\draw[black,thick] (0,-3.3) circle(3);
\fill[pattern=vertical lines, pattern color=black!30!white](0,-2.7) circle(3);
\fill[pattern=horizontal lines, pattern color=black!30!white](0,4.2) circle(3);
\end{scope}
\begin{scope}
\clip (-2,0) circle(3);
\draw[black,thick] (0,-3.8) circle(3);
\fill[pattern=north west lines, pattern color=black!30!white](0,-3.8) circle(3);
\end{scope}
\begin{scope}
\clip (2,0) circle(3);
\draw[black,thick] (0,-3.8) circle(3);
\fill[pattern=north east lines, pattern color=blue!30!white] (0,-3.8) circle(3);
\end{scope}
\begin{scope}
\draw[black,thick] (-2,0) circle(3);
\draw[black,thick](2,0) circle(3);
\clip (-2,0) circle(3);
\clip(2,0) circle(3);
\draw[black,thick] (0,-3.8) circle(3);
\end{scope}
\draw[black,very thick] (-2,0) circle(3);
\draw[blue,very thick](2,0) circle(3);
\draw[black,thick] (0,1.3) circle (0.00000001pt) node[anchor=south]{V@R};
\draw[black,thick] (0,-0.8) circle (0.00000001pt) node[anchor=south]{AV@R};
\filldraw[black] (0,-1) circle (0.000001pt) node[anchor=north]{\footnotesize Expectiles};
\filldraw[black] (0,0.85) circle (0.000001pt) node[anchor=north]{Coherent};
\draw [-to](0,0.45) -- (0,0.32);
\filldraw[blue] (2.5,-1.3) circle (0.000001pt) node[anchor=south]{
Orlicz};
\draw [-to](2.5,-1.3) -- (2,-1.5);
\filldraw[black] (-2.5,-1.3) circle (0.000001pt) node[anchor=south]{
Shortfall};
\draw [-to](-2.5,-1.3) -- (-2,-1.5);
\filldraw[blue] (4,3) circle (0.0000001pt) node[anchor=south]{Return Risk Measures};
\draw [-to](3.8,3) -- (3.5,2.7);
\filldraw[black] (-4,3) circle (0.0000001pt) node[anchor=south]{Monetary Risk Measures};
\draw [-to](-3.8,3) -- (-3.5,2.7);
\end{tikzpicture}
\caption{Venn Diagram of Classes of Risk Measures.}
\label{fig:Vdiagram}
{\small \textit{Notes:} This figure graphically illustrates the relationships between the classes of monetary risk measures and return risk measures and some of their prominent subclasses.
The intersection between monetary and return risk measures includes Value-at-Risk (V@R), the class of coherent risk measures in which Average Value-at-Risk (AV@R) occurs as a special case, and the class of expectiles.
Convex shortfall risk measures occur as the subclass of monetary risk measures that are elicitable.  
The only coherent shortfall risk measures are the expectiles. 
As we establish in this paper, convex Orlicz premia arise as elicitable return risk measures. 
Their intersection with the class of coherent risk measures is again given by the expectiles.}
\end{figure}
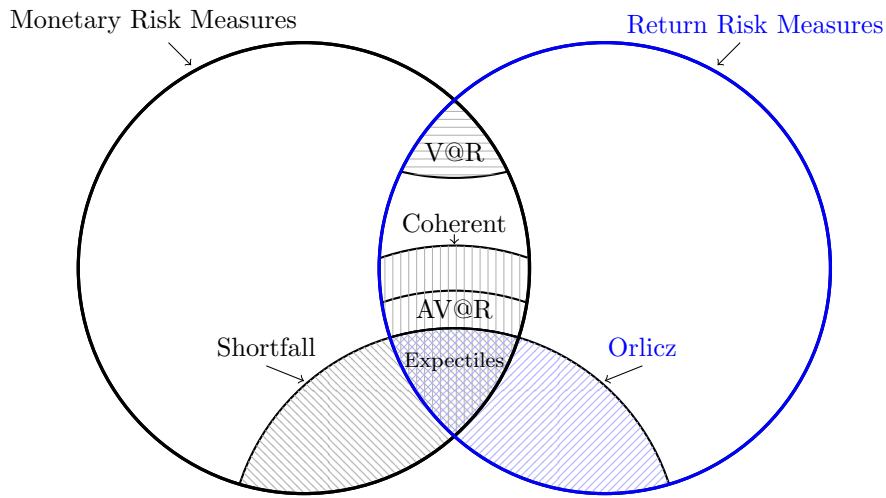
	
\section{Return risk measures}\label{sec:rrm}


Let $(\Omega, \F, P)$ be a nonatomic probability space.\ 
In the present paper, random variables $X \colon \Omega \to \R$  represent financial losses.\
We will consider finite-valued risk measures 
defined on $L^{\infty}(\Omega, \F, P)$ or on its subsets
$L^{\infty}_{+}(\Omega, \F, P):=\{ X \in L^\infty \mid X \geq 0 \; P \mbox{-a.s.} \}$ and
$L^{\infty}_{++}(\Omega, \F, P):=\{ X \in L^\infty \mid X > 0 \; P \mbox{-a.s.} \}$.\
Equalities and inequalities between random variables are meant to hold $P$-almost surely without further explicit mentioning.
\smallskip
\begin{definition}
We say that a functional $\rho \colon L^{\infty}(\Omega, \F, P) \to \R$ is:
\begin{enumerate}[label={\alph*}), left = 0pt, itemsep=0pt]
\item  translation invariant if $\rho(X+h)=\rho(X)+h, \, \forall h \in \R, \, \forall X \in L^{\infty}$
\item  monotone if $X \leq Y \Rightarrow \rho(X) \leq \rho (Y)$
\item  monetary if $\rho$ is translation invariant, monotone and satisfies $\rho(0)=0$
\item positively homogeneous if $\rho( \lambda X)= \lambda \rho(X), \, \forall \lambda \geq 0, \, \forall X \in L^{\infty}$
\item convex if
$
\rho (\alpha X +(1-\alpha)Y) \leq \alpha \rho (X) + (1 - \alpha) \rho (Y), \, \forall X,Y \in L^{\infty}, \, \forall \alpha \in (0,1)$
\item coherent if it is monetary, convex and positively homogeneous
\item law invariant if $X \sim Y \Rightarrow \rho(X)=\rho(Y)$, where $X\sim Y$ means that $X$ and $Y$ have the same distribution.
\end{enumerate}
\end{definition}
A law-invariant functional on $L^\infty(\Omega, \F, P) $ induces a functional on $\M_{1,c}(\R)$, the set of probability measures with compact support in $\R$, by means of
\[
\rho(F) := \rho (X), \text { if } X \sim F,
\]
where each probability measure $\mu\in\M_{1,c}(\R)$ is identified with its distribution function  $F(x):=\mu(-\infty,x]$.

We recall from \cite{BLR18} the notions of return risk measure and of its associated multiplicative acceptance set.
	
\begin{definition}\label{def:ret-rm}
A return risk measure $\trho \colon L^\infty_{+} \to [0,+\infty)$ is a positively homogeneous and monotone functional satisfying $\trho(1)=1$. 
Its corresponding multiplicative acceptance set (at the level of random variables) is $B_{\trho} = \{X \in L^\infty_{+} \mid \trho(X) \leq 1\}.$
\end{definition}

For return risk measures the notion of geometric convexity---also known as 
multiplicative convexity or GG-convexity for functions on the positive real line (see e.g., \cite{N00})---will be of interest in what follows.
	
\begin{definition}
A functional $\trho \colon L^\infty_{+} \to [0,+\infty)$ is geometrically convex if for each $X,Y \in L^{\infty}_{+}$ and  $\alpha \in (0,1)$  it holds that
\[
\trho(X^\alpha Y^{1-\alpha}) \leq \trho^\alpha(X)\trho^{1-\alpha}(Y).
\]
\label{def:georetrm}
\end{definition}

We will show in Lemma~\ref{lemma:convex-GGconvex} that convex return risk measures are also geometrically convex. 
The class of geometrically convex risk measures is strictly larger than the class of convex return risk measures, a nonconvex example of the former being the logarithmic certainty equivalent $\trho(X)=\exp \E [\log X] $.

A one-to-one correspondence between return risk measures and monetary risk measures has been outlined in \cite{BLR18} as follows:\ given a monetary risk measure $\rho \colon  L^{\infty} \to \R $, the associated return risk measure $\trho \colon L^{\infty}_{++} \to (0, +\infty)$ is given by
\begin{equation}
\label{eq:trho-1}
\trho(X):=\exp \left( \rho \left( \log(X) \right) \right),
\end{equation}
and, \textit{vice versa}, given a return risk measure $\trho \colon L^{\infty}_{++} \to (0, +\infty)$, the associated monetary risk measure $\rho \colon  L^{\infty} \to \R $ is
\begin{equation}
\label{eq:trho-2}
\rho(Y):=\log \left( \trho \left( \exp(Y) \right) \right).
\end{equation}
\noindent
The main properties of this correspondence are recalled in the following lemma.
\begin{lemma}  \label{lemma:rho_trho}
Let $\rho \colon  L^{\infty} \to \R $ and $\trho \colon L^{\infty}_{++} \to (0, +\infty)$ be as in \eqref{eq:trho-1} and \eqref{eq:trho-2}.\
Then:
\begin{enumerate}[label={\alph*}), left = 0pt, itemsep=0pt]
\item  $\rho(0)=0 \iff \trho(1)=1$
\item  $\rho$ is translation invariant $\iff$ $\trho$ is positively homogeneous
\item  $\rho$ is monotone $\iff$ $\trho$ is monotone
\item  $\rho$ is convex $\iff$ $\trho$ is geometrically convex
\item  $\rho$ is law invariant $\iff$ $\trho$ is law invariant
\item  if $\rho$ is law invariant, then,
for $F \in \M_{1,c}(0,+\infty)$,
\begin{equation}
\label{eq:trho_dist}
\trho (F) =  \exp \left (\rho \left ( F (e^t) \right ) \right ).
\end{equation}
\end{enumerate}
\end{lemma}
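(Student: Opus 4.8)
The plan is to exploit the single structural fact that underlies all seven equivalences: the maps $\log \colon L^\infty_{++} \to L^\infty$ and $\exp \colon L^\infty \to L^\infty_{++}$ are mutually inverse bijections that are order isomorphisms (both strictly increasing), that turn pointwise products into pointwise sums, that intertwine scalar multiplication by $\lambda > 0$ with addition of the constant $\log\lambda$, and that preserve the law of a random variable (since they act pointwise and measurably). Because $\rho(Y) = \log(\trho(\exp Y))$ and $\trho(X) = \exp(\rho(\log X))$ are obtained from one another by conjugation with this bijection, each property in the list is simply transported across. I would first record that both functionals are well defined and finite-valued: for $X \in L^\infty_{++}$ one has $\log X \in L^\infty$, hence $\rho(\log X) \in \R$ and $\trho(X) = \exp(\rho(\log X)) \in (0,+\infty)$; symmetrically for $\rho$ on $L^\infty$.

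Then I would dispatch (a)--(e) one at a time, each in a line or two. For (a), $\trho(1) = \exp(\rho(0))$ since $\log 1 = 0$, so $\trho(1) = 1 \iff \rho(0) = 0$. For (b), translation invariance of $\rho$ gives $\trho(\lambda X) = \exp(\rho(\log X + \log\lambda)) = \lambda\exp(\rho(\log X)) = \lambda\trho(X)$ for $\lambda > 0$, while positive homogeneity of $\trho$ gives $\rho(Y + h) = \log(\trho(e^{h}\exp Y)) = h + \rho(Y)$. For (c), monotonicity transfers because $\log$ and $\exp$ are increasing. For (d), convexity of $\rho$ together with $\log(X^\alpha Y^{1-\alpha}) = \alpha\log X + (1-\alpha)\log Y$ yields $\trho(X^\alpha Y^{1-\alpha}) \le \trho^\alpha(X)\trho^{1-\alpha}(Y)$, and the converse runs through $X_i = \exp Y_i$, using $\exp(\alpha Y_1 + (1-\alpha)Y_2) = X_1^\alpha X_2^{1-\alpha}$. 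For (e), if $X \sim Y$ then $\log X \sim \log Y$, so $\rho(\log X) = \rho(\log Y)$, giving $\trho(X) = \trho(Y)$; the converse is symmetric via $\exp$.

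Finally, for (f) I would compute the distribution function of $\log X$ when $X \sim F$ with $F \in \M_{1,c}(0,+\infty)$: for every $t \in \R$, $P(\log X \le t) = P(X \le e^{t}) = F(e^{t})$, so $\log X$ has law $t \mapsto F(e^{t})$, which lies in $\M_{1,c}(\R)$ because $F$ has compact support in $(0,+\infty)$. Law invariance of $\rho$ then gives $\trho(F) = \trho(X) = \exp(\rho(\log X)) = \exp(\rho(F(e^{\cdot})))$, which is exactly \eqref{eq:trho_dist}. The only points needing genuine care rather than being purely mechanical are the bookkeeping around the restricted domain $L^\infty_{++}$ — so that positive homogeneity is read for $\lambda > 0$ only, the degenerate case $\lambda = 0$ being vacuous here — and, in (f), checking that the transformed distribution function still has compact support so that $\rho(F(e^{\cdot}))$ is meaningful; neither constitutes a real obstacle.
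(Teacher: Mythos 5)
Your proof is correct, and it follows the standard conjugation-by-$\log$/$\exp$ argument that the paper itself relies on: the paper states this lemma without proof (recalling the correspondence from \cite{BLR18}), and your transport of each property across the order-preserving bijection, including the careful handling of $\lambda>0$ on $L^\infty_{++}$ and the compact support of $t\mapsto F(e^t)$ in part~(f), is exactly the intended reasoning.
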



In Section~\ref{sec:dual}, we will establish dual representations for geometrically convex return risk measures.\ 
It will turn out that for return risk measures the definitions of Fatou and Lebesgue properties have to be slightly modified.\ 
We introduce both the usual and the modified versions in the definition below.
\begin{definition}
A risk measure $\rho$ has the Fatou property if
\begin{align*}
X_n \overset{P}{\to} X, \, \norm{X_n}_{\infty} \leq k \implies \rho(X)\leq \liminf_{n\to +\infty} \rho(X_n),
\end{align*}
whereas it has the Lebesgue property if
\begin{align*}
X_n \overset{P}{\to} X, \, \norm{X_n}_{\infty} \leq k \implies \rho(X_n)\to  \rho(X).
\end{align*}
A return risk measure $\trho$ has the lower-bounded Fatou property if
\begin{align*}
X_n \overset{P}{\to} X, \, \norm{X_n}_{\infty} \leq k,   X_n\geq  c>0 \implies \Tilde{\rho}(X)\leq \liminf_{n\to +\infty} \Tilde{\rho}(X_n),
\end{align*}
whereas it has the lower-bounded Lebesgue property if
\begin{align*}
X_n \overset{P}{\to} X, \, \norm{X_n}_{\infty} \leq k,   X_n\geq  c>0 \implies \Tilde{\rho}(X_n)\to  \Tilde{\rho}(X).
\end{align*}
\end{definition}
	
\noindent
Clearly the lower-bounded Lebesgue property implies the lower-bounded Fatou property. 
Both properties are weaker than the usual ones, requiring respectively lower semicontinuity and continuity under more restrictive assumptions.

\begin{lemma}\label{lemma:Fatou}
Let $\rho \colon  L^{\infty} \to \R$ and $\Tilde{\rho} \colon L^{\infty}_{++} \to (0,+\infty)$ be as in \eqref{eq:trho-1} and \eqref{eq:trho-2}. Then:
\begin{enumerate}[left=0pt, itemsep=0pt]
\item[(i)] $\Tilde{\rho}$ has the lower-bounded Fatou property if and only if $\rho$ has the Fatou property
\item[(ii)] $\Tilde{\rho}$ has the lower-bounded Lebesgue property if and only if $\rho$ has the Lebesgue property.
\end{enumerate}
\end{lemma}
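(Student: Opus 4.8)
The plan is to transfer each convergence hypothesis between the $\trho$-world and the $\rho$-world through the change of variables \eqref{eq:trho-1}--\eqref{eq:trho-2}, exploiting that $\exp$ and $\log$ are homeomorphisms on the relevant domains that behave well with respect to convergence in probability and uniform boundedness. The two parts are proved symmetrically, so I would write up part (i) in detail and then indicate the identical adjustments for part (ii) (replacing ``$\liminf$'' and the lower-semicontinuity inequality by genuine convergence throughout).

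First I would prove the ``if'' direction of (i): assume $\rho$ has the Fatou property and take a sequence $(X_n)$ in $L^\infty_{++}$ with $X_n \overset{P}{\to} X$, $\norm{X_n}_\infty \leq k$ and $X_n \geq c > 0$. Set $Y_n := \log X_n$ and $Y := \log X$. The key observations are: (a) $X_n \geq c > 0$ and $\norm{X_n}_\infty \leq k$ force $Y_n$ to take values in the compact interval $[\log c, \log k]$, so $\norm{Y_n}_\infty \leq \max\{\abs{\log c}, \abs{\log k}\} =: k'$; (b) since $t \mapsto \log t$ is continuous on $[c,k]$, convergence in probability is preserved, $Y_n \overset{P}{\to} Y$ (one can invoke the continuous mapping theorem, or argue directly via uniform continuity of $\log$ on $[c,k]$ that $\{\abs{\log X_n - \log X} > \varepsilon\} \subseteq \{\abs{X_n - X} > \delta\}$ for a suitable $\delta = \delta(\varepsilon)$, noting $X$ also lies in $[c,k]$ a.s.); and (c) $X$ itself satisfies $c \leq X \leq k$ a.s., so $Y \in L^\infty$. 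Then the Fatou property of $\rho$ gives $\rho(Y) \leq \liminf_n \rho(Y_n)$, and applying the increasing continuous function $\exp$ to both sides — using that $\exp$ commutes with $\liminf$ by monotonicity and continuity — yields $\trho(X) = \exp(\rho(Y)) \leq \exp(\liminf_n \rho(Y_n)) = \liminf_n \exp(\rho(Y_n)) = \liminf_n \trho(X_n)$, which is the lower-bounded Fatou property.

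For the converse, assume $\trho$ has the lower-bounded Fatou property and let $Y_n \overset{P}{\to} Y$ with $\norm{Y_n}_\infty \leq k$. Put $X_n := \exp Y_n$, $X := \exp Y$. Now $\norm{Y_n}_\infty \leq k$ gives $e^{-k} \leq X_n \leq e^{k}$ a.s., so $\norm{X_n}_\infty \leq e^k$ and $X_n \geq e^{-k} =: c > 0$ — exactly the extra lower bound the modified property requires — and continuity of $\exp$ on $[-k,k]$ gives $X_n \overset{P}{\to} X$. The lower-bounded Fatou property then yields $\trho(X) \leq \liminf_n \trho(X_n)$, and applying $\log$ (increasing and continuous, hence commuting with $\liminf$) gives $\rho(Y) = \log \trho(X) \leq \liminf_n \log \trho(X_n) = \liminf_n \rho(Y_n)$. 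Part (ii) is identical with $\liminf$ replaced by $\lim$ and the inequalities by equalities, using that $\exp$ and $\log$ are continuous (so they preserve ordinary convergence of real sequences).

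I do not anticipate a genuine obstacle here; the only point requiring a little care is the bookkeeping of the bounds — making sure that the one-sided lower bound $X_n \geq c > 0$ in the $\trho$-hypotheses corresponds precisely to the two-sided bound $\norm{Y_n}_\infty \leq k$ in the $\rho$-hypotheses, and conversely that uniform boundedness of the $Y_n$ automatically produces both the upper bound and the strictly positive lower bound on the $X_n$. This asymmetry (a two-sided $L^\infty$ bound on $Y$ versus a two-sided but multiplicatively-phrased bound $c \leq X \leq k$) is exactly why the Fatou and Lebesgue properties must be modified for return risk measures, and spelling out this correspondence cleanly is the substance of the proof; everything else is the continuous mapping theorem together with monotonicity of $\exp$ and $\log$.
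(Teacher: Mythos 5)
Your proof is correct and follows essentially the same route as the paper's: transform via $\log$/$\exp$, use the continuous mapping theorem and the bound $\norm{\log X_n}_\infty \leq \max(\log k, -\log c)$, and exploit monotonicity and continuity of $\exp$ and $\log$ to transfer the $\liminf$ inequality (respectively the limit). The paper merely writes out one direction of (i) and declares the rest "similar," whereas you spell out the converse and the bookkeeping of the bounds explicitly — but the argument is the same.
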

\begin{proof}
(i) Let $X_n \in L^\infty_{++}$ satisfy $X_n \overset{P}{\to} X$, $\Vert X_n \Vert_\infty \leq k$, $X_n \geq c >0$. 
By the continuous mapping theorem, it follows that $\log(X_n) \overset{P}{\to} \log(X)$, and $\Vert \log(X_n) \Vert_{\infty} \leq \max(\log k,  -\log c)$, so from the Fatou property of $\rho$ it follows that
\[
\rho(\log X) \leq \liminf_{n \to +\infty} \rho(\log(X_n)),
\]
and exponentiating both sides we get $\trho(X) \leq \liminf_{n \to +\infty} \trho(\log(X_n))$.\ 
The proof of the `only if' part and the proof of (ii) are similar.
\end{proof}
\medskip
	
Since a law-invariant, monetary and convex risk measure automatically satisfies the Fatou property (see \cite{JST06} and \cite{GLMX18} 
for recent developments on the automatic validity of the Fatou property on general spaces), it follows from Lemma~\ref{lemma:Fatou} that a law-invariant geometrically convex return risk measure automatically has the lower-bounded Fatou property.\ 
As a consequence, we have the following mixture continuity result, in which, as usual, we denote by $\delta_x$ a probability measure supported at $x$.

\begin{proposition}
\label{prop:mix-cont}
Let $\trho \colon \mathcal{M}_{1,c}(0, +\infty) \to (0,+\infty)$ be a law-invariant geometrically convex return risk measure. 
Let $0<x<y$. 
Then the mapping
\[
\lambda \mapsto \trho (\lambda \delta_x+(1-\lambda)\delta_y)
\]
is continuous at each $\lambda \in [0,1)$.
\end{proposition}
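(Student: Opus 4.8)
The plan is to transfer the continuity question from the multiplicative world to the additive world via the correspondence \eqref{eq:trho-2}, where mixture continuity of convex monetary risk measures is already understood. Concretely, set $\rho(Y) := \log(\trho(\exp(Y)))$; by Lemma~\ref{lemma:rho_trho} this $\rho$ is a law-invariant convex monetary risk measure, and by Lemma~\ref{lemma:Fatou} (together with the automatic Fatou property of law-invariant convex risk measures, cited after its proof) $\trho$ has the lower-bounded Fatou property. The key observation is that, for $0<x<y$, the two-point distribution $\lambda\delta_x+(1-\lambda)\delta_y$ is pushed forward under $t\mapsto\log t$ to $\lambda\delta_{\log x}+(1-\lambda)\delta_{\log y}$, so by \eqref{eq:trho_dist} we have $\trho(\lambda\delta_x+(1-\lambda)\delta_y) = \exp(\rho(\lambda\delta_{\log x}+(1-\lambda)\delta_{\log y}))$. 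Since $\exp$ is continuous, it suffices to prove that $\lambda\mapsto\rho(\lambda\delta_a+(1-\lambda)\delta_b)$ is continuous at each $\lambda\in[0,1)$, where $a=\log x<b=\log y$.

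For the additive statement I would argue by a sandwich using monotonicity and the Fatou property in both directions. Fix $\lambda_0\in[0,1)$ and let $\lambda_n\to\lambda_0$. First, monotonicity of $\rho$ in first-order stochastic dominance (which follows from monotonicity of $\rho$ on random variables, since one can couple the mixtures on a common nonatomic space so that one dominates the other pointwise) gives that $\lambda\mapsto\rho(\lambda\delta_a+(1-\lambda)\delta_b)$ is nonincreasing in $\lambda$ (more mass at the smaller point $a$ lowers the risk). Hence it has left and right limits at $\lambda_0$, and the only possible discontinuity is a jump. To rule out a jump at $\lambda_0<1$: realize $\lambda_n\delta_a+(1-\lambda_n)\delta_b$ by bounded random variables $Y_n$ on $(\Omega,\F,P)$ converging in probability to a random variable $Y$ with law $\lambda_0\delta_a+(1-\lambda_0)\delta_b$; since $a,b$ are fixed reals the $Y_n$ are uniformly bounded, so the Fatou property yields $\rho(Y)\leq\liminf_n\rho(Y_n)$, i.e. $\rho(\lambda_0\delta_a+(1-\lambda_0)\delta_b)\leq\liminf_n\rho(\lambda_n\delta_a+(1-\lambda_n)\delta_b)$. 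Combined with the monotone one-sided limits, this forces continuity from whichever side the jump could occur; the reason $\lambda_0<1$ is needed is exactly that one must be able to approximate from both sides (at $\lambda_0=1$ only left approach is possible and a downward jump is genuinely allowed).

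Actually the cleanest route avoids discussing monotonicity of the mixture map explicitly: it suffices to show $\rho(\lambda_0\delta_a+(1-\lambda_0)\delta_b)\geq\limsup_n\rho(\lambda_n\delta_a+(1-\lambda_n)\delta_b)$ as well, and this upper bound can be obtained by applying the Fatou property to the complementary approximation or, more directly, by invoking convexity of $\lambda\mapsto\rho(\lambda\delta_a+(1-\lambda)\delta_b)$ together with the already-established lower semicontinuity: a convex function on an interval that is lower semicontinuous is automatically continuous on the interior, and at a boundary point it is continuous from inside the interval provided it is finite and lsc there, which handles $\lambda_0\in[0,1)$ from the right; the left side at an interior point is covered by interior continuity. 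Convexity of $\lambda\mapsto\rho(\lambda\delta_a+(1-\lambda)\delta_b)$ follows because $\rho$ is convex as a functional and the map $\lambda\mapsto\lambda\delta_a+(1-\lambda)\delta_b$ is affine into $\M_{1,c}$, compatibly with the mixture operation on random variables.

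The main obstacle I anticipate is the passage between the distributional statement and the random-variable statement for the Fatou property: one must exhibit, on the given nonatomic space, bounded random variables $Y_n$ with the prescribed two-point laws that converge in probability to a $Y$ with the limiting law. This is a standard but slightly delicate coupling — e.g. fix a uniform random variable $U$ on $(\Omega,\F,P)$ (available since the space is nonatomic) and set $Y_n = a\mathbf{1}_{\{U\le\lambda_n\}}+b\mathbf{1}_{\{U>\lambda_n\}}$, so that $Y_n\to Y := a\mathbf{1}_{\{U\le\lambda_0\}}+b\mathbf{1}_{\{U>\lambda_0\}}$ pointwise wherever $U\neq\lambda_0$, hence $P$-a.s., with uniform bound $\max(|a|,|b|)$. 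With this coupling in hand, the Fatou property applies verbatim, and the rest is the convexity/lsc bookkeeping described above; everything else is routine.
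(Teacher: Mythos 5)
Your reduction to the additive side is exactly the paper's: via Lemma~\ref{lemma:rho_trho} and \eqref{eq:trho_dist} one writes $\trho(\lambda\delta_x+(1-\lambda)\delta_y)=\exp\left(\rho(\lambda\delta_{\log x}+(1-\lambda)\delta_{\log y})\right)$, so it suffices to prove mixture continuity on $[0,1)$ for the associated law-invariant convex monetary risk measure $\rho$. The paper settles that step by citing Proposition~2.1 of \cite{DBBZ14}; you instead try to prove it directly, and this is where the proposal breaks down. Your coupling argument and the automatic Fatou property do give lower semicontinuity of $f(\lambda):=\rho(\lambda\delta_a+(1-\lambda)\delta_b)$, and together with the (correct) monotonicity of $f$ this yields right-continuity on $[0,1)$ --- but not left-continuity at interior points: a nonincreasing lsc function can jump downwards as $\lambda$ increases through an interior point. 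Indeed $f=b$ on $[0,\alpha)$ and $f=a$ on $[\alpha,1]$ is nonincreasing and lsc, and it is precisely the mixture map of $V@R_\alpha$, which is law invariant, monotone, positively homogeneous and satisfies every property you actually use; only convexity rules this behaviour out. So the claim that lsc plus the monotone one-sided limits ``forces continuity from whichever side the jump could occur'' is false, and interior left-continuity is exactly the nontrivial content of the cited Proposition~2.1.

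Your fallback route does not repair this. The assertion that $\lambda\mapsto\rho(\lambda\delta_a+(1-\lambda)\delta_b)$ is convex ``because $\rho$ is convex and mixtures are affine'' confuses convexity in the random variable, $\rho(\alpha X+(1-\alpha)Y)\le\alpha\rho(X)+(1-\alpha)\rho(Y)$, which is what is assumed, with convexity with respect to mixtures of laws; the law of $\alpha X+(1-\alpha)Y$ is not $\alpha F+(1-\alpha)G$. Concretely, the mixture map of $AV@R_\alpha$ equals $b$ on $[0,\alpha]$ and is affine decreasing on $[\alpha,1]$, hence concave with a kink and not convex, even though $AV@R_\alpha$ is a law-invariant coherent risk measure; so the ``convex $+$ lsc $\Rightarrow$ continuous on the interior'' argument collapses. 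Likewise, ``applying the Fatou property to the complementary approximation'' cannot produce the missing $\limsup$ bound: Fatou only ever gives the $\liminf$ inequality, and the reverse inequality would amount to the Lebesgue property, which is neither assumed nor automatic. As written, your proof establishes only right-continuity; to complete it you must either invoke Proposition~2.1 of \cite{DBBZ14}, as the paper does, or supply a genuine argument using convexity of $\rho$ on random variables to exclude an interior downward jump from the left.
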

\begin{proof}
From Lemma~\ref{lemma:rho_trho} it follows that the corresponding $\rho \colon L^\infty \to \R$ given by equation \eqref{eq:trho-2} is a convex law-invariant monetary risk measure.\
From Proposition~2.1 in \cite{DBBZ14} suitably adapted to our sign conventions it follows that
\[
\lambda \mapsto \rho(\lambda \delta_u+(1-\lambda)\delta_v)
\]
is continuous at each $\lambda \in [0,1)$, for fixed $u,v\in \R$ with $u<v$.  
Therefore, from the representation \eqref{eq:trho_dist}, we obtain
\begin{align*}
\Tilde{\rho}(\lambda \delta_x+(1-\lambda)\delta_y)=\exp(\rho(\lambda \delta_{\log(x)}+(1-\lambda)\delta_{\log(y)})),
\end{align*}
and the thesis follows by the continuity of compositions with $\exp$ and $\log$.
\end{proof}

\section{Dual representations}\label{sec:dual}
	
In this section, we will denote by $\mathbf{P}$ the set of probability measures on $(\Omega, \F, P)$ that are absolutely continuous with respect to the reference measure $P$.\ 
In the next theorem, we derive a dual representation of geometrically convex return risk measure as suprema of suitably weighted, or discounted, logarithmic certainty equivalents; the less plausible the probabilistic model, the lower is the corresponding discount factor.

\begin{theorem}
Let $\trho \colon L^\infty_{++} \to (0,+\infty)$ be a geometrically convex return risk measure satisfying the lower-bounded Fatou property.\ 
Then there exists a multiplicative weighting function $\tilde{\alpha} \colon \mathbf{P} \to [0,1]$ satisfying $\sup_{Q \in \mathbf{P}} \tilde{\alpha}(Q)=1$
such that
\begin{equation}\label{eq:dual_rep}
\tilde{\rho}(X)=\sup_{Q\in \mathbf{P}} \{ \tilde{\alpha}(Q)\exp\of{\E_Q[\log X]} \}.
\end{equation}
Furthermore, if $\trho$ satisfies the lower-bounded Lebesgue property, the supremum in \eqref{eq:dual_rep} is attained.

\end{theorem}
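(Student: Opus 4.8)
The plan is to transfer the problem to the monetary world via the bijection in Lemma~\ref{lemma:rho_trho}, apply the classical dual representation of convex monetary risk measures with the Fatou property, and then translate back. Concretely, set $\rho(Y) := \log \tilde\rho(\exp Y)$ as in \eqref{eq:trho-2}. By Lemma~\ref{lemma:rho_trho}, $\rho$ is a monetary, convex risk measure on $L^\infty$, and by Lemma~\ref{lemma:Fatou}(i) it inherits the (ordinary) Fatou property from the lower-bounded Fatou property of $\tilde\rho$. Hence the standard duality (Delbaen; see also \cite{FS11}) gives a penalty function $\beta \colon \mathbf{P} \to [0,+\infty]$ with $\inf_{Q} \beta(Q) = 0$ such that $\rho(Y) = \sup_{Q \in \mathbf{P}} \{ \E_Q[Y] - \beta(Q) \}$. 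Here one must be careful about sign conventions, since in this paper $X$ represents losses and $\rho$ is monotone increasing rather than decreasing; this only changes the sign in the usual formula but is otherwise routine.

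Next I would substitute $Y = \log X$ for $X \in L^\infty_{++}$ (note $\log X \in L^\infty$, so this is legitimate) to obtain
\[
\log \tilde\rho(X) = \rho(\log X) = \sup_{Q \in \mathbf{P}} \{ \E_Q[\log X] - \beta(Q) \},
\]
and then exponentiate. Since $\exp$ is continuous and strictly increasing, it commutes with the supremum, yielding
\[
\tilde\rho(X) = \sup_{Q \in \mathbf{P}} \exp\of{ \E_Q[\log X] - \beta(Q) } = \sup_{Q \in \mathbf{P}} \{ e^{-\beta(Q)} \exp\of{\E_Q[\log X]} \}.
\]
Defining the multiplicative weighting function $\tilde\alpha(Q) := e^{-\beta(Q)} \in [0,1]$, the normalization $\inf_Q \beta(Q) = 0$ becomes $\sup_Q \tilde\alpha(Q) = 1$, which is exactly \eqref{eq:dual_rep}. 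The fact that $\tilde\alpha$ takes values in $[0,1]$ is automatic from $\beta \geq 0$, and $\beta \geq 0$ itself follows from $\rho(0) = 0$ (equivalently $\tilde\rho(1)=1$), since $0 = \rho(0) \geq -\beta(Q)$ for every $Q$.

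For the attainment claim, I would invoke Lemma~\ref{lemma:Fatou}(ii): the lower-bounded Lebesgue property of $\tilde\rho$ is equivalent to the Lebesgue property of $\rho$. A convex monetary risk measure with the Lebesgue property admits a dual representation in which the supremum over $\mathbf{P}$ is attained for every $Y \in L^\infty$ (this is the Jouini--Schachermayer--Touzi / Delbaen characterization of the Lebesgue property via weak compactness of the sublevel sets of $\beta$, which forces the supremum to be a maximum). Taking $Y = \log X$ and noting that the maximizing $Q$ for $\rho(\log X)$ is also maximizing for $\tilde\rho(X)$ after exponentiation (again because $\exp$ is strictly increasing) gives attainment in \eqref{eq:dual_rep}.

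I expect the main obstacle to be bookkeeping rather than anything deep: getting the sign conventions consistent between the loss-based, monotone-increasing setup here and the gain-based conventions in the classical literature, and verifying that the normalization $\sup_Q \tilde\alpha(Q) = 1$ indeed corresponds to $\inf_Q \beta(Q) = 0$ (which one should double-check is part of the cited monetary duality, using $\tilde\rho(1)=1$). One should also confirm that no measurability or integrability subtlety arises when passing $\log X$ through the monetary representation, but since $X \in L^\infty_{++}$ is bounded away from $0$ and $\infty$, $\log X \in L^\infty$ and $\E_Q[\log X]$ is finite for all $Q \in \mathbf{P}$, so this is harmless.
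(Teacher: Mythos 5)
Your proposal is correct and follows essentially the same route as the paper's proof: transfer to the associated convex monetary risk measure via the log/exp correspondence (Lemmas~\ref{lemma:rho_trho} and~\ref{lemma:Fatou}), apply the classical Fatou-based dual representation, exponentiate and set $\tilde{\alpha}(Q)=e^{-\alpha(Q)}$, with the normalization coming from $\trho(1)=1$ and attainment from the Lebesgue property. No substantive differences.
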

\begin{proof}
From Lemma~\ref{lemma:rho_trho} and Lemma~\ref{lemma:Fatou}, it follows that $\rho(X)=\log (\trho(\exp(X)))$ is a
convex monetary risk measure with the Fatou property, so as is well-known (see e.g.,\ \cite{FS11}) it has the dual representation
\begin{equation} \label{eq:dual_rep_rho}
\rho(X)=  \sup_{Q\in \mathbf{P}} \{ \E_Q[X]-\alpha(Q) \},
\end{equation}
where $\alpha \colon \mathbf{P} \to [0,+\infty]$.
Since $\trho(X)=\exp (\rho(\log(X)))$, it follows that
\begin{align*}
\tilde{\rho}(X) &= \exp \left ( \sup_{Q\in \mathbf{P}} \{  \E_Q[\log X]-\alpha(Q)  \} \right)
=  \sup_{Q\in \mathbf{P}} \left \{  \exp \left (   \E_Q[\log X]-\alpha(Q)   \right) \right \} \\
  &=\sup_{Q\in \mathbf{P}} \left \{  \tilde{\alpha}(Q) \exp (\E_Q[\log X] )   \right \},
\end{align*}
where $\tilde{\alpha} \colon \mathbf{P} \to [0,1]$ is given by  $\tilde{\alpha}(Q)=\exp(-\alpha(Q))$.\ 
From $\trho(1)=1$, it follows that $\sup_{Q \in \mathbf{P}} \tilde{\alpha}(Q)=1$.\
By Theorem~4.22 and Exercise~4.2.2 in \cite{FS11}, the supremum in \eqref{eq:dual_rep_rho} is attained if $\rho$ has the Lebesgue property.\
In view of Lemma~\ref{lemma:rho_trho} and Lemma~\ref{lemma:Fatou}, it then follows that the supremum in \eqref{eq:dual_rep} is attained provided that $\trho$ satisfies the lower-bounded Lebesgue property. 
\end{proof}
\smallskip
	
\begin{remark} \label{rem:dual_Orl}
The logarithmic certainty equivalent $\exp \left(\E[\log X] \right)$ arising in the dual representation~\eqref{eq:dual_rep} can already be viewed as an example of an Orlicz premium corresponding to the unbounded Orlicz 
function $\Phi(x)=1+\log(x)$, since
\begin{align*}
\exp \left ( \E_Q \left [\log X \right ] \right )
= \inf \left \{ k >0 \; \Big | \; \E_Q \left [ 1+ \log \left ( \frac{X}{k} \right)  \right ] \leq 1 \right \}
= H_{1+\log x, Q} (X).
\end{align*}
We refer to Definition~\ref{def:orl} in Section~\ref{sec:Orl} for details on terminology and notation.
As a consequence, every geometrically convex return risk measure satisfying the lower-bounded Fatou property can be seen as the supremum of a suitable family of multiplicatively weighted Orlicz premia.
\end{remark}

We now derive a Kusuoka representation for law-invariant geometrically convex return risk measures that parallels the usual one for law-invariant convex risk measures.\ 
Recall first the definition of Average Value-at-Risk.
	
\begin{definition}
Let $X \in L^1(\Omega, \F, P)$. 
For $\lambda \in [0,1)$, the Average Value-at-Risk of $X$ at level $\lambda$ is given by
\[
AV@R_\lambda (X) = \frac{1}{1-\lambda} \int_\lambda^1 q_\alpha(X) \, \diff\alpha,
\]
where
\[
q_\alpha(X) = \inf \{ x \in \R \mid F(x) \geq \alpha \},
\]
and  for $\lambda=1$ we set by definition $AV@R_1 (X) = \esssup(X)$. 
\end{definition}

Denote by $\M_1([0,1])$ the set of probability measures with support in $[0,1]$.

\begin{theorem}
Let $\trho \colon L^\infty_{++} \to (0,+\infty)$ be a law-invariant geometrically convex return risk measure. 
Then there exists
$\tilde{\beta} \colon \mathcal{M}_{1}([0,1])   \to [0,1]$
such that
\begin{equation} \label{eq:kusu_rep}
\trho(X) = \sup_{\mu\in\mathcal{M}_{1}([0,1])} \left \{ \tilde{\beta}(\mu)\exp\left(\int_{[0,1]}AV@R_{\lambda}(\log X)\mu(\diff\lambda)\right) \right \}.
\end{equation}
If $\trho$ has the lower-bounded Lebesgue property, then $\mu(1)>0 \Rightarrow \tilde{\beta}(\mu)=0$. 
\end{theorem}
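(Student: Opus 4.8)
The plan is to transport the classical Kusuoka representation for law-invariant convex monetary risk measures through the exponential/logarithmic correspondence of Lemma~\ref{lemma:rho_trho}, exactly as the dual representation in Remark~\ref{rem:dual_Orl} was obtained from the standard dual representation. First I would set $\rho(Y):=\log(\trho(\exp(Y)))$; by Lemma~\ref{lemma:rho_trho}(b),(d),(e) this is a law-invariant convex monetary risk measure on $L^\infty$, and by the automatic Fatou property for law-invariant convex risk measures (cited in the paragraph preceding Proposition~\ref{prop:mix-cont}) it admits a Kusuoka representation. With the sign conventions of the paper (losses, so $\rho$ monotone increasing and $AV@R$ as defined above), this reads
\[
\rho(Y)=\sup_{\mu\in\mathcal{M}_{1}([0,1])}\Bigl\{\int_{[0,1]}AV@R_\lambda(Y)\,\mu(\diff\lambda)-\beta(\mu)\Bigr\},
\]
for a penalty function $\beta\colon\mathcal{M}_1([0,1])\to[0,+\infty]$; one obtains this either by invoking a reference such as \cite{FS11} directly or by specializing the general law-invariant dual representation and using that extreme points of the set of law-invariant densities are generated by $AV@R$'s. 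Then I would substitute $Y=\log X$ for $X\in L^\infty_{++}$ and exponentiate, using monotonicity and continuity of $\exp$ to pass it through the supremum and the integral:
\[
\trho(X)=\exp\bigl(\rho(\log X)\bigr)=\sup_{\mu}\Bigl\{\exp\bigl(-\beta(\mu)\bigr)\exp\Bigl(\int_{[0,1]}AV@R_\lambda(\log X)\,\mu(\diff\lambda)\Bigr)\Bigr\},
\]
so setting $\tilde\beta(\mu):=\exp(-\beta(\mu))\in[0,1]$ yields \eqref{eq:kusu_rep}. (As in the proof of the first dual representation theorem, $\trho(1)=1$ forces $\sup_\mu\tilde\beta(\mu)=1$, though this is not explicitly claimed in the statement.)

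For the last assertion, suppose $\trho$ has the lower-bounded Lebesgue property. By Lemma~\ref{lemma:Fatou}(ii), $\rho$ then has the Lebesgue property, hence is continuous from above and below along bounded sequences. The key point is that a Kusuoka-representable risk measure whose representing measures $\mu$ charge the point $\{1\}$ with positive mass $\mu(\{1\})=p>0$ fails the Lebesgue property, because the $\esssup$ functional $AV@R_1$ is not continuous along bounded sequences converging in probability: one exhibits $Y_n\overset{P}{\to}Y$ with $\|Y_n\|_\infty\le k$ but $\esssup(Y_n)\not\to\esssup(Y)$ (e.g.\ $Y_n=k\mathbf 1_{A_n}$ with $P(A_n)\to0$, $Y=0$), and since $\tilde\beta$ and hence $\beta$ are bounded on the relevant part of the domain, the term indexed by such a $\mu$ would dominate and destroy continuity. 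Thus any $\mu$ with $\mu(\{1\})>0$ must carry zero weight, i.e.\ $\tilde\beta(\mu)=0$, equivalently $\beta(\mu)=+\infty$. This is essentially Theorem~4.22/Exercise~4.2.2 of \cite{FS11} rephrased, or can be argued directly as above.

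The main obstacle I anticipate is purely bookkeeping rather than conceptual: getting the sign conventions and the precise form of the classical Kusuoka representation consistent with the paper's loss-oriented setup and its particular normalization of $AV@R_\lambda$ (including the $\lambda=1$ case $AV@R_1=\esssup$). One must be careful that the supremum-of-$AV@R$ form — rather than a supremum involving $1-\lambda$ weightings or spectral densities — is the correct one here, and that the penalty $\beta$ is finite on at least one $\mu$ so that $\sup_\mu\tilde\beta(\mu)=1$; both follow from the standard theory once the conventions are pinned down, so the substance of the proof is the transport argument, which is immediate given Lemmas~\ref{lemma:rho_trho} and~\ref{lemma:Fatou}.
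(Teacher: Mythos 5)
Your proposal follows essentially the same route as the paper for the representation itself: pass to the law-invariant convex monetary risk measure $\rho(Y)=\log\trho(\exp Y)$ via Lemma~\ref{lemma:rho_trho}, invoke its (automatic) Kusuoka representation, substitute $Y=\log X$, exponentiate, and set $\tilde\beta(\mu)=\exp(-\beta(\mu))$. The only place you deviate is the final assertion: the paper disposes of it by Lemma~\ref{lemma:Fatou} plus a citation of Theorem~35 in \cite{D12}, whereas you sketch a direct argument from the discontinuity of $AV@R_1=\esssup$; note that the reference you guess (Theorem~4.22/Exercise~4.2.2 of \cite{FS11}) is what the paper uses elsewhere for attainment of the supremum, not for this step. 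Your sketch is correct in substance but imprecise where you say the offending term would ``dominate'': what is actually needed is that, assuming $\beta(\mu_0)<+\infty$ for some $\mu_0$ with $p:=\mu_0(\{1\})>0$, one chooses $Y_n=c\,1_{A_n}$ with $P(A_n)\to 0$ and $c>\beta(\mu_0)/p$; then the single-term lower bound $\rho(Y_n)\geq \int_{[0,1]} AV@R_\lambda(Y_n)\,\mu_0(\diff\lambda)-\beta(\mu_0)\geq cp-\beta(\mu_0)>0$ holds for all $n$ (all $AV@R_\lambda(Y_n)\geq 0$ and $AV@R_1(Y_n)=c$), contradicting $\rho(Y_n)\to\rho(0)=0$ from the Lebesgue property. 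With the constant chosen this way, your direct argument is a valid self-contained substitute for the paper's citation, and it applies to any penalty appearing in a valid representation, which is what the statement requires.
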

	
\begin{proof}
As in the proof of the previous theorem, if $\trho$ is also law invariant, then from Lemma~\ref{lemma:rho_trho}
the associated convex risk measure $\rho$ given by \eqref{eq:trho-2} is also law invariant, and hence has the Kusuoka representation (see e.g.,\ \cite{FS11}, \cite{D12})
\[
\rho(X)=\sup_{\mu\in\mathcal{M}_{1}([0,1])}\left(\int_{[0,1]}AV@R_{\lambda}(X)\mu(\diff\lambda)-\beta(\mu)\right),
\]
for a suitable $\beta \colon \mathcal{M}_{1}([0,1])   \to [0,+\infty]$, from which it follows that
\begin{align*}
\tilde{\rho}(X)&=\exp\left(\sup_{\mu\in\mathcal{M}_{1}([0,1])}\left(\int_{[0,1]}AV@R_{\lambda}(\log X)\mu(\diff\lambda)-\beta (\mu)\right)\right)\nonumber\\
&=\sup_{\mu\in\mathcal{M}_{1}([0,1])}\tilde{\beta}(\mu)\exp\left(\int_{[0,1]}AV@R_{\lambda}(\log X)\mu(\diff\lambda)\right),
\end{align*}
with $\tilde{\beta}(\mu)=\exp(-\beta (\mu))$. 
From Lemma~\ref{lemma:Fatou}, if $\trho$ has the lower-bounded Lebesgue property, then $\rho$ has the Lebesgue property, and Theorem~35 in \cite{D12} implies that $\mu(1)>0 \Rightarrow \beta(\mu)=+\infty$, from which the thesis follows. 
\end{proof}

\begin{remark}
In the same spirit of Remark~\ref{rem:dual_Orl}, the Kusuoka representation of a law-invariant geometrically convex return risk measure given in equation~\eqref{eq:kusu_rep} can be written in terms of Orlicz premia as follows:
\begin{equation*}
\trho(X)=\sup_{Q\in\mathcal{M}_{1}(P)}\{\tilde{\alpha}(Q)\sup_{\tilde{Q}\sim Q}H_{1+\log x,\tilde{Q}}(X)\},
\end{equation*}
where $\tilde{Q}\sim Q$ indicates that $\frac{\mathrm{d}Q}{\mathrm{d}P}$ and $\frac{\mathrm{d}\tilde{Q}}{\mathrm{d}P}$ have the same distribution.
\end{remark}

As for convex risk measures, the validity of the lower-bounded Lebesgue property is linked to a suitable weak compactness property of the level sets of the weighting function $\tilde{\alpha}$ and in the law-invariant case to the so-called $\psi$-weak continuity. 
Before stating the main result we recall two basic definitions.

\begin{definition}
We say that a monetary convex risk measure $\rho$ with dual representation \eqref{eq:dual_rep_rho} has the WC property if for each $m \in \R$ the lower level sets $\{Q\in \mathbf{P}\,|\, \alpha (Q)\leq m\}$  are compact in the  $\sigma(L_1,L_{\infty})$ topology.
Similarly, a geometrically convex return risk measure $\trho$ with dual representation \eqref{eq:dual_rep} has the \text{$\widetilde{WC}$} property if for each $m >0$ the upper level sets $\{Q\in \mathbf{P}\,|\,\tilde{\alpha}(Q)\geq m\}$ are compact in the  $\sigma(L_1,L_{\infty})$ topology.
\end{definition}

\begin{definition}
Let $\psi \colon \mathbb{R}\to [1,+\infty)$ be continuous.\
The $\psi$-weak topology on $\M_{1,c}(\R)$ is the weakest topology that makes all mappings $F \mapsto \int f \, \diff F$ continuous, for each continuous $f$ satisfying $\vert f \vert \leq c \psi$, with $c >0$.
It holds that
\[
F_n \overset{\psi}{\to} F \text{ if } F_n \overset{\text{weakly}}{\to} F \text{ and } \int \psi\, \diff F_n \to \int \psi\, \diff F.
\]
A functional $\rho \colon \M_{1,c}\to \R$ is $\psi$-weakly continuous if
\[F_n \overset{\psi}{\to} F \implies \rho(F_n)\to \rho(F).\]
\end{definition}

\begin{proposition}
Let $\trho: \mathcal{M}_{1,c}\to (0,+\infty)$ be a law-invariant geometrically convex return risk measure. 
The following are equivalent:
\begin{enumerate}[label={\alph*}), left = 0pt]
\item $\trho$ has the \text{$\widetilde{WC}$} property
\item $\trho$ has the lower-bounded Lebesgue property
\item  $\trho$ is $\tilde{\psi}$-weakly continuous for some $\tilde{\psi} \colon (0,+\infty) \to \R$
\item For each $x,y>0$ with $x<y$ and $\lambda \in [0,1]$, the function
\[
\lambda \mapsto \tilde{\rho}(\lambda \delta_x+(1-\lambda)\delta_y)
\]
 is continuous.
\end{enumerate}
\end{proposition}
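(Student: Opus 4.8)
The plan is to transfer everything to the associated convex monetary risk measure $\rho(Y)=\log(\trho(\exp(Y)))$ via Lemma~\ref{lemma:rho_trho} and Lemma~\ref{lemma:Fatou}, and then invoke the known equivalences for law-invariant convex risk measures (as in \cite{DBBZ14} and \cite{D12}), being careful about the sign conventions and about the change of variables $x\mapsto e^t$ that links $\trho$ on $(0,+\infty)$ with $\rho$ on $\R$. Concretely: $\trho$ is law-invariant and geometrically convex iff $\rho$ is law-invariant and convex; the $\widetilde{WC}$ property for $\trho$ with weighting $\tilde\alpha(Q)=\exp(-\alpha(Q))$ is literally the $WC$ property for $\rho$ (upper level sets $\{\tilde\alpha\geq m\}$ are exactly the lower level sets $\{\alpha\leq -\log m\}$, and $m$ ranging over $(0,+\infty)$ corresponds to $-\log m$ ranging over $\R$); the lower-bounded Lebesgue property for $\trho$ is equivalent to the Lebesgue property for $\rho$ by Lemma~\ref{lemma:Fatou}(ii); $\tilde\psi$-weak continuity of $\trho$ corresponds to $\psi$-weak continuity of $\rho$ under the homeomorphism $F\mapsto F(e^\cdot)$ of $\M_{1,c}(0,+\infty)$ onto $\M_{1,c}(\R)$ induced by $t\mapsto e^t$, with the obvious relabeling $\tilde\psi(x)=\psi(\log x)$ (or a suitable modification so that $\tilde\psi\geq 1$); and statement d) for $\trho$ at points $\delta_x,\delta_y$ corresponds, via \eqref{eq:trho_dist}, to the analogous statement for $\rho$ at $\delta_{\log x},\delta_{\log y}$, since $\trho(\lambda\delta_x+(1-\lambda)\delta_y)=\exp(\rho(\lambda\delta_{\log x}+(1-\lambda)\delta_{\log y}))$ and $\exp,\log$ are homeomorphisms.

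Having set up these dictionaries, I would complete the cycle of equivalences by citing the corresponding result for law-invariant convex monetary risk measures: a) $\Leftrightarrow$ b) is the standard statement that the $WC$ property characterizes the Lebesgue property (via attainment of the supremum and weak compactness of level sets, cf.\ Theorem~4.22 and Exercise~4.2.2 in \cite{FS11} together with the law-invariant refinement in \cite{DBBZ14}); b) $\Leftrightarrow$ c) is the characterization of the Lebesgue property by $\psi$-weak continuity for an appropriate gauge function $\psi$, as established in \cite{DBBZ14} (Proposition/Theorem relating the Lebesgue property to $\psi$-weak upper semicontinuity and then continuity); and b) $\Rightarrow$ d) is immediate since mixture continuity is a very special case of continuity along convergent sequences, while d) $\Rightarrow$ b) is the nontrivial implication already available for convex law-invariant risk measures (the point being that two-point mixture continuity forces, via law invariance and convexity, the full Lebesgue property—this is the heart of the argument in \cite{DBBZ14}).

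I expect the main obstacle to be the careful handling of the gauge functions in part c). The homeomorphism $t\mapsto e^t$ does not preserve the normalization $\psi\geq 1$ or the growth conditions in a completely transparent way: a gauge $\psi$ on $\R$ with linear growth (the natural one for $\rho$ on $L^\infty$) pulls back to a gauge on $(0,+\infty)$ that behaves like $\log x$ near infinity and like $|\log x|$ near $0$, so one must check that $\tilde\psi$-weak convergence of $F_n$ to $F$ on $\M_{1,c}(0,+\infty)$ really does correspond to $\psi$-weak convergence of the pushforwards on $\M_{1,c}(\R)$, i.e.\ that weak convergence plus $\int\tilde\psi\,\diff F_n\to\int\tilde\psi\,\diff F$ transports correctly through the change of variables. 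This is a routine but slightly delicate verification using the continuous mapping theorem and the fact that $\log$ and $\exp$ send the relevant test functions to one another; I would isolate it as a short lemma if the bookkeeping gets heavy. The remaining implications are then purely a matter of quoting the convex case and unwinding the correspondence, so the write-up should be compact.
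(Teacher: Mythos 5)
Your proposal is correct and follows essentially the same route as the paper: transfer everything to the associated convex monetary risk measure $\rho$ via Lemmas~\ref{lemma:rho_trho} and~\ref{lemma:Fatou}, identify the upper level sets of $\tilde\alpha$ with the lower level sets of $\alpha$, and then quote the known equivalences (WC $\Leftrightarrow$ Lebesgue $\Leftrightarrow$ $\psi$-weak continuity $\Leftrightarrow$ mixture continuity at $\lambda\to 1^-$) from \cite{D12} and Proposition~2.7 of \cite{DBBZ14}, together with Proposition~\ref{prop:mix-cont} for the two-point mixture statement. The only cosmetic difference is that the gauge-function bookkeeping $\tilde\psi(t)=\psi(\log t)$, which you propose to verify by hand, is handled in the paper by citing Lemma~4 of \cite{BLR18}.
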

	
\begin{proof}
If $\rho$ and $\trho$ are related by the correspondence given in \eqref{eq:trho-1} and \eqref{eq:trho-2}, then the WC property of $\rho$ is equivalent to the \text{$\widetilde{\mathrm{WC}}$} property of $\trho$, since
\begin{align*}
\{Q\in \mathbf{P}|\tilde{\alpha}(Q)\geq m\}=\{Q\in \mathbf{P}|\exp(-\alpha(Q))\geq m\}=
\{Q\in \mathbf{P}|\alpha(Q)\leq -\log(m)\}.
\end{align*}
So (a) holds if and only if the associated convex risk measure $\rho$ has the WC property.\ 
As is well-known (see e.g.,\ \cite{D12}), for convex risk measures the WC property is equivalent to the Lebesgue property, so from Lemma~\ref{lemma:Fatou} it follows that (a) is equivalent to (b).\ 
From Proposition~2.7 in \cite{DBBZ14} adapted to our sign conventions, it follows that the WC property of $\rho$ is equivalent to $\psi$-weak continuity with respect to some gauge function $\psi$.\ 
From Lemma~4 of \cite{BLR18}, it holds that $\rho$ is $\psi$-weakly continuous if and only if $\trho$ is $\tilde{\psi}$-weakly continuous with $\tilde{\psi}(t)=\psi(\log(t))$, which shows the equivalence between (a) and (c).\ 
Further, Proposition~2.7 in \cite{DBBZ14} shows that the WC property of $\rho$ is equivalent to its mixture continuity for $\lambda \to 1^-$, which combined with Proposition~\ref{prop:mix-cont} shows that (a) is equivalent to (d).
\end{proof}
\medskip

As we anticipated after Definition~\ref{def:georetrm}, geometrically convex return risk measures are a generalization of convex risk measures, as the following shows.

\begin{lemma}
\label{lemma:convex-GGconvex}
Let $\trho \colon L^\infty_{+} \to [0, +\infty)$ be a convex return risk measure.\ 
Then $\trho$ is geometrically convex.
\end{lemma}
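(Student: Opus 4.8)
The plan is to exploit the AM--GM inequality together with monotonicity and positive homogeneity. Fix $X, Y \in L^\infty_{+}$ and $\alpha \in (0,1)$. The starting point is the pointwise inequality $X^\alpha Y^{1-\alpha} \leq \alpha X + (1-\alpha) Y$, which holds $P$-a.s.\ by the weighted arithmetic--geometric mean inequality applied at each $\omega$. Applying monotonicity of $\trho$ and then convexity gives
\[
\trho(X^\alpha Y^{1-\alpha}) \leq \trho(\alpha X + (1-\alpha) Y) \leq \alpha \trho(X) + (1-\alpha)\trho(Y).
\]
So $\trho$ satisfies an ``arithmetic-on-the-right'' version of the inequality; what remains is to upgrade the right-hand side from $\alpha \trho(X) + (1-\alpha)\trho(Y)$ to the geometric mean $\trho^\alpha(X)\trho^{1-\alpha}(Y)$, which is smaller, so this is not immediate from the above line alone.

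The standard device to close this gap is a scaling (homogenization) trick. Replace $X$ by $X/\trho(X)$ and $Y$ by $Y/\trho(Y)$, assuming first that $\trho(X) > 0$ and $\trho(Y) > 0$. By positive homogeneity, $\trho(X/\trho(X)) = 1$ and similarly for $Y$. Then the inequality just derived, applied to these normalized variables, reads
\[
\trho\!\left( \left(\frac{X}{\trho(X)}\right)^{\alpha} \left(\frac{Y}{\trho(Y)}\right)^{1-\alpha} \right) \leq \alpha \cdot 1 + (1-\alpha)\cdot 1 = 1.
\]
The left-hand side equals $\trho\big( X^\alpha Y^{1-\alpha} \big) \big/ \big( \trho^\alpha(X)\trho^{1-\alpha}(Y)\big)$ by another application of positive homogeneity (pulling out the scalar factor $\trho(X)^{-\alpha}\trho(Y)^{-(1-\alpha)}$). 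Rearranging yields exactly $\trho(X^\alpha Y^{1-\alpha}) \leq \trho^\alpha(X)\trho^{1-\alpha}(Y)$, as desired.

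It remains to handle the boundary case where $\trho(X) = 0$ or $\trho(Y) = 0$; say $\trho(X) = 0$. Here the right-hand side of the target inequality is $0$ (with the convention $0^\alpha = 0$), so I must show $\trho(X^\alpha Y^{1-\alpha}) = 0$. Since $Y \in L^\infty_{+}$, we have $Y \leq \norm{Y}_\infty$ a.s., hence $X^\alpha Y^{1-\alpha} \leq \norm{Y}_\infty^{1-\alpha} X^\alpha$ by monotonicity of $t \mapsto t^{1-\alpha}$, and in turn $X^\alpha \leq \norm{X}_\infty^{1-\alpha} \cdot (\text{something})$ --- more cleanly, note $X^\alpha Y^{1-\alpha} \le \alpha X + (1-\alpha)Y \le \alpha X + (1-\alpha)\norm{Y}_\infty$, so by monotonicity and positive homogeneity $\trho(X^\alpha Y^{1-\alpha}) \le \trho(\alpha X + (1-\alpha)\norm{Y}_\infty)$; scaling the argument $X$ by an arbitrarily large constant $c>0$ (which sends $\alpha X$ to $\alpha c X$ with $\trho(\alpha c X) = c\,\alpha\,\trho(X)=0$) and using convexity shows the bound can be driven to $(1-\alpha)\norm{Y}_\infty$ times a constant, which is not quite zero; instead one argues directly via convexity that $\trho(\alpha n X + (1-\alpha)Y/n \cdot n) $... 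The cleanest route: for any $n \in \mathbb{N}$, $X^\alpha Y^{1-\alpha} = (nX)^\alpha (n^{-\alpha/(1-\alpha)}Y)^{1-\alpha}$, and applying the main (already-proven, non-degenerate) inequality scaling $X$ up by $n$ shows $\trho(X^\alpha Y^{1-\alpha}) \le (n\trho(X))^\alpha \cdot (\cdots) = 0$ directly since $\trho(X)=0$; thus the degenerate case in fact follows from the non-degenerate scaling identity itself. I expect the only genuine subtlety to be bookkeeping in this degenerate case and the conventions for $0$ raised to a power; the core argument — AM--GM pointwise, then homogenization — is short and robust.
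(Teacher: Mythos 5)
Your main argument is exactly the paper's proof: pointwise AM--GM applied to the normalized variables $X/\trho(X)$ and $Y/\trho(Y)$, then monotonicity, convexity and positive homogeneity; that part is correct. The only problem is the degenerate case $\trho(X)=0$ (which the paper dismisses as trivial without details). Your ``cleanest route'' there is circular: you propose to apply the already-proven non-degenerate inequality to the pair $(nX,\,n^{-\alpha/(1-\alpha)}Y)$, but positive homogeneity gives $\trho(nX)=n\trho(X)=0$, so this pair is just as degenerate as the original one and the non-degenerate case cannot be invoked for it. The earlier attempt in that paragraph (bounding by $\alpha X+(1-\alpha)\norm{Y}_\infty$) also stalls, as you yourself noticed.

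The fix is one line and uses the same scaling idea, but fed directly into AM--GM rather than into the homogenized inequality: for every $c>0$,
\[
X^{\alpha}Y^{1-\alpha}=(cX)^{\alpha}\bigl(c^{-\alpha/(1-\alpha)}Y\bigr)^{1-\alpha}\le \alpha c X+(1-\alpha)c^{-\alpha/(1-\alpha)}Y,
\]
so monotonicity, convexity and positive homogeneity give
$\trho(X^{\alpha}Y^{1-\alpha})\le \alpha c\,\trho(X)+(1-\alpha)c^{-\alpha/(1-\alpha)}\trho(Y)=(1-\alpha)c^{-\alpha/(1-\alpha)}\trho(Y)\to 0$ as $c\to+\infty$, which is exactly what the degenerate case requires. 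With this substitution your proof is complete and coincides with the paper's argument in the non-degenerate case.
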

\begin{proof}
If $\trho(X)=0$ or $\trho(Y)=0$ the thesis is trivial. 
Let $X,Y\in L_{+}^{\infty}$ and $\lambda \in (0,1).$ 
By using the AM-GM inequality and the monotonicity and convexity of $\trho$, it follows that
\begin{align*}
\trho\of{\of{\frac{X}{\trho(X)}}^{\lambda}\of{\frac{Y}{\trho(Y)}}^{1-\lambda}}&\leq \trho\of{\lambda \frac{X}{\trho(X)}+(1-\lambda)\frac{Y}{\trho(Y)}}\\&\leq \lambda \trho\of{\frac{X}{\trho(X)}}+(1-\lambda)\trho \of{\frac{Y}{\trho(Y)}}=1.
\end{align*}
Next, from positive homogeneity, it follows that
\[\trho\of{X^{\lambda}Y^{1-\lambda}}\leq \trho(X)^{\lambda}\trho(Y)^{1-\lambda},\]
which completes the proof.
\end{proof}
\medskip

It is then interesting to compare the dual representation of geometrically convex return risk measures given in equation~\eqref{eq:dual_rep} with the following dual representation for convex return risk measures.
\begin{proposition}\label{prop:crrm}
Let $\trho \colon L^\infty_{+} \to [0,+\infty)$ be a convex return risk measure satisfying the Fatou property.\
Then
there exists $\beta \colon \mathbf{P} \to [0,1]$ with $\sup_{Q\in \mathbf{P}}\beta(Q)=1$  such that
\begin{equation}\label{eq:dual_rep_conv}
\trho(X)=\sup_{Q\in \mathbf{P}} \{ \beta(Q) \E_Q[X] \}.
\end{equation}
Furthermore, if $\trho$ satisfies the Lebesgue property, the supremum in~\eqref{eq:dual_rep_conv} is attained.
\end{proposition}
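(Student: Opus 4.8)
The plan is to exploit the same ``transfer via the logarithm'' strategy already used for the dual representation of geometrically convex return risk measures, but this time against the \emph{additive} theory of positively homogeneous convex monetary risk measures rather than the theory of general convex risk measures. First I would verify that a convex return risk measure $\trho \colon L^\infty_+ \to [0,+\infty)$ is positively homogeneous and convex, hence \emph{sublinear} on the cone $L^\infty_+$: indeed $\trho(X+Y) = 2\,\trho\bigl(\tfrac12 X + \tfrac12 Y\bigr) \le \trho(X)+\trho(Y)$. Together with monotonicity and $\trho(1)=1$, this says $\trho$ behaves like the restriction to $L^\infty_+$ of a coherent-type functional. The key step is then to extend $\trho$ to all of $L^\infty$ by a monotone sublinear functional. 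Given $X \in L^\infty$, pick $c>0$ with $X+c \ge 0$ and define $\rho(X) := \trho(X+c)-c$; positive homogeneity of $\trho$ applied to the constant shift, together with $\trho(1)=1$ and the identity $\trho(X + c) = \trho(X + c' ) + (c - c')$ that follows from $\trho$ being, on constants, the identity (since $\trho(\lambda) = \lambda\,\trho(1)=\lambda$), shows this is well defined and independent of $c$. One checks directly that $\rho$ is a translation-invariant, monotone, convex, positively homogeneous functional with $\rho(0)=0$, i.e.\ a coherent risk measure on $L^\infty$, and that $\rho$ inherits the Fatou property from $\trho$ on the relevant truncated sequences (if $X_n \overset{P}{\to} X$ with $\|X_n\|_\infty \le k$, then $X_n + k + 1 \ge 1 > 0$ is bounded below, so the Fatou hypothesis for $\trho$ applies).

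Having produced a coherent risk measure $\rho$ with the Fatou property agreeing with $\trho$ on $L^\infty_+$, I would invoke the classical robust representation (e.g.\ Theorem~4.33 in \cite{FS11}, or the coherent specialization of \eqref{eq:dual_rep_rho} with $\alpha$ taking only the values $0$ and $+\infty$): there is a nonempty closed convex set $\mathcal{Q} \subseteq \mathbf{P}$ with
\[
\rho(X) = \sup_{Q \in \mathcal{Q}} \E_Q[X], \qquad X \in L^\infty .
\]
Restricting to $X \in L^\infty_+$ recovers $\trho(X) = \sup_{Q \in \mathcal{Q}} \E_Q[X]$. To put this in the stated form, set $\beta(Q) := 1$ for $Q \in \mathcal{Q}$ and $\beta(Q):=0$ otherwise (or, if one prefers a representation over all of $\mathbf{P}$ with $\beta$ not merely an indicator, one can equally keep the $\{0,1\}$-valued $\beta$; the statement only requires $\beta \colon \mathbf{P}\to[0,1]$). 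Then $\trho(X) = \sup_{Q\in\mathbf{P}}\{\beta(Q)\E_Q[X]\}$ for $X\in L^\infty_+$, and $\sup_{Q\in\mathbf{P}}\beta(Q)=1$ follows from $\mathcal{Q}\ne\emptyset$; alternatively this normalization is forced by $\trho(1)=1$ since $\E_Q[1]=1$ for every $Q$. For the attainment claim, if $\trho$ has the Lebesgue property then so does $\rho$ (same truncation argument as for Fatou), and by Theorem~4.22 together with Exercise~4.2.2 in \cite{FS11} the supremum in the representation of $\rho$ is attained for every $X$; restricting to $X \in L^\infty_+$ gives attainment in \eqref{eq:dual_rep_conv}.

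The main obstacle, and the only place requiring genuine care, is the extension step: showing that $\rho(X):=\trho(X+c)-c$ is a bona fide coherent risk measure on all of $L^\infty$ and that it genuinely extends $\trho$ (not merely dominates or is dominated by it) on $L^\infty_+$. Well-definedness hinges on the constant-shift behaviour $\trho(X+c) - \trho(X+c') = c - c'$ for $X + c', X+ c \ge 0$; this is not translation invariance of $\trho$ (which return risk measures do \emph{not} have) but follows from combining convexity and positive homogeneity in the form $\trho(X+c) = \trho\bigl(\tfrac{c}{c+c'}(X+c')\cdot\tfrac{c+c'}{c} \cdots\bigr)$ — more cleanly, from the identity $\trho\bigl(\alpha(X+c') + (1-\alpha)\gamma\bigr)$ with a constant $\gamma$, using that $\trho$ restricted to nonnegative constants is the identity and that monotonicity plus homogeneity pin down the slope-one behaviour along constant directions on the cone. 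Once this elementary but slightly delicate bookkeeping is done, everything else is a direct transcription of the coherent-risk-measure duality and the Lebesgue-property attainment result already cited in the proof of the geometrically convex case.
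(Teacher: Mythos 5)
There is a genuine gap, and it is fatal to the route you chose: the extension step does not work. You claim that the ``constant-shift behaviour'' $\trho(X+c)-\trho(X+c')=c-c'$ (for $X+c,\,X+c'\ge 0$) follows from convexity, positive homogeneity, monotonicity and $\trho(\lambda)=\lambda$ on constants. It does not. Take $\trho(X)=\Vert X\Vert_p=(\E[X^p])^{1/p}$ with $p>1$, a convex return risk measure with the Lebesgue (hence Fatou) property. For $X=2\cdot 1_A$ with $P(A)=1/2$ and $p=2$ one has $\trho(X+1)=\sqrt{5}$ while $\trho(X)+1=\sqrt{2}+1$, so the shift identity fails and your $\rho(X):=\trho(X+c)-c$ is not well defined (and for any fixed choice of $c$ it is not translation invariant, hence not coherent). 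More structurally: if your argument went through, every convex return risk measure with the Fatou property would be the restriction to $L^\infty_+$ of a coherent risk measure and would therefore satisfy $\trho(X+c)=\trho(X)+c$ for $c\ge 0$; the paper's Theorem~\ref{th:expectiles} shows that among (convex) Orlicz premia only expectiles have this property. Relatedly, the $\{0,1\}$-valued $\beta$ your construction produces is strictly weaker than what the proposition asserts: the point of \eqref{eq:dual_rep_conv} is precisely that one needs genuinely fractional weights $\beta(Q)\in[0,1]$ (e.g.\ for $\Vert\cdot\Vert_p$, writing $Z=\E[Z]\,\frac{\diff Q}{\diff P}$ in $\sup\{\E[XZ]:Z\ge 0,\ \Vert Z\Vert_q\le 1\}$ gives weights $\E[Z]<1$).

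The correct argument stays on the cone $L^\infty_+$ and uses duality for monotone, sublinear, normalized functionals there rather than coherent-risk-measure duality on $L^\infty$: by Proposition~4.3 in \cite{LS13} (this is what the paper invokes), $\trho(X)=\sup_{Z\in H}\E[XZ]$ with $H=\{Z\in L^1_+:\E[ZY]\le\trho(Y)\ \text{for all}\ Y\in L^\infty_+\}$; normalizing $Z$ by $\E[Z]\le\trho(1)=1$ yields the form \eqref{eq:dual_rep_conv} with $\beta(Q)\in[0,1]$ and $\sup_Q\beta(Q)=1$. For attainment under the Lebesgue property one shows $H$ is norm bounded (test with $Y=1$), weakly closed, and uniformly integrable (test with $Y=1_{A_n}$ for $A_n\downarrow\emptyset$ and use $\trho(1_{A_n})\to 0$), hence weakly compact by Dunford--Pettis, and concludes by Weierstrass. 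Your final paragraph's appeal to Theorem~4.22/Exercise~4.2.2 in \cite{FS11} cannot be substituted for this, because it presupposes the (failed) coherent extension.
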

\begin{proof}
The first part of the statement is easily derived from Proposition~4.3 in \cite{LS13}.
For the second part, it follows from the proof of Proposition~4.3 in \cite{LS13} that
\begin{equation*}
\trho(X)=\sup_{Z\in H}\E[XZ],    
\end{equation*}
where $H=\cb{Z\in L_+^1: \E[ZY]\leq \trho(Y) \text{ for any } Y\in L_+^{\infty}}$. 
If we take $Y=1$, then $\E[Z]\leq 1$ for any $Z\in H$, which gives the norm-boundedness of the set $H$. 
Furthermore, $H$ is weakly closed, since it is an intersection of weakly closed sets. 
Let us take a decreasing sequence $\of{A_n}_n \in \mathcal{F}$ of which the intersection is the empty set. 
For any $Z\in H$, we have $\E[Z1_{A_n}]\leq \trho(1_{A_n})$ for every $n$. 
Therefore, by using the Lebesgue property of $\trho$, we have
\begin{equation*}
\lim_{n \to +\infty}\sup_{Z\in H}\E[Z1_{A_n}]\leq \lim_{n \to +\infty} \trho \of{1_{A_n}}=0,
\end{equation*}
which gives that $H$ is uniformly integrable. 
Because $H$ is bounded, weakly closed and uniformly integrable, it is weakly compact as a consequence of the Dunford-Pettis theorem (see, e.g., Theorem~A.67 in \cite{FS11}). 
Therefore, the supremum is attained as a result of the Weierstrass Theorem (see, e.g., Corollary~2.35 in \cite{AB06}). 
Suppose the supremum is attained for $\Tilde{Z}\in H$. 
Then, the supremum is attained for $\Tilde{Q}$ such that $\frac{\diff \Tilde{Q}}{\diff P}=\frac{\Tilde{Z}}{\E[\Tilde{Z}]}$.
\end{proof}
\medskip

Since from Lemma~\ref{lemma:convex-GGconvex} a convex return risk measure is geometrically convex, it follows that the dual representation~\eqref{eq:dual_rep} is implied by the dual representation~\eqref{eq:dual_rep_conv}.\ 
This can be seen starting from the well-known dual representation of the exponential certainty equivalent (see e.g.,\ \cite{DK13}), given by
\begin{equation}\label{eq:dual_exp}
\log \E_Q[\exp(Y)] = \sup_{R \ll Q} \left\{ \E_R[Y] -H(R,Q) \right\},
\end{equation}
where $H(R,Q)$ is the relative entropy or Kullback-Leibler divergence defined by (\cite{C75})
\[
H(R,Q):=
\begin{cases}
\E_Q \left [\frac{\mathrm{d}R}{\mathrm{d}Q} \log \frac{\mathrm{d}R}{\mathrm{d}Q} \right] &\text { if } R \ll Q \\
+\infty &\text{otherwise}.
\end{cases}
\]
Letting $X=\exp (Y)$ and exponentiating both sides of \eqref{eq:dual_exp}, we obtain
\begin{equation}
\E_Q[X] =\sup_{R\ll Q} \left\{\tilde{\alpha}(R)\exp\left(\E_R[\log X] \right)\right\},\label{eq:dual_exp_2}
\end{equation}
where
\[
\tilde{\alpha}(R)=\exp \left(-H(R,Q) \right).
\]

Now note that $\tilde{\alpha}(R)=0$ when $R$ is not absolutely continuous with respect to $Q$.\ 
Using this fact, we can rewrite expression~\eqref{eq:dual_exp_2} for $Q\in \mathbf{P}$, as follows:
\begin{equation}
\E_Q[X] =\sup_{R\ll Q} \left\{\tilde{\alpha}(R)\exp\left(\E_R[\log(X)] \right)\right\}
=\sup_{R\in\mathbf{P}}\left\{\tilde{\alpha}(R)\exp\left(\E_R[\log(X)] \right)\right\},
\label{eq:dual_exp_3}
\end{equation}
since we take a supremum of nonnegative numbers, 
$\tilde{\alpha}(R)=0$ when $R\notin \mathbf{Q}$ and $\mathbf{Q}\subset \mathbf{P}$, where $\mathbf{Q}$ denotes the set of probability measures absolutely continuous with respect to $Q$. 
Upon substituting the expression for $\E_Q[X]$ derived in \eqref{eq:dual_exp_3} in \eqref{eq:dual_rep_conv}, 
we have clarified the connection between \eqref{eq:dual_rep} and \eqref{eq:dual_rep_conv}, as follows:
\begin{align*}
\trho(X) &=\sup_{Q\in \mathbf{P}} \{ \beta(Q) \E_Q[X] \}\\
&=\sup_{Q\in \mathbf{P}} \left\{ \beta(Q) \sup_{R\ll Q} \left\{ \tilde{\alpha}(R)\exp\left(\E_R[\log X]\right)\right\} \right\} \\
&=\sup_{Q\in \mathbf{P}} \left\{ \beta(Q) \sup_{R\in \mathbf{P}} \left\{ \tilde{\alpha}(R)\exp\left(\E_R[\log X]\right)\right\} \right\} \\
&=\sup_{Q\in \mathbf{P}} \sup_{R\in \mathbf{P}} \{ \beta(Q) \left\{ \exp\of{-H(R,Q)}\exp\left(\E_R[\log X]\right)\right\} \} \\
&=\sup_{R\in \mathbf{P}} \sup_{Q\in \mathbf{P}} \{ \beta(Q) \left\{ \exp\of{-H(R,Q)}\exp\left(\E_R[\log X]\right)\right\} \} \\
&=\sup_{R\in \mathbf{P}} \left\{ c(R)\exp\left(\E_R[\log X]\right)\right\},
\end{align*}
where
\[
c(R)=\sup_{Q\in \mathbf{P}}\beta(Q)\exp\of{-H(R,Q)}.
\]

Following the construction outlined in \cite{BT86,BT07} and \cite{BRG08}, 
return risk measures may be optimized and become translation invariant, hence monetary, as follows: 
\begin{definition}
\label{def:OR}
An \textit{optimized return risk measure} (henceforth, OR risk measure) $\rho: L_{+}^{\infty}\to \R$ is defined as
\begin{equation}\label{eq:OR}
\rho(X)=\inf_{x \in \R}\cb{x+\trho\of{\of{X-x}^+}},
\end{equation}
for a corresponding return risk measure $\trho:L^{\infty}_{+}\rightarrow[0,+\infty)$.
\end{definition}

\begin{lemma}
An OR risk measure satisfies the following properties:
\begin{itemize}
\item [a)] monotonicity
\item [b)] positive homogeneity
\item [c)] translation invariance
\item [d)] if $\trho$ is convex, then $\rho$ is convex.
\end{itemize}
\end{lemma}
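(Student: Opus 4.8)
The plan is to verify the four properties of $\rho(X)=\inf_{x\in\R}\{x+\trho((X-x)^+)\}$ directly from the corresponding properties of $\trho$, exploiting the behaviour of the map $x\mapsto(X-x)^+$ under the relevant operations. For monotonicity (a), I would observe that $X\le Y$ implies $(X-x)^+\le(Y-x)^+$ pointwise for each fixed $x$, so by monotonicity of $\trho$ we get $x+\trho((X-x)^+)\le x+\trho((Y-x)^+)$ for every $x$; taking the infimum over $x$ on both sides preserves the inequality and yields $\rho(X)\le\rho(Y)$.

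For positive homogeneity (b), fix $\lambda>0$ (the case $\lambda=0$ being trivial since $\trho(0)=0\cdot\trho(1)=0$, or handled separately). The key identity is $(\lambda X-y)^+=\lambda(X-y/\lambda)^+$; substituting $x=y/\lambda$ as the running variable, $\rho(\lambda X)=\inf_{y}\{y+\trho(\lambda(X-y/\lambda)^+)\}=\inf_{x}\{\lambda x+\lambda\trho((X-x)^+)\}=\lambda\rho(X)$, using positive homogeneity of $\trho$ to pull out $\lambda$. For translation invariance (c), for $h\in\R$ use $(X+h-x)^+=(X-(x-h))^+$ and reindex with $x'=x-h$: then $\rho(X+h)=\inf_{x}\{x+\trho((X-(x-h))^+)\}=\inf_{x'}\{x'+h+\trho((X-x')^+)\}=h+\rho(X)$. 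One should note that these manipulations require that the infima are well-defined (finite), which follows because $(X-x)^+=0$ for $x\ge\esssup X$ gives $\rho(X)\le\esssup X<+\infty$, and for $x\le\essinf X$ one has $(X-x)^+=X-x$ with $\trho(X-x)\ge\essinf(X-x)$ by monotonicity and normalization, bounding $\rho$ from below; a brief remark to this effect should be included.

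For convexity (d), assume $\trho$ is convex and take $X,Y\in L^\infty_+$, $\alpha\in(0,1)$. Pick near-optimal $x,y$ for $\rho(X),\rho(Y)$ respectively and test $\rho(\alpha X+(1-\alpha)Y)$ at the point $\alpha x+(1-\alpha)y$. The crucial pointwise inequality is the convexity of $t\mapsto t^+$: $(\alpha X+(1-\alpha)Y-\alpha x-(1-\alpha)y)^+=(\alpha(X-x)+(1-\alpha)(Y-y))^+\le\alpha(X-x)^++(1-\alpha)(Y-y)^+$. Applying monotonicity and then convexity of $\trho$ gives $\trho((\alpha X+(1-\alpha)Y-(\alpha x+(1-\alpha)y))^+)\le\alpha\trho((X-x)^+)+(1-\alpha)\trho((Y-y)^+)$, hence $\rho(\alpha X+(1-\alpha)Y)\le\alpha(x+\trho((X-x)^+))+(1-\alpha)(y+\trho((Y-y)^+))$, and taking the infimum over $x$ and $y$ (or passing to the limit along near-minimizing sequences) yields $\rho(\alpha X+(1-\alpha)Y)\le\alpha\rho(X)+(1-\alpha)\rho(Y)$. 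The only mild obstacle is the bookkeeping around whether the infimum in \eqref{eq:OR} is attained: rather than assuming attainment, I would work throughout with $\varepsilon$-optimal $x,y$ and let $\varepsilon\downarrow0$, which makes every step rigorous without extra hypotheses. I do not expect any genuinely hard point here — the lemma is a routine transfer of properties through the infimal-convolution-type construction, with the two pointwise inequalities $(X-x)^+\le(Y-x)^+$ when $X\le Y$ and the convexity of the positive part doing all the work.
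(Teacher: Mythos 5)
Your proposal is correct and follows essentially the same route as the paper: parts (a)--(c) use exactly the same pointwise identities and reindexing of the infimum, and part (d) rests on the same two ingredients (convexity of $t\mapsto t^+$ plus convexity of $\trho$), the only cosmetic difference being that the paper first invokes positive homogeneity to reduce (d) to subadditivity and splits the joint infimum over $(x,y)$, whereas you prove convexity directly at the point $\alpha x+(1-\alpha)y$ with $\varepsilon$-optimal $x,y$. Your added remarks on finiteness of the infimum and the $\lambda=0$ case are harmless extras not needed in the paper's argument.
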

\begin{proof}
Take $X,Y\in L_{+}^{\infty}$ such that $X\leq Y$.
For an arbitrary $x\in \R$, we have $(Y-x)^+\geq (X-x)^+$, which implies $x+\trho((Y-x)^+)\geq x+\trho((X-x)^+)$ due to the monotonicity of $\trho$. 
Since this is valid for any $x\in \R$, by taking the infimum on both sides, we obtain $\rho(X)\leq \rho(Y)$.
For~(b), by using the positive homogeneity of $\trho$ and of the positive part function, we have, for any $\lambda >0$,
\begin{align*}
\rho(\lambda X)&=\inf_{x \in \R}\cb{x+\trho\of{\of{\lambda X-x}^+}}=\inf_{x \in \R}\cb{x+\lambda \trho\of{\of{X-\frac{x}{\lambda}}^+}}\\
&=\inf_{\Tilde{x}\in \R}\cb{\lambda \Tilde{x}+\lambda \trho\of{\of{X-\Tilde{x}}^+}}
=\lambda \inf_{\Tilde{x}\in \R}\cb{ \Tilde{x}+ \trho\of{\of{X-\Tilde{x}}^+}}=\lambda \rho(X).
\end{align*}
For~(c), we have, for any $h\in \R$,
\begin{align*}
\rho(X+h)&=\inf_{x \in \R}\cb{x+\trho\of{\of{X+h-x}^+}}=\inf_{x \in \R}\cb{x+\trho\of{\of{X-(x-h)}^+}}\\
&=\inf_{\Tilde{x} \in \R}\cb{\Tilde{x}+h+\trho\of{\of{X-\Tilde{x}}^+}}
=h+\inf_{\Tilde{x} \in \R}\cb{\Tilde{x}+\trho\of{\of{X-\Tilde{x}}^+}}\\
&=\rho(X)+h.
\end{align*}
Finally, let us assume that $\trho$ is convex
and take $X,Y \in L_{+}^{\infty}$.
Because $\rho$ is positively homogeneous, it is sufficient for~(d) to prove that $\rho$ is subadditive.
We have
\begin{align*}
\rho(X+Y)&=\inf_{x,y\in \R}\cb{x+y+\trho\of{\of{X+Y-x-y}^+}}\\
&\leq\inf_{x,y\in \R}\cb{x+y+\trho \of{\of{X-x}^+}+\trho \of{\of{Y-y}^+}}\\
&=\inf_{x \in \R}\cb{x+\trho\of{\of{X-x}^+}}+\inf_{y \in \R}\cb{y+\trho\of{\of{Y-y}^+}}\\
&=\rho(X)+\rho(Y),
\end{align*}
where we have used the convexity and positive homogeneity of $\trho$ and of the positive part function in the second line.
\end{proof}

\medskip 
The class of OR risk measures encompasses as special cases the
Rockafellar-Uryasev \cite{RU00} construction of Average-Value-at-Risk
as well as its generalization given by the (robust) HG risk measure (\cite{BLR18}).
We now establish that the OR risk measure admits an inf-convolution and a dual representation.

\begin{definition}
The inf-convolution $(f\ \Box\ g)$ of two convex functionals $f:L^{\infty}\to \overline{\R}$ and $g:L^\infty \to \overline{\R}$ is defined as follows:
\begin{equation*}
(f\ \Box\ g)(X)=\inf_{Y\in L^{\infty}}\cb{f(X-Y)+g(Y)}.
\end{equation*}
\end{definition}
\begin{lemma}
\label{lem:ORcon}
An OR risk measure $\rho$ can be written as
\begin{equation*}
\rho(X)=(f\ \Box\ g)(X),
\end{equation*}
where $f(X)=\trho(X^+)$ and
\begin{equation*}
g(Y)=\begin{cases}
        x & \text{ if } Y=x,\\
        +\infty & \text{otherwise},
    \end{cases}
\end{equation*}
when the corresponding return risk measure $\trho$ is convex.
\end{lemma}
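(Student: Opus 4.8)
The statement to prove is Lemma~\ref{lem:ORcon}, asserting that the OR risk measure $\rho(X)=\inf_{x\in\R}\{x+\trho((X-x)^+)\}$ equals the inf-convolution $(f\ \Box\ g)(X)$ with $f(X)=\trho(X^+)$ and $g$ the indicator-type functional that equals $x$ on constants $Y=x$ and $+\infty$ otherwise. The plan is essentially to unwind the definition of $\Box$ and observe that the only $Y\in L^\infty$ that contribute a finite value are the constants, so the infimum over $Y$ collapses to an infimum over $x\in\R$, recovering exactly \eqref{eq:OR}.

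First I would write $(f\ \Box\ g)(X)=\inf_{Y\in L^\infty}\{f(X-Y)+g(Y)\}$ and note that $g(Y)=+\infty$ whenever $Y$ is not ($P$-a.s.\ equal to) a constant, so those $Y$ contribute nothing to the infimum; hence $(f\ \Box\ g)(X)=\inf_{x\in\R}\{f(X-x)+g(x)\}=\inf_{x\in\R}\{\trho((X-x)^+)+x\}$, which is precisely $\rho(X)$ as defined in \eqref{eq:OR}. Second, for this manipulation to be legitimate in the stated framework one must check that $f$ and $g$ are genuinely convex functionals $L^\infty\to\overline{\R}$, as required by the definition of $\Box$: convexity of $g$ is immediate since it is (up to the affine term $Y\mapsto x$) the convex indicator of the linear subspace of constants; convexity of $f(X)=\trho(X^+)$ follows from the convexity of $X\mapsto X^+$ together with the monotonicity and convexity of $\trho$ — exactly the hypothesis $\trho$ convex invoked in the statement — since $\trho((\alpha X+(1-\alpha)Y)^+)\leq \trho(\alpha X^+ + (1-\alpha)Y^+)\leq \alpha\trho(X^+)+(1-\alpha)\trho(Y^+)$. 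Third, I would note that $f$ is well-defined and finite on all of $L^\infty$ because $\trho$ is finite-valued on $L^\infty_+$ and $X^+\in L^\infty_+$, so no $+\infty$ issues arise from the $f$-term; the only source of $+\infty$ is $g$, which is what forces the reduction to constants.

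The argument is short and the one point requiring a modicum of care — indeed the only real obstacle — is the implicit identification of "constant random variables" with real numbers: $g$ is defined by cases "$Y=x$" versus "otherwise", and one should read this as $g(Y)=x$ when $Y=x\,\mathbf{1}$ $P$-a.s.\ for some $x\in\R$ and $g(Y)=+\infty$ otherwise, which is well-defined since a nonatomic probability space makes the constant $x$ unique given $Y$. With that reading the collapse $\inf_{Y}\{f(X-Y)+g(Y)\}=\inf_{x\in\R}\{\trho((X-x)^+)+x\}$ is exact, not merely an inequality, because for each constant $Y=x$ the pair $(X-x,x)$ is admissible in the inf-convolution and conversely every finite-value admissible pair is of this form. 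This completes the identification $\rho=f\ \Box\ g$.
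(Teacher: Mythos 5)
Your proposal is correct and follows essentially the same route as the paper: verify that $f(X)=\trho(X^+)$ and $g$ are convex (using monotonicity and convexity of $\trho$ together with convexity of the positive part), and observe that the infimum in the inf-convolution collapses to constants, recovering exactly the definition of $\rho$ in \eqref{eq:OR}. Your additional remarks on finiteness of $f$ and the identification of constant random variables with reals are fine elaborations of the same argument.
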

\begin{proof}
Note that the functional $f$ is convex since $\trho$ is convex and monotone and the positive part function is convex, and $g$ is convex, too.
Then, the inf-convolution of the functionals $f$ and $g$ agrees with the definition of $\rho$ in~\eqref{eq:OR}.
\end{proof}
\medskip

Recall that the dual space of $L^\infty$ can be identified with $L^1$ w.r.t.~the $\sigma(L^\infty,L^1)$-topology. 
The convex conjugate $h^*:L^1\to \overline{\R}$ of a function $h:L^{\infty}\to \overline{\R}$ is defined as:
\begin{equation*}
h^*(\varphi)=\sup_{X\in L^\infty}\cb{\E[\varphi X]-h(X)}.
\end{equation*}

\begin{proposition}
\label{prop:ORdualrep}
An OR risk measure $\rho$, with a corresponding convex return risk measure $\trho$, admits the following dual representation:
\begin{equation}
\label{eq:ORdualrep}
\rho(X)=\sup_{Q\in A_P}\E_Q[X],
\end{equation}
where $A_P=\cb{Q\in \mathbf{P}:\E_Q[Y]\leq \trho(Y) \text{ for any } Y\in L_+^\infty}$.
Furthermore, if $\trho$ satisfies the Lebesgue property, then the supremum in \eqref{eq:ORdualrep} is attained.
\end{proposition}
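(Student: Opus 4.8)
The plan is to read off the dual representation from the inf-convolution structure established in Lemma~\ref{lem:ORcon} by passing to convex conjugates on the dual pair $(L^\infty, L^1)$. By that lemma, $\rho = f\ \Box\ g$ with $f(X)=\trho(X^+)$ and $g(Y)=x$ if $Y\equiv x$ and $+\infty$ otherwise, and both $f$ and $g$ are proper convex (indeed $\rho$ is finite-valued and convex, as established above). The key general fact is that the conjugate of an inf-convolution is the sum of the conjugates, $(f\ \Box\ g)^* = f^* + g^*$, which follows immediately from the definitions after a change of variables, so it suffices to identify $f^*$ and $g^*$ and then apply the Fenchel--Moreau theorem to obtain $\rho = \rho^{**}$.

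First I would compute $g^*(\varphi) = \sup_{x\in\R} x(\E[\varphi]-1)$, which equals $0$ when $\E[\varphi]=1$ and $+\infty$ otherwise; thus $g^*$ is the convex indicator of $\{\varphi\in L^1 : \E[\varphi]=1\}$. Since $f$ is monotone, convex and positively homogeneous (the positive part map and $\trho$ both being so), $f^*$ takes only the values $0$ and $+\infty$, vanishing precisely on $C:=\{\varphi\in L^1 : \E[\varphi X]\le \trho(X^+)\ \forall X\in L^\infty\}$. Testing the inequality on $X=-\mathbf{1}_A$ forces $\varphi\ge 0$ (using $\trho(0)=0$), and for $\varphi\ge 0$ the inequality is equivalent to $\E[\varphi Y]\le \trho(Y)$ for all $Y\in L^\infty_+$ (the general case following from $\E[\varphi X]\le\E[\varphi X^+]$). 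Hence $\rho^* = f^*+g^*$ is the convex indicator of $\{\varphi\in L^1_+ : \E[\varphi]=1,\ \E[\varphi Y]\le\trho(Y)\ \forall Y\in L^\infty_+\}$, which under the identification of a density with its measure in $\mathbf{P}$ is exactly $A_P$. It follows that $\rho^{**}(X)=\sup_{Q\in A_P}\E_Q[X]$, which is the asserted representation provided $\rho=\rho^{**}$.

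The main obstacle is precisely this identity $\rho=\rho^{**}$: $\rho$ is proper and convex, but Fenchel--Moreau also requires it to be lower semicontinuous for $\sigma(L^\infty,L^1)$, equivalently that the monetary risk measure $\rho$ have the Fatou property. This is where a continuity assumption on $\trho$ must be used: one checks directly from $\rho(X)=\inf_{x}\{x+\trho((X-x)^+)\}$ that a suitable lower-bounded Fatou property of $\trho$ -- which holds automatically when $\trho$ is law invariant and convex -- is inherited by $\rho$, after which Fenchel--Moreau yields \eqref{eq:ORdualrep}. (The ``easy'' inequality $\rho(X)\ge\sup_{Q\in A_P}\E_Q[X]$ needs no continuity: for $Q\in A_P$ and $x\in\R$, $\E_Q[X]\le x+\E_Q[(X-x)^+]\le x+\trho((X-x)^+)$, and one takes the infimum over $x$.)

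For the ``furthermore'' part I would argue exactly as in the proof of Proposition~\ref{prop:crrm}. Identifying $A_P$ with a subset of $L^1$ via densities, it is norm-bounded (each element has $L^1$-norm one) and $\sigma(L^1,L^\infty)$-closed, being the intersection of $\{\varphi\ge 0\}$, $\{\E[\varphi]=1\}$ and the closed half-spaces $\{\E[\varphi Y]\le\trho(Y)\}$ for $Y\in L^\infty_+$. For any decreasing $(A_n)_n\subset\F$ with empty intersection, $\sup_{Q\in A_P}Q(A_n)=\sup_{Q\in A_P}\E_Q[\mathbf{1}_{A_n}]\le\trho(\mathbf{1}_{A_n})\to 0$ by the Lebesgue property of $\trho$, so $A_P$ is uniformly integrable; by the Dunford--Pettis theorem it is relatively weakly compact, hence, being weakly closed, weakly compact. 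Since $Q\mapsto\E_Q[X]=\E[X\,\mathrm{d}Q/\mathrm{d}P]$ is $\sigma(L^1,L^\infty)$-continuous for fixed $X\in L^\infty$, the Weierstrass theorem gives that the supremum in \eqref{eq:ORdualrep} is attained.
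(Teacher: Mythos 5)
Your proof is correct and takes essentially the same route as the paper: the inf-convolution decomposition of Lemma~\ref{lem:ORcon}, the conjugate identity $(f\ \Box\ g)^*=f^*+g^*$ with the same computations of $f^*$ and $g^*$, Fenchel--Moreau to recover $\rho(X)=\sup_{Q\in A_P}\E_Q[X]$, and the Dunford--Pettis/Weierstrass argument of Proposition~\ref{prop:crrm} for attainment under the Lebesgue property. The only substantive difference is that you explicitly flag (and sketch how to supply, via a Fatou-type property inherited from $\trho$) the $\sigma(L^\infty,L^1)$-lower semicontinuity of $\rho$ required by Fenchel--Moreau, a hypothesis the paper's proof invokes without comment, and you also record the elementary inequality $\rho(X)\geq\sup_{Q\in A_P}\E_Q[X]$, which the paper does not isolate.
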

\begin{proof}
From, e.g., Theorem~2.3.1 in \cite{Z02}, 
it is known that
\begin{equation}
\label{eq:conjugatecon}
(f\ \Box\ g)^*=f^*+g^*.
\end{equation}
Let us consider the conjugates of the functionals $f$ and $g$ in Lemma~\ref{lem:ORcon}.
The conjugate $f^*$ can be calculated as follows:
\begin{align*}
f^*(\varphi)&=\sup_{Y\in L^{\infty}}\cb{\E[\varphi Y]-f(Y)}=\sup_{Y\in L^{\infty}}\cb{\E[\varphi Y]-\trho(Y^+)}\\
&=\begin{cases}
        0 & \text{if } \varphi\in L^{1}_{+} \text{ and } \E[\varphi Y] \leq \trho (Y) \text{ for any } Y\in L_+^\infty,\\
        +\infty & \text{otherwise},
    \end{cases}
\end{align*}
by using the positive homogeneity of $\trho$.
The conjugate $g^*$ can be calculated as follows:
\begin{align*}
g^*(\varphi)&=\sup_{Y\in L^{\infty}}\cb{\E[\varphi Y]-g(Y)}=\sup_{x\in \R}\cb{\E[\varphi x]-x}=\sup_{x\in \R}x(\E[\varphi]-1)\\
&=\begin{cases}
        0 & \text{if } \E[\varphi]=1,\\
        +\infty & \text{otherwise}.
    \end{cases}
\end{align*}
By using~\eqref{eq:conjugatecon} and Lemma~\ref{lem:ORcon}, we have the following:
\begin{align*}
\rho^*(\varphi)
=\begin{cases}
        0 & \text{if } \varphi\in L^{1}_{+},\ \E[\varphi]=1 \text{ and } \E[\varphi Y] \leq \trho (Y) \text{ for any } Y\in L_+^\infty,\\
        +\infty & \text{otherwise}.
    \end{cases}
\end{align*}
Therefore, $\rho^*$ is the indicator function of the set \[A=\cb{\varphi \in L^{1}_{+} : \E[\varphi]=1 \text{ and } \E[\varphi Y] \leq \trho (Y) \text{ for any } Y\in L_+^\infty}.\]
Since the functional $\rho$ is the inf-convolution of the functionals $f$ and $g$, as a consequence of the Fenchel-Moreau theorem, we have
\begin{align*}
\rho(X)&=\sup_{\varphi\in L_1}\cb{\E[\varphi X]-\rho^*(\varphi)}
=\sup_{\varphi \in A}\E[\varphi X] \nonumber\\
&=\sup_{Q\in A_P}\E_Q[X],
\end{align*}
where $A_P=\cb{Q\in \mathbf{P}:\E_Q[Y]\leq \trho(Y) \text{ for any } Y\in L_+^\infty}$.
Hence,
\begin{equation}
\label{eq:ORdualrep-b}
\rho(X)=\sup_{Q\in A_P}\E_Q[X].
\end{equation}

Following the same argument used in the proof of Proposition~\ref{prop:crrm}, it follows that if $\trho$ has the Lebesgue property, then the set $A_P$ is weakly compact, from which the attainment of the maximum follows.
Indeed, the set $A_P$ is norm-bounded by definition. 
It is weakly closed, since it is the intersection of weakly closed sets. Now let us take a decreasing sequence $(A_n)_n\in \mathcal{F}$ of which the intersection is the empty set. 
For any $Q \in A_P$, we have $\E_Q[ 1_{A_n}]\leq \trho(1_{A_n})$ for every $n$. 
Hence, by using the Lebesgue property of $\trho$, we have
\[\lim_{n\to +\infty}\sup_{Q \in A_P}\E_Q[ 1_{A_n}]\leq \lim_{n\to +\infty}\trho(1_{A_n})=0,\]
which gives that $A_P$ is uniformly integrable. 
Therefore, the supremum is attained due to the Dunford-Pettis and Weierstrass theorems; cf.~also Theorem~3.6 in \cite{D02} and Theorem~8 in \cite{D00}. 
\end{proof}

\section{Axiomatizations of Orlicz premia}\label{sec:Orl}
	
In this section, we first define Orlicz premia and derive some properties that are relevant in this paper; and next we establish new axiomatizations of Orlicz premia and compare them with the one given in \cite{BLR18}.

\subsection{Orlicz premia: Definition and properties}

The mathematical definition of the Orlicz premium corresponds to the Luxemburg norm on the Orlicz space
\[
L^{\Phi}:=\cb{X\in L^0(\Omega,\F,P): \E\sqb{\frac{\abs{X}}{k}}<+\infty \text{ for some } k>0},
\]
given by
\[
H_\Phi(X) := \inf \{k > 0 \mid  \E \left [\Phi(X/k ) \right] \leq 1 \},
\]
where the Orlicz function $\Phi \colon [0, +\infty) \to [0, +\infty]$ satisfies $\Phi(0)=0$, is nondecreasing, left-continuous, convex, and nontrivial in the sense that $\Phi(x) >0$ for some $x>0$ and $\Phi(x) < +\infty$ for some $x>0$.\
We refer to \cite{ES92} for the basic properties of Luxemburg norms under these assumptions.\ 
Notice that in the actuarial and financial mathematics literature (e.g., \cite{HG82}, \cite{CL08}, \cite{CL09}, \cite{BLR18}, \cite{BLR21}) there are small differences in the set of properties required to $\Phi$.\
In this paper, we are interested in possibly nonconvex Orlicz functions that may not satisfy $\Phi(0)=0$; conversely, we will limit the domain of $H_\Phi$ to $L^{\infty}_{+}$. 
(When the function $\Phi(\cdot)$ is convex and satisfies several additional properties, it is often referred to as a Young function; as these conditions are not assumed in this paper, we refer to $\Phi(\cdot)$ as an Orlicz function.)
This leads to the following definition.

\begin{definition}\label{def:orl}
Let $\Phi \colon [0, +\infty) \to \R$ satisfy:
\begin{enumerate}[left=0pt, itemsep=0pt]
\item [a)] $\Phi(1)=1$, $\lim_{x \to +\infty}\Phi(x)=+\infty$
\item [b)] $\Phi$ is nondecreasing
\item [c)] $\Phi$ is left-continuous
\end{enumerate}
For $X \in L^{\infty}_{+}$, the Orlicz premium is defined by
\[
H_\Phi(X) = \inf \{k > 0 \mid  \E \left [\Phi(X/k ) \right] \leq 1 \}.
\]
\end{definition}
We recall the relevant properties of Orlicz premia in the following proposition.

\begin{proposition}\label{prop:orl}
Let $\Phi \colon [0, +\infty) \to \R$ and $H_\Phi(X)$ be as in Definition~\ref{def:orl}. 
Then,
\begin{enumerate}[left=0pt, itemsep=0pt]
\item [a)] $H_\Phi$ is monotone, positively homogeneous and satisfies $H_\Phi (1) = 1$
\item [b)] for each $X \in L^{\infty}_{++}$, it holds that $\E \left [\Phi(X/H_\Phi(X) ) \right]=1$
\item [c)] if $\Phi$ is increasing, then
\begin{align*}
&\E \left [\Phi(X/k ) \right] = 1 \iff k=H_\Phi(X) \\
&\E \left [\Phi(X/k ) \right] > 1 \iff k<H_\Phi(X)
\end{align*}
\item [d)] $H_\Phi$ is convex if and only if $\Phi$ is convex.
\end{enumerate}
\end{proposition}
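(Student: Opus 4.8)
\textbf{Proof plan for Proposition~\ref{prop:orl}.}
The plan is to treat the four items in the order (a), (c), (b), (d), since (b) is most cleanly deduced from (c) together with a continuity argument, and (d) is essentially a restatement of the convexity equivalence for Luxemburg norms. For (a), I would argue directly from the defining infimum. Monotonicity: if $0 \le X \le Y$, then $\Phi(X/k) \le \Phi(Y/k)$ pointwise by (b), so $\E[\Phi(Y/k)] \le 1$ forces $\E[\Phi(X/k)] \le 1$; hence the admissible set for $X$ contains that for $Y$, giving $H_\Phi(X) \le H_\Phi(Y)$. Positive homogeneity: for $\lambda > 0$, the substitution $k = \lambda k'$ in $\{k > 0 \mid \E[\Phi(\lambda X/k)] \le 1\}$ shows this set equals $\lambda \cdot \{k' > 0 \mid \E[\Phi(X/k')] \le 1\}$, and taking infima gives $H_\Phi(\lambda X) = \lambda H_\Phi(X)$; the case $\lambda = 0$ is immediate from $\Phi(0) \le \Phi(1) = 1$ (using $\Phi$ nondecreasing and $\Phi \ge 0$ on a neighborhood of $0$, or just noting $\lim_{k\to\infty}\Phi(0/k)=\Phi(0)\le 1$). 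For the normalization $H_\Phi(1) = 1$: for $k > 1$ we have $\Phi(1/k) \le \Phi(1) = 1$ by monotonicity so $k$ is admissible, while for $k < 1$ we have $\Phi(1/k) \ge \Phi(1) = 1$, and one must rule out equality persisting — but in fact we only need $H_\Phi(1) \le 1$ from the first observation and $H_\Phi(1) \ge 1$ because for $k<1$, if $\E[\Phi(1/k)]=\Phi(1/k)<1$ would be needed for admissibility, contradicting $\Phi(1/k)\ge 1$; so the infimum is exactly $1$.

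For (c), assume $\Phi$ is increasing. The key monotonicity fact is that $k \mapsto g(k) := \E[\Phi(X/k)]$ is nonincreasing on $(0,\infty)$ (immediate from $\Phi$ nondecreasing), and when $\Phi$ is strictly increasing it is in fact strictly decreasing on the range where $X/k$ is not pushed to a region where $\Phi$ is constant — but for $X \in L^\infty_{++}$ we have $X \ge c > 0$, so $X/k \ge c/k > 0$, and strict monotonicity of $\Phi$ gives strict monotonicity of $g$ wherever $g$ is finite and positive. I would combine this with left-continuity of $\Phi$: by monotone/dominated convergence (the integrand is bounded by $\Phi(\|X\|_\infty/k)$ for $k$ bounded below), $g$ inherits appropriate one-sided continuity, so $g$ is continuous and strictly decreasing on the relevant interval, hence crosses the level $1$ at exactly one point, which by definition of the infimum is $H_\Phi(X)$. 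This yields $g(k) = 1 \iff k = H_\Phi(X)$ and $g(k) > 1 \iff k < H_\Phi(X)$. Item (b) then follows: for general $X \in L^\infty_{++}$ (without assuming $\Phi$ increasing), one approximates or argues directly that left-continuity of $\Phi$ gives $g(H_\Phi(X)) \le 1$ by taking $k \downarrow H_\Phi(X)$ along admissible values and applying left-continuity of $\Phi$ in the variable $1/k$ together with dominated convergence; and $g(H_\Phi(X)) \ge 1$ because any $k < H_\Phi(X)$ is not admissible, i.e.\ $g(k) > 1$, and letting $k \uparrow H_\Phi(X)$ with lower semicontinuity (Fatou) gives $g(H_\Phi(X)) \ge 1$. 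Hence equality. Here the lower bound $X \ge c > 0$ is what prevents degeneracy.

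For (d), the statement $H_\Phi$ convex $\iff$ $\Phi$ convex is the classical fact that a Luxemburg gauge is a genuine norm (sublinear) precisely when the generating Orlicz function is convex; the "if" direction follows because for $X, Y \in L^\infty_+$ with $H_\Phi(X) = a$, $H_\Phi(Y) = b$ and $\alpha \in (0,1)$, convexity of $\Phi$ gives $\Phi\big(\frac{\alpha X + (1-\alpha) Y}{\alpha a + (1-\alpha) b}\big) \le \frac{\alpha a}{\alpha a + (1-\alpha)b}\Phi(X/a) + \frac{(1-\alpha)b}{\alpha a + (1-\alpha)b}\Phi(Y/b)$ by the standard weighted-argument trick, so taking expectations and using $\E[\Phi(X/a)] \le 1$, $\E[\Phi(Y/b)] \le 1$ shows $\alpha a + (1-\alpha) b$ is admissible for $\alpha X + (1-\alpha) Y$; the "only if" direction is obtained by testing $H_\Phi$ on constants (or two-point random variables) and reading off convexity of $\Phi$ from convexity of $H_\Phi$. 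I expect the main obstacle to be the careful handling of continuity of $g$ and the edge cases in (b)/(c) when $\Phi$ is merely nondecreasing and left-continuous rather than increasing and continuous — in particular justifying $g(H_\Phi(X)) = 1$ for $X \in L^\infty_{++}$ requires pinning down exactly which one-sided limits of $g$ exist, which is precisely where left-continuity of $\Phi$ and the strict positivity $X \ge c$ are used; the rest is routine.
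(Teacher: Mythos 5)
Your plan for (a) and for the ``if'' half of (d) matches the standard arguments, but there are two places where the proposal does not go through as written.

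The clearest gap is the ``only if'' direction of (d). Testing $H_\Phi$ on constants yields nothing: by positive homogeneity $H_\Phi(c)=c\,H_\Phi(1)$, which is linear in $c$ and blind to the shape of $\Phi$; and plain two-point variables do not suffice either, because a midconvexity violation of $\Phi$ at $x_1,x_2$ does not transfer to the rescaled points $x_i/H_\Phi(X)$ at which the defining expectation is actually evaluated. The paper closes this with a calibration device you are missing: assuming $\Phi\left((x_1+x_2)/2\right)>(\Phi(x_1)+\Phi(x_2))/2$, it chooses a third value $z$ and a weight $\lambda\in(0,1)$ with $\lambda\Phi(z)+(1-\lambda)\Phi\left((x_1+x_2)/2\right)>1>\lambda\Phi(z)+(1-\lambda)\frac{\Phi(x_1)+\Phi(x_2)}{2}$, builds the three-atom variables $X=z1_A+x_11_B+x_21_C$, $Y=z1_A+x_21_B+x_11_C$ and $Z=(X+Y)/2$, and obtains $\E[\Phi(X)]=\E[\Phi(Y)]<1$ while $\E[\Phi(Z)]>1$, contradicting convexity of $H_\Phi$; midconvexity plus monotonicity then gives convexity. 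Without this straddling-the-level-$1$ construction, ``reading off convexity of $\Phi$ from convexity of $H_\Phi$'' is not a proof.

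The second problem is your treatment of $g(k)=\E[\Phi(X/k)]$ in (b)--(c). Left-continuity of $\Phi$ gives, via monotone convergence as $k_n\downarrow k$ (so $X/k_n\uparrow X/k$), only \emph{right}-continuity of $g$; this is exactly the paper's argument and it yields $\E[\Phi(X/H_\Phi(X))]\leq 1$. Your step for the reverse bound --- ``letting $k\uparrow H_\Phi(X)$ with lower semicontinuity (Fatou) gives $g(H_\Phi(X))\geq 1$'' --- points the wrong way: as $k\uparrow H_\Phi(X)$ the integrands decrease to $\Phi\left((X/H_\Phi(X))^+\right)\geq\Phi(X/H_\Phi(X))$, so the limit only controls $\E\left[\Phi\left((X/H_\Phi(X))^+\right)\right]$, which does not bound $g(H_\Phi(X))$ from below unless $\Phi$ is right-continuous (hence continuous) at the relevant points; if $\Phi$ has a jump that $X/H_\Phi(X)$ hits with positive probability, $g$ can jump strictly below $1$ at $H_\Phi(X)$. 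For the same reason your claim in (c) that $g$ is continuous and therefore crosses the level $1$ exactly once is unjustified for a merely increasing, left-continuous $\Phi$. (The paper's own proof of (b) is terse at precisely this point, deducing $g(H_\Phi(X))=1$ from right-continuity alone, but your Fatou justification as written is not valid.) A smaller slip of the same kind occurs in (a): for $k<1$ admissibility requires $\Phi(1/k)\leq 1$, not $\Phi(1/k)<1$, so your contradiction argument for $H_\Phi(1)\geq 1$ implicitly needs $\Phi(x)>1$ for $x>1$, which is not guaranteed by Definition~\ref{def:orl} alone.
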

\begin{proof}
(a) The proof is standard. 
(b) Let $g(k):=\E \left[\Phi(X/k) \right]$. 
If $k_n \downarrow k$ then $\Phi(X/k_n) \uparrow \Phi(X/k)$, so from
the monotone convergence theorem it follows that $g$ is right-continuous.
Since
$
H_\Phi(X) = \inf \{ k \mid g(k) \leq 1 \}
$,
it follows that $g \left(H_\Phi(X)\right)=1$, that is, $\E \left [\Phi(X/H_\Phi(X) ) \right] = 1$.
(c) The `only if' part of the first implication follows by strict monotonicity of $\Phi$. 
The `if' part of the second implication is just the definition of $H_\Phi$, while the `only if' part follows from~(c). 
(d) The proof of the `if' part is standard. 
To prove the `only if' part, assume first by contradiction that $\Phi$ is not midconvex, i.e., there exist $x_1,x_2 \geq 0$ such that $\Phi \left ((x_1+x_2)/2 \right )>(\Phi(x_1)+\Phi(x_2))/2$. 
Then, there exists $z \in [0,+\infty) $ and $\lambda \in (0,1)$ such that
\begin{equation}\label{eq:convexity}
\lambda \Phi(z)+(1-\lambda)\Phi\left ((x_1+x_2)/2 \right )>1>\lambda \Phi(z)+(1-\lambda) \frac{\Phi(x_1)+\Phi(x_2)}{2}.
\end{equation}
Let $A,B,C \in \F$ be disjoint sets with $P(A)=\lambda$, $P(B)=P(C)=\frac{1-\lambda}{2}$ and let
\begin{align*}
X&=z 1_A+x_1 1_B+ x_2 1_C\\
Y&=z 1_A+x_2 1_B+ x_1 1_C\\
Z&=z 1_A+\frac{x_1+x_2}{2} 1_{B\cup C}=\frac{X+Y}{2}.
\end{align*}
From \eqref{eq:convexity}, we have $\E[\Phi(Z)]>1$ and $\E[\Phi(X)]=\E[\Phi(Y)]<1$, which contradicts the convexity of $H_{\Phi}$. 
As a consequence, $\Phi$ is midconvex and since it is nondecreasing it is also convex.
\end{proof}
\medskip

A remarkable example in which $\Phi(0) \neq 0$ and $\Phi$ is not differentiable is the following. 
\begin{example}[Expectiles]\label{ex:expectiles}
Let $0 < q <1$ and let
\begin{equation*}
\Phi_q(x)=1+ q(x-1)^{+} -(1-q)(x-1)^{-}.
\end{equation*}
Then, $\Phi_q(0)=q$, $\Phi_q(1)=1$ and $\Phi_q$ is convex if $1/2 \leq q <1$ and concave if $0 < q \leq 1/2$. 
The corresponding Orlicz premium $H_{\Phi_q}$ satisfies
\[
\E[\Phi_q(X/H_{\Phi_q}(X))-1]=0,
\]
which gives
\[
\E[q(X-H_{\Phi_q}(X))^{+}-(1-q)(X-H_{\Phi_q}(X))^{-}]=0,
\]
so $H_{\Phi_q}(X)$ coincides with the $q$-expectile of $X$ (\cite{NP87,KYH09}), denoted by $e_q(X)$ and defined for $X \in L^1(\Omega, \F, P)$ by the condition
\[
q \E[(X-e_q(X))^{+}]=(1-q)\E[(X-e_q(X))^{-}].
\]
\end{example}
The following theorem shows that expectiles are the most general translation invariant convex Orlicz premia.
\begin{theorem}\label{th:expectiles}
If $\Phi$ is increasing and $H_{\Phi}$ is translation invariant and convex, then
\[
\Phi(x)=1+a(x-1)^{+}-b(x-1)^{-},
\]
with $a>b$ and $b<1$.
\end{theorem}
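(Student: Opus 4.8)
The plan is to exploit translation invariance of $H_\Phi$ together with the defining identity $\E[\Phi(X/H_\Phi(X))] = 1$ (Proposition~\ref{prop:orl}(b),(c), using that $\Phi$ is increasing) to pin down the shape of $\Phi$ on $[0,+\infty)$. The key observation is that the pair $(\Phi,H_\Phi)$ behaves like the expectile calibration: for a two-point random variable $X$ taking value $x_1$ with probability $p$ and $x_2$ with probability $1-p$, the Orlicz premium $k=H_\Phi(X)$ is the unique solution of $p\,\Phi(x_1/k)+(1-p)\,\Phi(x_2/k)=1$. Translation invariance says $H_\Phi(X+h)=H_\Phi(X)+h$ for all $h$ such that $X+h\ge 0$, which gives the functional equation
\[
p\,\Phi\!\left(\frac{x_1+h}{k+h}\right)+(1-p)\,\Phi\!\left(\frac{x_2+h}{k+h}\right)=1
\]
for all admissible $h$, where $k$ solves the $h=0$ equation. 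This is a strong constraint, and the strategy is to differentiate (or take difference quotients, since a priori $\Phi$ need only be increasing and left-continuous — though convexity of $H_\Phi$ forces convexity of $\Phi$ by Proposition~\ref{prop:orl}(d), hence $\Phi$ is locally Lipschitz and differentiable a.e.) in $h$ at $h=0$ to extract a separable ODE for $\Phi$.

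**First I would** record that $H_\Phi$ convex implies $\Phi$ convex (Proposition~\ref{prop:orl}(d)), so $\Phi$ is a finite, increasing, convex function on $[0,+\infty)$ with $\Phi(1)=1$; in particular $\Phi$ has left and right derivatives everywhere, and I may work with one-sided derivatives. **Next I would** normalize: write $\phi^{+}_{-}$ for the one-sided derivatives and, differentiating the two-point functional equation in $h$ at $h=0^{+}$, obtain a relation of the form
\[
p\,\Phi'\!\left(x_1\right)(1-x_1)+(1-p)\,\Phi'\!\left(x_2\right)(1-x_2)=0
\]
(after simplification using $k=1$ in the normalized case $p\,\Phi(x_1)+(1-p)\Phi(x_2)=1$ and dividing by the appropriate factor). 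Since $x_1,x_2,p$ range over a large set subject only to the calibration constraint, this forces $\Phi'(x)(1-x)$ to depend on $x$ only through $\Phi(x)$ in an affine way — concretely one derives that $(1-x)\Phi'(x)$ equals a constant multiple of $\Phi(x)-1$ plus a constant, i.e. $(x-1)\Phi'(x) = c\,(\Phi(x)-1)$ on each side of $x=1$, possibly with different constants $c=a$ for $x>1$ and $c=b$ for $x<1$. Solving this linear ODE with the boundary condition $\Phi(1)=1$ and continuity at $1$ yields $\Phi(x)-1 = a(x-1)$ for $x\ge 1$ and $\Phi(x)-1=b(x-1)$ for $0\le x\le 1$, that is $\Phi(x)=1+a(x-1)^{+}-b(x-1)^{-}$.

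**Finally I would** read off the constraints on the parameters: $\Phi$ increasing forces $a>0$ and $b>0$; $\Phi$ convex (equivalently $H_\Phi$ convex) forces the right slope to dominate the left slope, $a\ge b$, and strict increase of $\Phi$ rules out degeneracies so that in fact $a>b$; and the requirement that $\Phi$ map into finite values together with $\Phi(0)\ge 0$ (needed for $H_\Phi$ to be well defined as a return risk measure on $L^\infty_+$, since $\Phi(0)=1-b(0-1)^{-}=1-b$ must be $\ge 0$, and nondegeneracy gives $b<1$). This matches Example~\ref{ex:expectiles} with $q=a/(a+b)$ up to the scaling freedom in $\Phi$.

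**The hard part will be** making the passage from the two-point functional equation to the pointwise ODE fully rigorous without assuming differentiability a priori: one must either invoke convexity of $\Phi$ (legitimate here via Proposition~\ref{prop:orl}(d)) to get the needed regularity, or argue directly with difference quotients and the freedom in choosing $(x_1,x_2,p,h)$ to show the increment of $\Phi$ is affine on $(1,\infty)$ and on $(0,1)$. A secondary subtlety is handling the constraint that $h$ must keep $X+h\ge 0$, which limits $h\ge -x_1$ (assuming $x_1<x_2$); since we only need arbitrarily small $h$ of both signs when $x_1>0$, and can let $x_1\downarrow 0$ by continuity, this is not a real obstruction but should be noted. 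One also needs to confirm that the two integration constants on the two sides of $x=1$ are forced to be the ones making $\Phi$ continuous (indeed $C^0$) at $1$, which follows from left-continuity of $\Phi$ and finiteness, together with $\Phi(1)=1$ pinning both pieces.
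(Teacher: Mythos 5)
Your overall strategy (two-point calibration plus translation invariance, with convexity of $\Phi$ supplied by Proposition~\ref{prop:orl}(d)) is viable, but there is a genuine gap at the decisive step. First, the separation argument gives the \emph{same} constant on both sides of $1$: combining the calibration $p(\Phi(x_1)-1)+(1-p)(\Phi(x_2)-1)=0$ with your first-order relation yields $\frac{(1-x_1)\Phi'(x_1)}{1-\Phi(x_1)}=\frac{(x_2-1)\Phi'(x_2)}{\Phi(x_2)-1}=:c$ for all $x_1<1<x_2$, so the two constants you call $a$ and $b$ cannot differ, and they are separation constants, not the slopes in the final answer. More seriously, the ODE $(x-1)\Phi'(x)=c\,(\Phi(x)-1)$ does \emph{not} have only linear solutions: its general solution is $\Phi(x)-1=K(x-1)^{c}$ on $(1,+\infty)$ and $1-\Phi(x)=L(1-x)^{c}$ on $(0,1)$, and every such power function satisfies $\Phi(1)=1$ and continuity at $1$, so the step ``solving this linear ODE with the boundary condition yields $\Phi(x)-1=a(x-1)$'' is false as written. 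The infinitesimal (derivative at $h=0$) version of translation invariance simply does not force linearity by itself.

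The gap can be closed in two ways. Either invoke convexity once more: $1+K(x-1)^{c}$ is convex on $(1,+\infty)$ only if $c\geq 1$, while $1-L(1-x)^{c}$ is convex on $(0,1)$ only if $c\leq 1$, so $c=1$, and the integration constants $K,L$ then become the slopes $a,b$ (with $a\geq b$ from convexity at $x=1$). Or use the \emph{full} translation invariance rather than its linearization, as the paper does: setting $h(u)=\Phi(u+1)-1$, invariance under all shifts of a calibrated two-point law gives $p\bigl(\lambda h(x)-h(\lambda x)\bigr)+(1-p)\bigl(\lambda h(z)-h(\lambda z)\bigr)=0$ for all $\lambda\in(0,1]$, and convexity of $h$ together with $h(0)=0$ makes both brackets nonnegative, hence zero; positive homogeneity of $h$ on each half-line then gives linearity directly, with no differentiability issues at all. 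Your remaining caveats (regularity in $h$, the restriction $h>-x_1$) are fixable technicalities, and your reading of the parameter constraints ($a>b$ strict, $b<1$) is heuristic rather than derived, but the essential missing ingredient is the one above.
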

\begin{proof}
Let $h(x):=\Phi(x+1)-1$. Then $h(x)=0 \iff x=0$ and  $h(x)>0 \iff x>0$, and
\begin{align*}
\E[h((Y/k)-1)]&=0 \iff H_{\Phi}(Y)=k\\
\E[h((Y/k)-1)]&<0 \iff H_{\Phi}(Y)<k.
\end{align*}
Fix $x<0$ and $z>0$. 
Let $p$ be a solution of the equation
\begin{equation}\label{def:p}
ph(x)+(1-p)h(z)=0,
\end{equation}
and let $Y=p\delta_{x+1}+(1-p)\delta_{z+1}$. 
It follows that $H_{\Phi}(Y)=1$, and from translation invariance it follows that $H_{\Phi}(Y+c)=c+1$ for each $c \in \R_+$, which implies
\begin{align}\label{eq:TI}
0=\E\sqb{h\of{\frac{Y+c}{c+1}-1}}=ph\of{\frac{x}{c+1}}+(1-p)h\of{\frac{z}{c+1}}.
\end{align}
Let $\lambda=\frac{1}{c+1}$ and note that $0 < \lambda\leq 1$. 
By combining \eqref{eq:TI} with \eqref{def:p}, we have
\begin{align*}
\lambda p h(x)+\lambda (1-p)h(z)=0=ph(\lambda x)+(1-p)h(\lambda z),
\end{align*}
which gives
\[
p(\lambda h(x)-h(\lambda x))+(1-p)(\lambda h(z)-h(\lambda z))=0.
\]
From the convexity of $h$ it follows that $\lambda h(x)-h(\lambda x)\geq 0$ and $\lambda h(z)-h(\lambda z)\geq 0$, so from the last equality we get $h(\lambda z)=\lambda h(z) $ and $h(\lambda x)=\lambda h(x) $ for every $0\leq \lambda \leq 1$ and for each $x<0$ and $z>0$, from which the thesis follows.
\end{proof}
\medskip 

It is immediate to check that $H_\Phi(X)=e_q(X)$, with $q=a/(a+b)$.\ 
Further, the same argument also shows that expectiles with $0<q\leq 1/2$ are the only concave translation invariant Orlicz premia.
It is interesting to compare the theorem above with \cite{HG82} and \cite{GDH84}, where it is shown that a translation invariant Orlicz premium must be equal to the mean, but in their result actually also the differentiability of the Orlicz function $\Phi$ is assumed.
	
\begin{definition}
A function $f \colon [0,+\infty) \to \R$ is called GA-convex if, for each $\lambda \in (0,1)$ and $x,y>0$, it holds that
\[
f(x^{\lambda}y^{1-\lambda})\leq \lambda f(x)+(1-\lambda)f(y).
\]
\end{definition}
	
\noindent
It is not difficult to verify that a nondecreasing and convex function on $(0,+\infty)$ is GA-convex.\ 
For completeness we report the proof in Lemma~\ref{lemma:AA-GA} in the Appendix.\ 
The converse does not hold, an example being $f(x)=\log x$ that is increasing and GA-convex but not convex.\ 
We refer to \cite{N00} for further properties of GA-convex functions.
	
\begin{proposition}\label{prop:gg-GA}
Let $\Phi \colon [0, +\infty) \to \R$ and $H_\Phi(X)$ be as in Definition~\ref{def:orl}. 
Then $H_\Phi$ is geometrically convex if and only if $\Phi$ is GA-convex.
\end{proposition}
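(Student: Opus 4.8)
The plan is to exploit the one-to-one correspondence between $H_\Phi$ and a shortfall-type monetary risk measure, together with the logarithmic change of variables that turns geometric convexity into ordinary convexity (Lemma~\ref{lemma:rho_trho}(d)). Concretely, set $\psi(t):=\Phi(e^t)$ for $t\in\R$, so that $\psi$ is nondecreasing and left-continuous with $\psi(0)=1$ and $\lim_{t\to+\infty}\psi(t)=+\infty$, and observe that the substitution $X = e^Y$, $k = e^c$ gives
\[
\E[\Phi(X/k)] = \E[\psi(Y-c)],
\]
so that $\log H_\Phi(e^Y) = \inf\{c\in\R \mid \E[\psi(Y-c)]\le 1\}=:\rho_\psi(Y)$ is exactly the shortfall risk measure associated with the loss function $\psi$. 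By \eqref{eq:trho-1}--\eqref{eq:trho-2}, $H_\Phi$ is geometrically convex if and only if $\rho_\psi$ is convex; and it is standard (see \cite{W06}, \cite{BB15}) that $\rho_\psi$ is convex if and only if $\psi$ is convex. Finally, $\psi(t)=\Phi(e^t)$ is convex precisely when $\Phi$ is GA-convex, since $\Phi(x^\lambda y^{1-\lambda}) = \psi(\lambda\log x + (1-\lambda)\log y)$ and $\lambda\Phi(x)+(1-\lambda)\Phi(y) = \lambda\psi(\log x)+(1-\lambda)\psi(\log y)$, so the GA-convexity inequality for $\Phi$ is verbatim the midpoint/convexity inequality for $\psi$ along arbitrary $t=\log x$, $s=\log y$.

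An alternative, more self-contained route avoids citing the shortfall-convexity equivalence and instead mirrors the proof of Proposition~\ref{prop:orl}(d). For the ``if'' direction, assume $\Phi$ is GA-convex. Given $X,Y\in L^\infty_{++}$ and $\lambda\in(0,1)$, normalize by $a:=H_\Phi(X)$ and $b:=H_\Phi(Y)$ and use Proposition~\ref{prop:orl}(b) to get $\E[\Phi(X/a)]\le 1$ and $\E[\Phi(Y/b)]\le 1$. Applying GA-convexity pointwise,
\[
\Phi\!\left(\left(\tfrac{X}{a}\right)^\lambda\!\left(\tfrac{Y}{b}\right)^{1-\lambda}\right) \le \lambda\,\Phi\!\left(\tfrac{X}{a}\right) + (1-\lambda)\,\Phi\!\left(\tfrac{Y}{b}\right),
\]
so taking expectations gives $\E[\Phi(X^\lambda Y^{1-\lambda}/(a^\lambda b^{1-\lambda}))]\le 1$, whence $H_\Phi(X^\lambda Y^{1-\lambda})\le a^\lambda b^{1-\lambda} = H_\Phi^\lambda(X)H_\Phi^{1-\lambda}(Y)$; the case where $X$ or $Y$ only lies in $L^\infty_+$ (so a factor could vanish) is handled by the monotone/approximation argument or is trivial as in Lemma~\ref{lemma:convex-GGconvex}. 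For the ``only if'' direction, suppose $H_\Phi$ is geometrically convex but $\Phi$ is not GA-convex; equivalently $\psi=\Phi\circ\exp$ fails midpoint convexity, so there are $t_1,t_2$ with $\psi((t_1+t_2)/2) > (\psi(t_1)+\psi(t_2))/2$. Then, exactly as in the proof of Proposition~\ref{prop:orl}(d), one can find $s\in\R$ and $p\in(0,1)$ and a sandwiching inequality
\[
p\,\psi(s) + (1-p)\,\psi\!\left(\tfrac{t_1+t_2}{2}\right) > 1 > p\,\psi(s) + (1-p)\,\tfrac{\psi(t_1)+\psi(t_2)}{2},
\]
and build three-valued random variables $X = e^s 1_A + e^{t_1}1_B + e^{t_2}1_C$, $Y = e^s 1_A + e^{t_2}1_B + e^{t_1}1_C$ on disjoint sets $A,B,C$ with $P(A)=p$, $P(B)=P(C)=(1-p)/2$, so that $X^{1/2}Y^{1/2} = e^s 1_A + e^{(t_1+t_2)/2}1_{B\cup C}$. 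The sandwich gives $\E[\Phi(X)]=\E[\Phi(Y)]<1$ but $\E[\Phi(X^{1/2}Y^{1/2})]>1$, so $H_\Phi(X),H_\Phi(Y)\le 1 < H_\Phi(X^{1/2}Y^{1/2})$, contradicting geometric convexity at $\lambda=1/2$.

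I expect the main obstacle to be the ``only if'' direction, specifically justifying the existence of the auxiliary point $s$ and probability $p$ producing the strict sandwich \eqref{eq:convexity}-type inequality: one must use that $\psi$ ranges over an interval containing $1$ in its interior (guaranteed by $\psi(0)=1$, $\psi$ nondecreasing, $\lim_{+\infty}\psi=+\infty$, together with left-continuity to control jumps) and that strict failure of midpoint convexity gives a genuine gap, then choose $p$ close enough to the appropriate endpoint. The left-continuity hypothesis on $\Phi$ (hence on $\psi$) also needs care: it ensures the infimum defining $H_\Phi$ behaves well and that $\E[\Phi(X/k)]\le 1$ at $k=H_\Phi(X)$, as recorded in Proposition~\ref{prop:orl}. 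A clean way to organize the write-up is to prove the midpoint version throughout and then invoke monotonicity of $\Phi$ to upgrade midpoint GA-convexity to full GA-convexity, exactly as at the end of the proof of Proposition~\ref{prop:orl}(d); alternatively, simply cite the shortfall-convexity correspondence and Lemma~\ref{lemma:rho_trho}(d) for a two-line argument.
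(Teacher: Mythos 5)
Your second, self-contained argument is essentially the paper's own proof: the same normalization-and-expectation step (using $\E[\Phi(X/H_\Phi(X))]\le 1$) for the ``if'' direction, and the same three-point construction with the sandwich inequality, followed by the monotonicity upgrade from midpoint to full GA-convexity, for the ``only if'' direction, so the proposal is correct and matches the paper. Only your first, citation-based route is shakier as written: the ``only if'' half of the claim that a shortfall risk measure is convex if and only if its given loss function $\psi$ is convex is not a clean quotable result from \cite{W06} or \cite{BB15} (it is precisely the nontrivial content that your explicit construction, and the paper's, supplies), so the self-contained route should be regarded as the actual proof.
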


\begin{proof}
We first prove the `if' part.\ 
Notice first that, for each $X \in L^{\infty}_{+}$ and each $k > H_\Phi(X)$, it holds by definition that $\E[\Phi(X/k)]\leq 1$. 
Since $\Phi$ is nondecreasing and left-continuous an application of the monotone convergence theorem shows that $\E[\Phi(X/H_\Phi(X))]\leq 1$. 
Let now $X,Y\in L_{+}^{\infty}$ and $\lambda \in (0,1)$.
From the GA-convexity of $\Phi$ it follows that
\begin{align*}
&\E\sqb{\Phi\of{\of{\frac{X}{H_\Phi(X)}}^{\lambda}\of{\frac{Y}{H_\Phi(Y)}}^{1-\lambda}}} \\
&\leq  \lambda \E\sqb{\Phi\of{\frac{X}{H_\Phi(X)}}}+(1-\lambda)\E\sqb{\Phi\of{\frac{Y}{H_\Phi(Y)}}}  \leq 1,
\end{align*}
which implies
\[
H_\Phi \left ( \frac{X^{\lambda}Y^{1-\lambda}}{H_\Phi(X)^{\lambda}H_\Phi(Y)^{1-\lambda}} \right) \leq 1,
\]
which from positive homogeneity gives
\[
H_\Phi \of{X^{\lambda}Y^{1-\lambda}}\leq H_\Phi (X)^{\lambda}H_\Phi (Y)^{1-\lambda}.
\]
To prove the `only if' part, we first assume by contradiction that $\Phi$ is not GA-midconvex, i.e., there exist $x_1,x_2 \geq 0$ such that $\Phi \left (\sqrt{x_1 x_2} \right )>(\Phi(x_1)+\Phi(x_2))/2.$ Then, there exist $z \in [0,+\infty) $ and $\lambda \in (0,1)$ such that
\begin{equation}\label{eq:GA-midconvexity}
\lambda \Phi(z)+(1-\lambda)\Phi(\sqrt{x_1 x_2})>1>\lambda \Phi(z)+(1-\lambda) \frac{\Phi(x_1)+\Phi(x_2)}{2}.
\end{equation}
Take disjoint sets $A,B,C \in \F$ with $P(A)=\lambda$, $P(B)=P(C)=\frac{1-\lambda}{2}$ and let
\begin{align*}
X&=z 1_A+x_1 1_B+ x_2 1_C\\
Y&=z 1_A+x_2 1_B+ x_1 1_C\\
Z&=z 1_A+\sqrt{x_{1}x_{2}}1_{B\cup C}=\sqrt{XY}.
\end{align*}
From \eqref{eq:GA-midconvexity}, we have $\E[\Phi(Z)]>1$ and $\E[\Phi(X)]=\E[\Phi(Y)]<1$, which contradicts with the geometric convexity of $H_{\Phi}$.\ 
As a consequence, $\Phi$ is GA-midconvex. 
Since $\Phi$ is also nondecreasing the thesis follows.
Indeed, this is seen as follows.
By the induction hypothesis, $\Phi$ is rationally GA-convex. 
Now let us take $x,y\geq 0$ and $\lambda \in (0,1)$. 
Without loss of generality, assume that $x>y$. 
Take a $q\in\mathbb{Q}\cap [0,1]$ such that $q>\lambda$. 
By monotonicity and rational GA-convexity of $\Phi$, we have
\begin{equation*}
\Phi(x^{\lambda}y^{1-\lambda})\leq \Phi(x^qy^{1-q})\leq q\Phi(x)+(1-q)\Phi(y).
\end{equation*}
Since this inequality is valid for any $q\in\mathbb{Q}\cap [0,1]$ such that $q>\lambda$, we can take the infimum over the set $\Lambda_Q:=\{q\in \mathbb{Q}:1\geq q>\lambda\}$ and obtain 
\begin{equation*}
\Phi(x^{\lambda}y^{1-\lambda})
\leq\inf_{q\in \Lambda_Q}q \Phi(x)+(1-q)\Phi(y)
=\lambda \Phi(x)+(1-\lambda)\Phi(y),
\end{equation*}
which gives the GA-convexity of $\Phi$.
\end{proof}
\medskip
	
Since a nondecreasing and convex function is GA-convex, it follows that a convex Orlicz premium is also geometrically convex.\ The converse does not hold, an example being the logarithmic certainty equivalent, which is also the Orlicz premium corresponding to $\Phi(x)=1+\log(x)$.
	
\subsection{Axiomatization based on the properties of the multiplicative acceptance set}
This is Theorem~2 in \cite{BLR18} that we report below for convenience.
\begin{theorem}
Let $\trho \colon \M_{1,c}(0,+\infty)  \to \R$ be a law-invariant return risk measure and let
$\B_{\tilde{\rho}}$ be the corresponding multiplicative acceptance set as in Definition~\ref{def:ret-rm} (now at the level of distributions).  
Assume that
\begin{enumerate}[left=0pt, itemsep=0pt]
\item [a)] $\B_{\tilde{\rho}}$ and  $\B^c_{\tilde{\rho}}$ are convex with respect to mixtures, i.e., for each $\lambda\in(0,1)$, $F,G\in\B_{\tilde{\rho}}$ $\Rightarrow \lambda F+(1-\lambda)G \in \B_{\tilde{\rho}}$, and similarly for $\B^c_{\tilde{\rho}}$ 
\item [b)] $\B_{\tilde{\rho}}$ is $\tpsi$-weakly closed for some gauge function $\tpsi$
\item [c)] for each $0 <\tilde{x}<1$ and $\tilde{y} >1$, there exists $\alpha \in (0,1)$ such that
\[
\alpha \delta_{\tilde{x}} + (1-\alpha) \delta_{\tilde{y}} \in \B_{\tilde{\rho}}.
\]
\end{enumerate}
Then there exists a nondecreasing function $\Phi$ that satisfies \\
$\Phi (0) < 1 < \Phi (+\infty)$ such that $\trho(F)=H_\Phi(F)$.
\end{theorem}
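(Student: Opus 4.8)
The plan is to identify the multiplicative acceptance set $\B:=\B_{\trho}$ as a ``half-space'' in $\M_{1,c}(0,+\infty)$ cut out by integration against a single function, and then to read off the required properties of that function. First I would record that positive homogeneity together with $\trho(1)=1$ gives $\trho(\delta_x)=x$ for every $x>0$, and that, by positive homogeneity, $\trho(F)\le k$ if and only if the law $\nu_k$ of $X/k$ (for $X\sim F$) belongs to $\B$. Since $\int\Phi\,\diff\nu_k=\int\Phi(x/k)\,\diff F(x)$, it follows that if one can produce a function $\Phi$ with $\B=\cb{F:\int\Phi\,\diff F\le 1}$, then $\cb{F:\trho(F)\le k}=\cb{F:\int\Phi(x/k)\,\diff F(x)\le 1}$ for every $k>0$, and taking the infimum over $k$ yields $\trho(F)=H_\Phi(F)$. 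So the problem reduces to constructing a nondecreasing $\Phi$ with $\Phi(1)=1$ and $\Phi(0)<1<\Phi(+\infty)$ such that $\B=\cb{F:\int\Phi\,\diff F\le 1}$.

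The construction of $\Phi$ by separation is the step I expect to be the main obstacle. By (a) the set $\B$ is convex and so is its complement $\B^c$, which, by (b), is open in the $\tpsi$-weak topology and is nonempty since $\delta_y\in\B^c$ for $y>1$. Working in the space of compactly supported signed measures equipped with the $\tpsi$-weak topology, whose continuous dual is identified with integration against continuous functions dominated by a constant multiple of $\tpsi$, I would separate the $\tpsi$-weakly closed convex set $\B$ from its convex complement --- along the lines of the treatment of law-invariant monetary risk measures with convex level sets in \cite{W06} and \cite{DBBZ14} --- to obtain a continuous $h$ with $|h|\le c\tpsi$ and a constant $a$ such that $\B=\cb{F:\int h\,\diff F\le a}$. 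The delicate points are choosing the topology so that hypothesis (b) is exactly topological closedness, identifying the dual, and using the fact that $\B$ and $\B^c$ partition $\M_{1,c}(0,+\infty)$ to conclude that the sublevel set is \emph{exactly} $\B$ and that the separating functional may be taken continuous rather than merely measurable. Using $\int 1\,\diff F=1$ one rewrites this as $\B=\cb{F:\int\Phi\,\diff F\le 1}$ with $\Phi:=h-a+1$ continuous, and since $\delta_1$ is a boundary point of $\B$ --- it lies in $\B$ and is the $\tpsi$-weak limit of the points $\delta_{1+\varepsilon}\in\B^c$ --- one gets $\Phi(1)=1$.

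The remaining properties of $\Phi$ follow from elementary arguments with one- and two-point distributions. Evaluating the representation at $\delta_x$ and using $\trho(\delta_x)=x$ gives $\Phi(x)\le 1\iff x\le 1$, hence $\Phi(x)>1$ for $x>1$. If $\Phi$ were identically $1$ on $(0,1)$, then for any $0<x<1<y$ one would have $\int\Phi\,\diff(\alpha\delta_x+(1-\alpha)\delta_y)=\alpha+(1-\alpha)\Phi(y)>1$ for every $\alpha\in(0,1)$, contradicting (c); hence $\Phi(s_0)<1$ for some $s_0\in(0,1)$, so by continuity the range of $\Phi$ contains a nondegenerate interval with $1$ in its interior. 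For monotonicity I would use that on a nonatomic space the monotonicity of $\trho$ is equivalent to monotonicity with respect to first-order stochastic dominance of distributions: for $x_0<y_0$, $u>0$ and $\alpha\in(0,1)$ the distribution $\alpha\delta_{x_0}+(1-\alpha)\delta_u$ is stochastically dominated by $\alpha\delta_{y_0}+(1-\alpha)\delta_u$, so $\alpha\Phi(y_0)+(1-\alpha)\Phi(u)\le 1$ forces $\alpha\Phi(x_0)+(1-\alpha)\Phi(u)\le 1$; were $\Phi(x_0)>\Phi(y_0)$, choosing $\alpha$ small and $u$ with $\Phi(u)=\tfrac{1-\alpha\Phi(y_0)}{1-\alpha}$ --- admissible because this value lies near $1$ and inside the range --- would violate this implication. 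Thus $\Phi$ is nondecreasing, so $\Phi(+\infty)\ge\Phi(2)>1$; and if $\Phi(0)=1$ then $\Phi\equiv 1$ on $(0,1]$, contradicting $\Phi(s_0)<1$, so $\Phi(0)<1$. This establishes $\Phi(0)<1<\Phi(+\infty)$ together with $\trho=H_\Phi$.
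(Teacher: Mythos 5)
First, note that this paper does not prove the statement at all: it is quoted verbatim as Theorem~2 of \cite{BLR18}, and the proof there proceeds by the multiplicative--additive correspondence of Lemma~\ref{lemma:rho_trho}, i.e.\ by passing to $\rho(Y)=\log\trho(e^{Y})$, observing that hypotheses (a)--(c) become exactly the hypotheses of Weber's characterization of shortfall risk measures (\cite{W06}, see also \cite{DBBZ14}) with gauge $\psi(t)=\tpsi(e^{t})$, and transforming the resulting loss function $\ell$ back into $\Phi(x)=\ell(\log x)-z_0+1$. Your reduction of the conclusion to the single identity $\B_{\trho}=\{F:\int\Phi\,\dr F\le 1\}$, and the downstream arguments ($\Phi(x)\le 1\iff x\le1$ from $\trho(\delta_x)=x$, $\Phi<1$ on $(0,1)$ from (c), $\Phi(0)<1<\Phi(+\infty)$ once monotonicity is in hand) are correct and match the standard route.

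The genuine gap is the step you yourself flag as ``the main obstacle'': the separation argument producing a continuous $h$ with $\B_{\trho}=\{F:\int h\,\dr F\le a\}$ is not available in the form you invoke it. Hahn--Banach separation of two disjoint convex sets in an infinite-dimensional locally convex space requires one of them to have nonempty interior (or a compactness hypothesis); here both $\B_{\trho}$ and $\B_{\trho}^{c}$ lie inside $\M_{1,c}(0,+\infty)$, which sits in the intersection of the hyperplane $\{\mu:\mu(0,+\infty)=1\}$ with the positive cone, and both of these have empty interior in the space of ($\tpsi$-integrable) signed measures, whether one uses the $\tpsi$-weak or the weighted total-variation topology. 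Relative openness of $\B^{c}_{\trho}$ in $\M_{1,c}$ does not repair this, and disjoint convex sets without interior points need not be separable at all in infinite dimensions. In addition, even granting a separating functional, the exact two-sided identity $\B_{\trho}=\{F:\int h\,\dr F\le a\}$ (as opposed to one-sided inclusions with boundary ambiguities) and the continuity of $h$ are further unproven claims; your monotonicity argument then leans on that continuity (you need a point $u$ with $\Phi(u)$ in a prescribed small interval near $1$), so it inherits the gap, whereas in \cite{W06}/\cite{DBBZ14} monotonicity of the loss function is obtained from the construction itself and continuity is neither claimed nor needed. Carrying out the separation rigorously in this setting is essentially the whole content of Weber's theorem; since you are already willing to cite \cite{W06} and \cite{DBBZ14}, the efficient repair is the one taken in \cite{BLR18}: apply the $\log/\exp$ correspondence, check that (a)--(c) translate into the hypotheses of the additive characterization, and pull the loss function back to an Orlicz function.
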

As we will see, the convexity with respect to mixtures (at the level of distributions) of the multiplicative acceptance set and its complement assumed in item~(a) in the theorem above, is implied by the CxLS property.

\subsection{Axiomatizations based on CxLS}
\begin{definition}
A law-invariant functional $\rho$ has the CxLS property if
\[
\rho(F)=\rho(G)=\gamma \Rightarrow \rho(\lambda F +(1-\lambda)G)=\gamma,
\]
for each $\gamma \in \R$, $F,G \in \M_{1,c} $ and $\lambda \in (0,1)$.
\end{definition}

\begin{lemma}
Let $\trho$ be a law-invariant return risk measure with CxLS. 
Then $\mathcal{B}_{\trho}$ and $\mathcal{B}_{\trho}^c$ are convex with respect to mixtures.
\end{lemma}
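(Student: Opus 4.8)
The plan is to show the two mixture-stability statements separately, but both follow from essentially the same observation: membership in $\B_{\trho}$ (resp.\ $\B_{\trho}^c$) at the level of distributions is decided by whether $\trho(F)\leq 1$ or $\trho(F)>1$, and since $\trho(1)=1$ (normalization), the value $1$ is always attained, so CxLS at level $\gamma=1$ can be invoked. First I would take $F,G\in\B_{\trho}$, i.e.\ $\trho(F)\leq 1$ and $\trho(G)\leq 1$, and $\lambda\in(0,1)$; the goal is $\trho(\lambda F+(1-\lambda)G)\leq 1$. The subtlety is that $\trho(F)$ and $\trho(G)$ need not equal $1$, so CxLS does not apply directly to $F$ and $G$. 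The fix is to interpolate against the point mass $\delta_1$ (for which $\trho(\delta_1)=1$): by positive homogeneity and monotonicity one checks that for $t\in[0,1]$ the map $t\mapsto \trho(tF+(1-t)\delta_1)$ takes the value $1$ at $t=0$ and, more importantly, using monotonicity of $\trho$ together with $F\le$ some deterministic bound, one can find a representative whose value is exactly $1$.

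More carefully, here is the cleaner route I would actually carry out. Given $F\in\B_{\trho}$ with $\trho(F)=:a\le 1$, scale: the distribution $F_{1/a}$ of $X/a$ (where $X\sim F$) satisfies $\trho(F_{1/a})=a\cdot(1/a)=1$ by positive homogeneity — but scaling destroys the mixture we want to control, so instead I would mix with $\delta_c$ for a large constant $c>0$: since $\trho$ is monotone and positively homogeneous, $\trho(\delta_c)=c$, and the function $t\mapsto \trho(tF+(1-t)\delta_c)$ is monotone in $t$ and ranges over an interval containing $1$ when $a\le 1\le c$; combined with Proposition~\ref{prop:mix-cont}-type continuity (or, more elementarily, monotonicity and an intermediate-value argument that is available because $\trho$ is real-valued and monotone along the mixture segment), there is $t_F\in(0,1]$ with $\trho(t_F F+(1-t_F)\delta_c)=1$, and similarly $t_G$ for $G$. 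If we could arrange $t_F=t_G$, then both mixed distributions lie on the level set $\{\trho=1\}$, CxLS gives that any further mixture of them stays at level $1$, and then a second application of monotonicity (the original $\lambda F+(1-\lambda)G$ is dominated in a suitable sense by such a mixture) yields $\trho(\lambda F+(1-\lambda)G)\le 1$. The honest statement of the argument for $\B_{\trho}^c$ is symmetric, replacing $\le 1$ by $>1$ and mixing with a small constant $\delta_c$, $c<1$, and using CxLS at level $1$ again to propagate the value.

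The main obstacle is exactly the matching step — getting a common interpolation parameter so that CxLS (which requires two distributions at the \emph{same} level $\gamma$) can be applied. I expect this is handled by observing that one can always first replace $F$ and $G$ by mixtures $F':=\lambda' F+(1-\lambda')\delta_c$ and $G':=\mu' G+(1-\mu')\delta_c$ lying exactly on $\{\trho=1\}$, then noting that $\lambda F+(1-\lambda)G$ is itself a mixture of $F'$, $G'$, and $\delta_c$, and finally using CxLS on $F',G'$ together with $\trho(\delta_c)=c\ge 1$ (resp.\ $\le 1$) and monotonicity of $\trho$ along the resulting two-step mixture to conclude the one-sided inequality. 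The real-valuedness and monotonicity of $\trho$ along mixture segments (already implicit in Definition~\ref{def:ret-rm}) is what makes the intermediate-value selection of $\lambda',\mu'$ legitimate. Once the matching is in place, both halves of the lemma are immediate, so I would structure the write-up as: (1) reduce to level $\gamma=1$; (2) the interpolation-with-$\delta_c$ lemma producing representatives on $\{\trho=1\}$; (3) apply CxLS and monotonicity to close each of $\B_{\trho}$ and $\B_{\trho}^c$.
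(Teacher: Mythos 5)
There is a genuine gap, and it sits exactly at the step you flag as the ``main obstacle.'' Your construction needs, for each $F$ with $\trho(F)\le 1$ (resp.\ $>1$), a mixing weight $t_F$ with $\trho\bigl(t_F F+(1-t_F)\delta_c\bigr)=1$ exactly. The lemma's hypotheses are only normalization, monotonicity, positive homogeneity, law invariance and CxLS; there is no convexity assumption, so Proposition~\ref{prop:mix-cont} (which requires geometric convexity) is not available, and monotonicity of $t\mapsto\trho\bigl(tF+(1-t)\delta_c\bigr)$ alone does not give the intermediate value property. The gap is not cosmetic: quantile-type return risk measures satisfy every hypothesis of the lemma, and for them this map can jump over the level $1$ (e.g.\ $\trho=q_{1/2}$, $F=\delta_2$, $c=1/2$ gives only the values $1/2$ and $2$ along the segment), so the required $t_F$ need not exist. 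A second, smaller defect: for $\B_{\trho}^c$ your level-$1$ argument would at best deliver $\trho(\lambda F+(1-\lambda)G)\ge 1$, whereas membership in $\B_{\trho}^c$ requires the strict inequality $>1$.

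The missing idea is that CxLS may be invoked at \emph{any} common level $\gamma$, not only at $\gamma=1$, and that positive homogeneity lets you match levels without any continuity. You actually write down the right device and then discard it: scaling does not ``destroy the mixture we want to control,'' because the scaled variable is comparable to the original and monotonicity transfers the conclusion back. Concretely, for $F,G\in\B_{\trho}^c$ take independent $X\sim F$, $Y\sim G$ and an independent event $A$ with $P(A)=\lambda$ (possible on a nonatomic space); with $\trho(X)=k\,\trho(Y)$, $k\ge 1$, set $X'=X/k$, so $\trho(X')=\trho(Y)$ by positive homogeneity. CxLS at the common level $\trho(Y)$ gives $\trho(X'1_A+Y1_{A^c})=\trho(Y)>1$, and since $X\ge X'$, monotonicity yields $\trho(X1_A+Y1_{A^c})\ge\trho(Y)>1$, i.e.\ $\lambda F+(1-\lambda)G\in\B_{\trho}^c$; the argument for $\B_{\trho}$ is symmetric (scale the smaller variable up). This route needs no intermediate-value selection and delivers the strict inequality directly, which is why it succeeds where the interpolation-with-$\delta_c$ scheme cannot.
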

\begin{proof}
Let us take $F,G \in \B_{\trho}^c$ and $\lambda \in (0,1)$. 
Let $X,Y \in L_{++}^{\infty}$ such that the distributions of $X$ and $Y$ are $F$ and $G$. 
Take $A\in \mathcal{F}$ such that $P(A)=\lambda$ and $X,Y$ and $A$ are independent. 
Choosing such $X,Y$ and $A$ is possible because we are working in an atomless probability space, see Lemma~3.1 in \cite{DBBZ14}. 
Without loss of generality, assume that $\trho(X)=k\trho(Y)$ for some $k\geq 1$. 
Define $X'=X/k$ and denote its distribution by $F'$.
By positive homogeneity, we have $\trho(X')=\trho(Y)$. 
Then, $X1_A+Y1_{A^c}$ has distribution $\lambda F+(1-\lambda)G$ and $X'1_A+Y1_{A^c}$ has distribution $\lambda F'+(1-\lambda)G$. 
Since $k\geq 1$ and $X\in L_{++}^{\infty}$, we have
\[X1_A+Y1_{A^c}\geq X'1_A+Y1_{A^c}.\]
By using the monotonicity and the CxLS property, we have
\[\trho(X1_A+Y1_{A^c})\geq \trho(X'1_A+Y1_{A^c})=\trho(Y)>1,\]
which gives the convexity of $\B_{\trho}^c$ with respect to mixtures. 
Similarly, it can be proved that $\B_{\trho}$ is convex with respect to mixtures.
\end{proof}

\begin{theorem}\label{th:axiom_2}
Let $\trho \colon L_{++}^{\infty}\to (0,+\infty)$ be a law-invariant geometrically convex return risk measure with CxLS. 
Then there exists a nondecreasing GA-convex Orlicz function $\Phi \colon [0,+\infty) \to \R \cup \{+\infty\}$ such that $\trho(X)=H_\Phi(X)$.
\end{theorem}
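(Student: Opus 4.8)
The plan is to follow the strategy of the reproduced Theorem~2 of \cite{BLR18}, but to read off the representation of the acceptance set directly so that its condition~(c) is not needed---the price being that the resulting $\Phi$ may take the value $+\infty$. I would begin by collecting the structural inputs. First, by the preceding Lemma the CxLS property forces both $\B_{\trho}$ and $\B_{\trho}^{c}$ to be convex with respect to mixtures, so in the mixture-affine structure of $\M_{1,c}(0,+\infty)$ they form a complementary pair of convex sets. Second, by Lemma~\ref{lemma:rho_trho} the associated $\rho(Y)=\log\trho(\exp Y)$ is a law-invariant convex monetary risk measure, hence automatically has the Fatou property; by Lemma~\ref{lemma:Fatou} this passes to the lower-bounded Fatou property of $\trho$, and via quantile representations on the nonatomic space it yields that $\B_{\trho}$ is closed in the $\tpsi$-weak topology on $\M_{1,c}(0,+\infty)$ for a suitable gauge $\tpsi$ (equivalently, in view of Proposition~\ref{prop:mix-cont}, $\trho$ is mixture-continuous on $[0,1)$). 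Third, positive homogeneity makes $\trho$ the Minkowski gauge of $\B_{\trho}$.

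The core of the proof is then a separation argument. Since $\B_{\trho}$ is mixture-convex and $\tpsi$-weakly closed and $\B_{\trho}^{c}$ is mixture-convex, a Hahn--Banach separation in the locally convex space attached to the $\tpsi$-weak topology produces a continuous affine functional separating them, that is, a function $\phi$ with $\int\phi\,\diff F\le c$ holding exactly on $\B_{\trho}$. Using $\trho(\delta_{1})=1$ one may take $c=1$ and rescale $\phi$ to a function $\Phi$ with $\Phi(1)=1$; monotonicity of $\trho$---applied to first-order stochastically ordered two-point laws and to the fact that $\delta_{x}\in\B_{\trho}\iff x\le 1$---forces $\Phi$ to be nondecreasing, one passes to its left-continuous version without changing $\B_{\trho}$, and $\trho(\delta_{y})=y$ for all $y$ fixes the normalization $\sup\{t:\Phi(t)\le 1\}=1$. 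Since $\B_{\trho}=\{F:\int\Phi\,\diff F\le 1\}$ and $\trho$ is its gauge, this is exactly $\trho=H_{\Phi}$ with $\Phi\colon[0,+\infty)\to\R\cup\{+\infty\}$. (When $\Phi$ turns out finite with $\Phi(0)<1$---equivalently, when condition~(c) of \cite{BLR18} holds---one may instead invoke that theorem verbatim; the residual profiles, $\Phi\equiv 1$ on $(0,1]$ or $\Phi=+\infty$ beyond some point, as for $\trho=\esssup$, are precisely what the extended-real range absorbs. In monetary terms, this step says that the law-invariant convex monetary risk measure $\rho$, having CxLS, is a shortfall risk measure, cf.\ \cite{W06,BB15,DBBZ14}.)

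It remains to upgrade $\Phi$ from nondecreasing to GA-convex, and this is where the geometric convexity hypothesis is used: $H_{\Phi}=\trho$ is geometrically convex, so the argument of Proposition~\ref{prop:gg-GA} yields that $\Phi$ is GA-convex, the only point to check being the extended-real degenerate case $\trho=\esssup$, for which $\Phi$ may be taken equal to $1$ on $[0,1]$ and $+\infty$ on $(1,+\infty)$ and is then GA-convex trivially. Altogether this produces a nondecreasing, GA-convex $\Phi\colon[0,+\infty)\to\R\cup\{+\infty\}$ with $\trho=H_{\Phi}$.

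I expect the separation/representation step to be the main obstacle. One must pin down the correct topology on $\M_{1,c}(0,+\infty)$---the $\tpsi$-weak topology for the gauge inherited from $\rho$---show that $\B_{\trho}$ is closed in it (this is where the automatic Fatou property of $\rho$ and the quantile coupling are really used), verify that the continuous affine functionals are genuinely of integral form, and then perform the bookkeeping that turns the separating integrand into a nondecreasing, left-continuous, correctly normalized $\Phi$, using monotonicity, positive homogeneity and $\trho(1)=1$. The secondary subtlety is ensuring the degenerate ($\esssup$-type) profiles of $\Phi$ are absorbed by the extended-real range without invalidating the GA-convexity conclusion.
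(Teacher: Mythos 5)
Your overall architecture (mixture-convex acceptance set and complement, positive homogeneity making $\trho$ the gauge of $\B_{\trho}$, GA-convexity read off at the end via the argument of Proposition~\ref{prop:gg-GA}) is sound, but the central representation step --- that $\B_{\trho}=\{F:\int\Phi\,\diff F\le 1\}$ for a single nondecreasing integrand $\Phi$ --- is not actually established in your write-up, and the route you sketch for it has two concrete problems. First, the closedness input is wrong: the automatic Fatou property of the associated $\rho$ (hence the lower-bounded Fatou property of $\trho$) does \emph{not} give $\tpsi$-weak closedness of $\B_{\trho}$ for some gauge $\tpsi$; by the paper's own equivalence result, $\tpsi$-weak continuity corresponds to the lower-bounded \emph{Lebesgue} property, which is strictly stronger and not automatic, and it is likewise not equivalent to the mixture continuity on $[0,1)$ of Proposition~\ref{prop:mix-cont}, which only covers $\lambda\to 1^-$ failures. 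Second, even granting closedness, the Hahn--Banach separation of $\B_{\trho}$ from $\B_{\trho}^{c}$ inside the $\tpsi$-weak locally convex structure does not go through as stated: neither set has nonempty interior in the ambient space of signed measures ($\M_{1,c}$ itself has empty interior, and $\B_{\trho}^{c}$ is at best relatively open), and one must additionally prove that the separating continuous affine functional is of integral form. These are exactly the obstructions that force Weber-type proofs (and \cite{DBBZ14}) into careful finite-dimensional mixture arguments on two- and three-point laws with a patching step, rather than a one-shot infinite-dimensional separation. You flag this step as ``the main obstacle'' but do not resolve it, so as it stands the proof has a genuine gap.

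The paper avoids all of this by the reduction you only mention parenthetically: by Lemma~\ref{lemma:rho_trho}, $\rho(Y)=\log\trho(\exp Y)$ is a law-invariant convex monetary risk measure with CxLS, and Theorem~3.10 of \cite{DBBZ14} then yields a convex $\varphi\colon\R\to\R\cup\{+\infty\}$ with $\rho(X)\le 0\iff\E[\varphi(X)]\le 0$, with no topological side conditions to verify. Setting $\Phi(x):=1+\varphi(\log x)$ gives $\trho(X/k)\le 1\iff\E[\Phi(X/k)]\le 1$, hence $\trho=H_\Phi$ by positive homogeneity, and GA-convexity of $\Phi$ is immediate from convexity of $\varphi$ composed with $\log$ (no need to route through the `only if' part of Proposition~\ref{prop:gg-GA}, which is stated for real-valued $\Phi$ and would need extra care at $+\infty$). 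If you replace your separation step by this citation --- which is exactly the shortfall-representation fact you allude to via \cite{W06}, \cite{BB15}, \cite{DBBZ14} --- your argument collapses to the paper's proof; without it, the representation of the acceptance set remains unproven.
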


\begin{proof}
From the hypotheses and Lemma~\ref{lemma:rho_trho}, it follows that the corresponding $\rho$ given by \eqref{eq:trho-2} is a convex law-invariant monetary risk measure with CxLS. 
From Theorem~3.10 in \cite{DBBZ14}, there exists a convex function $\varphi \colon \R \to \R \cup \{+\infty\}$ such that $\rho(X)\leq 0$ if and only if $\E[\varphi(X)]\leq 0$. 
Letting $\Phi (x):=1+\varphi(\log(x))$, it follows that
\begin{align*}
\trho(X/k)\leq 1 &\iff \rho(\log(X/k))\leq 0 \iff \E[\varphi(\log(X/k))]\leq 0 \\
&\iff \E[\Phi (X/k)]\leq 1.
\end{align*}
From the convexity of $\varphi$, it follows that for each $x,y>0$ and $\lambda \in (0,1)$,
\begin{align*}
\Phi(x^{\lambda}y^{1-\lambda})&=1+\varphi(\log(x^{\lambda}y^{1-\lambda})=1+\varphi(\lambda \log(x)+(1-\lambda)\log(y))\\&\leq 1+\lambda \varphi(\log(x))+(1-\lambda)\varphi(\log(y))=\lambda \Phi(x)+(1-\lambda)\Phi(y),
\end{align*}
which shows the GA-convexity of $\Phi$.
\end{proof}
\medskip

Notice the consistency between Proposition~\ref{prop:gg-GA} and Theorem~\ref{th:axiom_2}.
Notice also that under the assumptions of Theorem~\ref{th:axiom_2}, the function $\Phi$ is not necessarily convex as the example of logarithmic certainty equivalents shows.

\begin{theorem}\label{th:axiom_3}
Let $\trho \colon L_{++}^{\infty}\to (0,+\infty)$ be a law-invariant convex return risk measure with CxLS. 
Then there exists a nondecreasing convex Orlicz function $\Phi \colon [0,+\infty) \to \R \cup \{+\infty\}$ such that $\trho(X)=H_\Phi(X)$.
\end{theorem}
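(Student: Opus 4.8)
The plan is to bootstrap from Theorem~\ref{th:axiom_2}. By Lemma~\ref{lemma:convex-GGconvex} a convex return risk measure is automatically geometrically convex, so $\trho$ satisfies the hypotheses of Theorem~\ref{th:axiom_2}; hence there is a nondecreasing, left-continuous, GA-convex Orlicz function $\Phi\colon[0,+\infty)\to\R\cup\{+\infty\}$ with $\Phi(1)=1$, $\lim_{x\to+\infty}\Phi(x)=+\infty$ and $\trho(X)=H_\Phi(X)$ for every $X\in L^\infty_{++}$; moreover, inspecting the proof of that theorem one may take $\Phi(x)=1+\varphi(\log x)$ with $\varphi$ convex, so $\Phi$ is continuous on the interior of its effective domain. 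It remains only to strengthen GA-convexity of $\Phi$ to convexity, and this is where the hypothesis that $\trho$ itself --- not merely its associated monetary risk measure $\rho$ --- is convex comes in: since $H_\Phi=\trho$ is convex, I would run the `only if' direction of Proposition~\ref{prop:orl}(d).

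In detail, suppose $\Phi$ were not midpoint convex, so that $\Phi\big(\tfrac{x_1+x_2}{2}\big)>\tfrac12\big(\Phi(x_1)+\Phi(x_2)\big)$ for some $x_1,x_2\ge 0$, forcing $\Phi(x_1),\Phi(x_2)<+\infty$. Using that $\Phi$ is nondecreasing with $\Phi(0)<1$, is continuous on the interior of its domain, and attains arbitrarily large values, one can pick $z\ge 0$ with $\Phi(z)<+\infty$ and $\lambda\in(0,1)$ with
\[
\lambda\Phi(z)+(1-\lambda)\Phi\Big(\tfrac{x_1+x_2}{2}\Big)>1>\lambda\Phi(z)+(1-\lambda)\tfrac{\Phi(x_1)+\Phi(x_2)}{2}.
\]
By nonatomicity pick disjoint $A,B,C\in\F$ with $P(A)=\lambda$, $P(B)=P(C)=\tfrac{1-\lambda}{2}$ and set
\[
X=z1_A+x_1 1_B+x_2 1_C,\qquad Y=z1_A+x_2 1_B+x_1 1_C,\qquad Z=z1_A+\tfrac{x_1+x_2}{2}\,1_{B\cup C}=\tfrac{X+Y}{2}.
\]
Then $\E[\Phi(X)]=\E[\Phi(Y)]<1$, so by law invariance $H_\Phi(X)=H_\Phi(Y)\le 1$, while $\E[\Phi(Z)]>1$; since $k\mapsto\E[\Phi(Z/k)]$ is nonincreasing and right-continuous (monotone convergence, as $\Phi$ is nondecreasing, left-continuous and bounded below by $\Phi(0)$) this forces $H_\Phi(Z)>1$, contradicting $H_\Phi(Z)=H_\Phi\big(\tfrac{X+Y}{2}\big)\le\tfrac12\big(H_\Phi(X)+H_\Phi(Y)\big)\le 1$. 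Hence $\Phi$ is midpoint convex, and being nondecreasing and left-continuous it is convex on the interval where it is finite; since it equals $+\infty$ beyond that interval, $\Phi$ is a convex function $[0,+\infty)\to\R\cup\{+\infty\}$, which is the assertion.

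I expect the only real difficulties to be bookkeeping. One must handle the possibly infinite values of $\Phi$: restrict the non-midconvexity witnesses to the effective domain of $\Phi$, check that the auxiliary scalars $z,\lambda$ making both displayed strict inequalities hold can actually be found (this is where $\Phi(0)<1$ and the continuity of $\Phi$ on the interior of its domain are used), and verify that midpoint convexity plus monotonicity and left-continuity upgrades to convexity across the point where $\Phi$ jumps to $+\infty$. Finally, the degenerate case $\Phi(0)=1$ escapes the above scheme but is immediate: $\Phi(0)=1$ together with $\Phi(1)=1$ and monotonicity forces $\Phi\equiv 1$ on $[0,1]$, and combined with $H_\Phi(1)=1$ (which forces $\Phi>1$ on $(1,+\infty)$) this gives $\trho=H_\Phi=\esssup$, which is generated by the convex Orlicz function $x\mapsto\max(1,x)$; so the conclusion holds there by direct exhibition.
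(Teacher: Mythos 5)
Your proposal is correct and follows essentially the same route as the paper: convexity of $\trho$ gives geometric convexity via Lemma~\ref{lemma:convex-GGconvex}, Theorem~\ref{th:axiom_2} yields a GA-convex $\Phi$ with $\trho=H_\Phi$, and then the `only if' direction of Proposition~\ref{prop:orl}(d) upgrades $\Phi$ to convex --- the paper simply cites that proposition, whereas you re-run its midconvexity construction and add bookkeeping (infinite values of $\Phi$, the degenerate case $\Phi(0)=1$) that the paper glosses over.
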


\begin{proof}
Since a positively homogeneous, monotone convex functional defined on $L_{++}^{\infty}$ is geometrically convex, it follows from Theorem~\ref{th:axiom_2} that there exists a nondecreasing GA-convex Orlicz function $\Phi \colon [0,+\infty) \to \R \cup \{+\infty\}$ such that $\trho(X)=H_\Phi(X)$. 
Since $H_\Phi$ is convex only if $\Phi$ is convex, as has been shown in Proposition~\ref{prop:orl}, the thesis follows.
\end{proof}

\subsection{Axiomatization based on identifiability}
\begin{definition}
We say that a functional $\rho \colon  \M_{1,c} (0,+\infty ) \to (0, +\infty)$ is \emph{identifiable} if there exists at least one \emph{identification function} $I \colon (0, +\infty) \times (0, +\infty) \to \R$ that satisfies, for each $F \in  \M_{1,c}(0,+\infty)$,
\begin{align*}
\int I(x,y)\, \mathrm{d}F(y) = 0 \iff x=\rho(F),\\
\int I(x,y)\, \mathrm{d}F(y) > 0 \iff x<\rho(F).\\
\end{align*}
\end{definition}
\vspace{-0.5cm}
\begin{example}
Let $\rho(F) = \E[F]$. 
Then $\rho$ is identifiable and two possible identification functions are $I_1(x,y)=y-x$ and $I_2(x,y)=y/x-1$.
\end{example}

An identifiable functional has the CxLS property, since
\begin{align*}
&\rho(F)=\rho(G)=\gamma \Rightarrow \int I(\gamma,y)\, \mathrm{d}F(y) = 0, \int I(\gamma,y)\, \mathrm{d}G(y) = 0 \\
&\Rightarrow \int I(\gamma,y) \,\mathrm{d} (\lambda F + (1-\lambda)G) = 0 \Rightarrow  \rho (\lambda F + (1-\lambda) G) = \gamma,
\end{align*}
for each $\lambda \in (0,1)$.

\begin{theorem} \label{th:ident}
Let $\trho \colon \M_{1,c} (0,+\infty ) \to (0, +\infty)$ be an identifiable and positively homogeneous functional satisfying $\trho(1)=1$. Then there exists $\Phi \colon [0,+\infty) \to \R$ satisfying  $\Phi (1)=1$ and $\Phi (x)>1 \iff x>1$ such that
\begin{equation}
\begin{split}\label{eq:Phi}
\int \Phi(y/x)\, \mathrm{d}F(y) = 1 \iff x=\trho(F), \\
\int \Phi(y/x)\, \mathrm{d}F(y) > 1 \iff x<\trho(F). 
\end{split}
\end{equation}
Furthermore, under these assumptions, if $\trho$ is monotone then $\Phi$ is nondecreasing
and if $\trho$ is monotone and convex then $\Phi$ is nondecreasing and convex.
\end{theorem}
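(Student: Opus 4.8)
The plan is to exploit the identification function $I$ to manufacture the candidate Orlicz function $\Phi$ directly, using positive homogeneity to normalize everything to the point mass at $1$. First I would argue that identifiability plus positive homogeneity pin down the behaviour of $I$ on rescaled distributions. Given the identification function $I(x,y)$, set $x=1$ and define, for $y>0$, the function $\Phi(y) := 1 - c\,I(1,y)$ for a suitable normalizing constant $c>0$ (chosen so that $\Phi(1)=1$; note that taking $F=\delta_1$ in the defining property of $I$ forces $I(1,1)=0$, since $\trho(\delta_1)=\trho(1)=1$ by $\trho(1)=1$, so $\Phi(1)=1$ automatically for any $c$, and $c$ is instead fixed by a convenient scaling). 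The key step is then to show that the single identity $I(1,y)$ already encodes the functional at every $F$, i.e. that $\int \Phi(y/x)\,\diff F(y)=1 \iff x=\trho(F)$. The natural route is: by positive homogeneity, $\trho(F)=x$ is equivalent to $\trho(F_x)=1$ where $F_x$ is the law of $Y/x$ when $Y\sim F$; applying the identification property at the point $1$ to $F_x$ gives $\int I(1,z)\,\diff F_x(z)=0 \iff \trho(F_x)=1$, and changing variables $z=y/x$ turns the left side into $\int I(1,y/x)\,\diff F(y)=0$, i.e. $\int \Phi(y/x)\,\diff F(y)=1$. The strict inequality version follows the same way from the second line of the definition of identifiability.

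Next I would nail down the qualitative properties of $\Phi$. For $\Phi(x)>1\iff x>1$: apply the just-derived equivalence with $F=\delta_y$, which gives $\int\Phi(y/x)\,\diff\delta_y = \Phi(y/x)$, so $\Phi(y/x)=1\iff x=\trho(\delta_y)$ and $\Phi(y/x)>1\iff x<\trho(\delta_y)$; since $\trho$ is positively homogeneous and $\trho(1)=1$ we have $\trho(\delta_y)=y\,\trho(\delta_1)=y$, whence $\Phi(t)=1\iff t=1$ and $\Phi(t)>1\iff t>1$, taking $t=y/x$. For monotonicity of $\Phi$ under monotonicity of $\trho$: if $y_1\le y_2$ then $\delta_{y_1}\le\delta_{y_2}$ in the stochastic order, so $\trho(\delta_{y_1})\le\trho(\delta_{y_2})$; using $\Phi(y_i/x)>1\iff x<\trho(\delta_{y_i})$ and letting $x$ range, one reads off $\Phi(y_1/x)\le\Phi(y_2/x)$ for all $x$, i.e. $\Phi$ is nondecreasing. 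For convexity under convexity of $\trho$: here I would pass through the monetary correspondence, letting $\rho(Z):=\log\trho(\exp Z)$, which by Lemma~\ref{lemma:rho_trho} is a convex monetary risk measure, and set $\varphi(t):=\Phi(e^t)-1$, so that $\int\varphi(t-s)\,\diff F_{\log}(t)\gtreqless 0 \iff s\lessgtreqqgtr \rho(F_{\log})$ where $F_{\log}$ is the pushforward of $F$ under $\log$; the convexity of $\Phi$ composed appropriately then follows from the known fact (as used in the proof of Theorem~\ref{th:axiom_2}, via Theorem~3.10 in \cite{DBBZ14}) that the sublevel set $\{\rho\le 0\}$ is described by $\E[\varphi(\cdot)]\le 0$ with $\varphi$ convex, and translating $\varphi$ convex back to $\Phi$ GA-convex, which combined with monotonicity and the fact that a monotone GA-convex function arising this way is in fact convex (Proposition~\ref{prop:orl}(d) type argument, or directly since $\varphi$ convex and $\Phi(x)=1+\varphi(\log x)$ forces convexity only when... — actually here I would instead directly show $\Phi$ is midpoint convex by a three-point construction as in the proof of Proposition~\ref{prop:gg-GA}, using convexity of $\trho$ on suitable simple random variables).

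The main obstacle I anticipate is the passage from the pointwise identity $\int\Phi(y/x)\,\diff F(y)=1\iff x=\trho(F)$ to the \emph{equivalences with strict inequalities} in \eqref{eq:Phi} in full generality, because a priori the identification function $I$ need not have constant sign structure that survives the change of variables cleanly — one must verify that $I(1,y/x)$ inherits from $I(1,z)$ exactly the right sign behaviour, and in particular that $\Phi$ is well-defined and real-valued (the hypothesis $\Phi\colon[0,+\infty)\to\R$ with no $+\infty$ allowed is delicate when $\trho$ is only assumed monotone, not convex). A secondary subtlety is the value $\Phi(0)$: since $F\in\M_{1,c}(0,+\infty)$ has support bounded away from $0$, the argument $y/x$ never vanishes, so $\Phi(0)$ is obtained only by a left-continuity or limiting extension, and one should either define it by $\Phi(0):=\lim_{t\downarrow 0}\Phi(t)$ (possibly $-\infty$, but the statement allows $\R$-valued, so one must check finiteness from monotonicity and the normalization) or simply note it is immaterial for \eqref{eq:Phi}. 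I would handle both by working throughout with $F=\delta_y$ and finite mixtures thereof, which are dense enough and on which the positive-homogeneity bootstrap is transparent, and only at the end extend to general $F\in\M_{1,c}(0,+\infty)$ by the change-of-variables identity, which requires no extra regularity since it is a pure substitution inside the integral.
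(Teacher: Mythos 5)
Your construction of $\Phi$ and your proof of \eqref{eq:Phi} are correct in substance and take a genuinely more direct route than the paper. The paper first proves Lemma~\ref{lemma:ident}: by evaluating the identification property on calibrated two-point mixtures and using positive homogeneity it derives the factorization $I(x,y)=g(y/x)h(x)$ with $h>0$, and only then sets $\Phi=1+g$. Your rescaling argument --- apply identifiability at the point $1$ to the law $F_x$ of $Y/x$, use $\trho(F_x)=\trho(F)/x$, and change variables --- bypasses the factorization and delivers both equivalences in \eqref{eq:Phi}, for all $F\in\M_{1,c}(0,+\infty)$ at once; no density argument over point masses and finite mixtures is needed, and the obstacles you anticipate (sign structure surviving the substitution, real-valuedness of $\Phi$) are not actual issues, since $\Phi(t)=1+c\,I(1,t)$ is real-valued by the hypothesis on $I$. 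Two small corrections: the sign in your definition must be $\Phi(y)=1+c\,I(1,y)$ (with $\Phi=1-c\,I(1,\cdot)$ the inequalities in \eqref{eq:Phi} come out reversed), and no normalization of $c$ is needed --- any $c>0$ works. The value $\Phi(0)$ is immaterial for \eqref{eq:Phi}, exactly as in the paper, which also leaves it unaddressed.

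The genuine gap is in the monotonicity claim. Since $\trho(\delta_y)=y$ already follows from positive homogeneity and $\trho(1)=1$, point masses carry no information about whether $\trho$ is monotone, so no argument that only evaluates $\trho$ on Dirac measures can exploit that hypothesis. Concretely, from $\Phi(y_i/x)>1\iff x<y_i$ you only recover the threshold structure ($\Phi>1$ to the right of $1$, $\Phi<1$ to the left), and the step ``letting $x$ range, one reads off $\Phi(y_1/x)\le\Phi(y_2/x)$'' is false: $\Phi$ could oscillate arbitrarily above $1$ on $(1,+\infty)$ while keeping this sign pattern. The paper instead uses nondegenerate two-point laws: for $1<x_1<x_2$, pick $z<1$, set $\lambda=\frac{\Phi(x_2)-1}{\Phi(x_2)-\Phi(z)}\in(0,1)$ and $A$ with $P(A)=\lambda$, so that $\E[\Phi(z1_A+x_21_{A^c})]=1$ and hence $\trho(z1_A+x_21_{A^c})=1$ by \eqref{eq:Phi}; since $z1_A+x_11_{A^c}\le z1_A+x_21_{A^c}$, monotonicity of $\trho$ and \eqref{eq:Phi} give $\E[\Phi(z1_A+x_11_{A^c})]\le 1$, i.e.\ $\Phi(x_1)\le\Phi(x_2)$ (the case $x_1<1<x_2$ is trivial and $x_1<x_2<1$ is symmetric). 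Your final plan for convexity --- the three-point midconvexity construction using convexity of $\trho$, followed by ``nondecreasing $+$ midconvex $\Rightarrow$ convex'' --- is exactly the paper's argument and is fine, but note that it relies on $\Phi$ being nondecreasing, so the gap above must be closed first; the earlier detour through the monetary correspondence and Theorem~3.10 of \cite{DBBZ14} would only give GA-convexity, as you yourself recognized.
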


The proof of Theorem~\ref{th:ident} is based on the following lemma, the proof of which is postponed to the Appendix.

\begin{lemma}\label{lemma:ident}
Let $\trho \colon \M_{1,c} (0,+\infty ) \to (0, +\infty)$ be positively homogeneous with $\trho(1)=1$ and
identifiable by the function $I(x,y) \colon (0,+\infty )  \times (0,+\infty )  \to \R$. Then there exists $g \colon (0,+\infty) \to \R$ with $g(1) = 0$ and $g(t) > 0 \iff t>1$ with
\begin{align*}
\int g(y/x)\,\mathrm{d}F(y) = 0 \iff x=\trho(F),\\
\int g(y/x)\,\mathrm{d}F(y) > 0 \iff x<\trho(F).\\
\end{align*}
\end{lemma}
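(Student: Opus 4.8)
The plan is to exploit the fact that an identification function for a positively homogeneous functional can be replaced, after averaging over the natural scaling action, by one that depends on $x$ and $y$ only through the ratio $y/x$. Concretely, I would start from the given identification function $I(x,y)$ and define a candidate $g$ by fixing the base point $x=1$: the natural guess is $g(t):=I(1,t)$. The key point to verify is that the two defining equivalences for $\trho$ at a general argument $x$ can be deduced from those at $x=1$ using positive homogeneity.

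First I would note that, by positive homogeneity, $x=\trho(F)$ if and only if $1=\trho(F_{1/x})$, where $F_{c}$ denotes the law of $cY$ when $Y\sim F$; more precisely $\trho(F_{1/x}) = \trho(F)/x$, so $\trho(F_{1/x})=1 \iff x=\trho(F)$, and likewise $1<\trho(F_{1/x}) \iff x<\trho(F)$. Applying the defining property of $I$ to the measure $F_{1/x}$ at the point $1$ gives
\[
\int I(1,z)\,\mathrm{d}F_{1/x}(z) = 0 \iff 1=\trho(F_{1/x}) \iff x=\trho(F),
\]
and the analogous strict inequality. Now a change of variables $z = y/x$ in the integral against $F_{1/x}$ turns $\int I(1,z)\,\mathrm{d}F_{1/x}(z)$ into $\int I(1,y/x)\,\mathrm{d}F(y)$, so setting $g(t):=I(1,t)$ yields exactly the two equivalences claimed in the lemma.

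It then remains to check the pointwise properties $g(1)=0$ and $g(t)>0 \iff t>1$. Testing the identities with the Dirac mass $F=\delta_{t}$ for $t>0$: since $\trho(\delta_t)=\trho$ evaluated at the constant random variable $t$, which equals $t$ by positive homogeneity together with $\trho(1)=1$, we get $\int g(y/x)\,\mathrm{d}\delta_t(y) = g(t/x)$, and the lemma's equivalences say $g(t/x)=0 \iff x=t$ and $g(t/x)>0\iff x<t$. Taking $x=1$ gives $g(t)=0\iff t=1$ and $g(t)>0\iff t>1$, which is precisely the asserted sign behaviour and in particular $g(1)=0$.

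The main obstacle I anticipate is purely a matter of care rather than depth: one must be sure that the scaling map $F\mapsto F_{1/x}$ keeps us inside $\M_{1,c}(0,+\infty)$ (it does, since $x>0$ and compact supports in $(0,+\infty)$ are preserved) and that the change of variables in the integral is carried out correctly, i.e. that $\int \phi(z)\,\mathrm{d}F_{1/x}(z)=\int \phi(y/x)\,\mathrm{d}F(y)$ for the relevant integrand $\phi=I(1,\cdot)$, which is legitimate because $I(1,\cdot)$ is bounded on the (compact) support of $F$. There is also the minor point that $I$ need not itself satisfy any sign condition at $y=x$, so the sign properties of $g$ genuinely have to be extracted from the functional equivalences via the Dirac-mass test above rather than assumed; this is the only place where $\trho(1)=1$ and positive homogeneity are used to pin down $\trho(\delta_t)=t$.
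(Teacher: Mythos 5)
Your proposal is correct, and it proves the lemma by a genuinely different and more direct route than the paper. The paper's proof works at the level of the identification function itself: it first notes $\trho(\delta_y)=y$, so $I(x,y)=0\iff x=y$ and $I(x,y)>0\iff x<y$; it then builds, for $0<y_1<x<y_2$, a calibrated two-point mixture $\bar p\,\delta_{y_1}+(1-\bar p)\,\delta_{y_2}$ with $\trho=x$, applies positive homogeneity to its rescalings to obtain the functional equation $I(x,y_1)/I(x,y_2)=I(1,y_1/x)/I(1,y_2/x)$, and deduces the factorization $I(x,y)=g(y/x)\,h(x)$ with $h(x)>0$ and $g=I(1,\cdot)$, from which the lemma follows since the positive factor $h(x)$ does not affect the sign of $\int I(x,y)\,\mathrm{d}F(y)$. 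You instead keep only $g=I(1,\cdot)$ and act on the distribution rather than on $I$: applying the identification property at the point $1$ to the rescaled law $F_{1/x}\in\M_{1,c}(0,+\infty)$, using the image-measure change of variables $\int I(1,z)\,\mathrm{d}F_{1/x}(z)=\int I(1,y/x)\,\mathrm{d}F(y)$ and the distributional positive homogeneity $\trho(F_{1/x})=\trho(F)/x$, with the Dirac test $F=\delta_t$ (and $\trho(\delta_t)=t$) supplying $g(1)=0$ and $g(t)>0\iff t>1$. Both arguments are sound; yours is shorter and avoids the two-point construction entirely, while the paper's route buys a slightly stronger structural conclusion (an Osband-type scale-covariance statement: the given identification function itself factors as $g(y/x)h(x)$), which is not needed for the lemma as stated. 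Your closing remark about boundedness of $I(1,\cdot)$ on the support is not really the issue; what is needed is just that the integrals appearing in the definition of identifiability exist for every $F\in\M_{1,c}(0,+\infty)$, which is implicit in that definition and is equally relied upon by the paper.
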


\begin{proof}[Proof of Theorem \ref{th:ident}]
By applying Lemma~\ref{lemma:ident} and letting $\Phi(t)=1+g(t)$, equations \eqref{eq:Phi} follow.
To prove monotonicity of $\Phi$, take $x_1, x_2 \in [0, +\infty)$ with $x_1 < x_2$. 
If  $x_1 < 1 < x_2$, then $\Phi(x_1) < 1 < \Phi(x_2)$ so monotonicity is trivial. Assume that $1 < x_1 < x_2$. Take any $z < 1$ and set
\[
\lambda:=\frac{\Phi(x_2)-1}{\Phi(x_2)-\Phi(z)} \in (0,1).
\]
Take $A \in \F$ with $P(A)=\lambda$ and let $X_1:= z1_A + x_11_{A^c} $ and $X_2:= z1_A + x_21_{A^c} $.\ 
By construction, $X_1 \leq X_2$ and $\E[\Phi(X_2)]=1$. 
From the monotonicity of $\trho$ it follows that  $\trho(X_1) \leq \trho(X_2) =1$, so $\E[\Phi(X_1)]\leq 1=\E[\Phi(X_2)]$, from which the thesis follows.\ The proof in the case $x_1 < x_2 < 1$ is similar. 
To prove convexity of $\Phi$, we first show mid-convexity.\ 
Assume by contradiction that there exist $x_1, x_2 \in [0, +\infty)$ such that $\Phi((x_1+x_2)/2) > (\Phi(x_1) + \Phi(x_2))/2$.\ 
Then there exist $z \in [0,+\infty)$ and $\lambda \in (0,1)$ such that
\[
\lambda \Phi(z) + (1-\lambda) \Phi \left (\frac{x_1+x_2}{2} \right) > 1 > \lambda \Phi(z) + (1-\lambda) \frac{\Phi(x_1)+\Phi(x_2)}{2}.
\]
Take disjoint sets $A,B,C$ such that $P(A)=\lambda$, $P(B)=\frac{1-\lambda}{2}$ and $P(C)=\frac{1-\lambda}{2}$ and let
$X_1=z 1_A+ x_1 1_B+x_2 1_C$, $X_2=z 1_A+ x_1 1_C+x_2 1_B$, and
\[
X=z1_A+\frac{x_1+x_2}{2}1_B+\frac{x_1+x_2}{2}1_C= \frac{X_1+X_2}{2}.
\]
It holds that
\[
\E[\Phi(X_1)]=\E[\Phi(X_2)]=\lambda \Phi(z)+(1-\lambda)\of{\frac{\Phi(x_1)+\Phi(x_2)}{2}}<1,
\]
which implies $\trho(Y_1)<1$ and $\trho(Y_2)<1$.\ 
Similarly,
\[
\E[\Phi(X)]=\lambda \Phi(z)+(1-\lambda)\Phi\of{\frac{x_1+x_2}{2}}>1,
\]
which implies $\trho(X)>1$, contradicting convexity of $\trho$.\ Therefore, $\Phi$ is mid-convex.\
Since a nondecreasing mid-convex function is convex, the thesis follows.
\end{proof}

\section{Consistent scoring functions for Orlicz premia}\label{sec:scoring}

In this section, we show that Orlicz premia are elicitable and we study general families as well as specific examples of strictly consistent scoring functions.\

\subsection{Elicitability and strict consistency}
We start by recalling a few standard definitions adapted to the class of strictly positive return risk measures.
\begin{definition}[Elicitability and strictly consistent scoring functions]\label{def:eli}
A functional $\rho \colon L^\infty_{++} \to (0,+\infty)$ is elicitable if there exists a strictly consistent scoring function $S \colon (0,+\infty) \times (0,+\infty) \to [0,+\infty)$ satisfying
$S(x,y)\geq 0$ and $S(x,y)=0$ if and only if $x=y$ such that, for each $Y \in L^\infty_{++}$, it holds that
\begin{equation}\label{eq:eli}
\rho(Y) = \argmin_{x \in (0,+\infty)}\,  \E [S(x,Y)].
\end{equation}
The strictly consistent scoring function $S$ is said to be of the prediction error form if $S(x,y)=f(x-y)$ and of the relative error form if $S(x,y)=g(y/x)$, where $f,g$ are functions of a single variable.
\end{definition}

The following theorem provides a general, rich family of scoring functions that are strictly consistent with Orlicz premia.
\begin{theorem}\label{th:scores}
Let $\Phi$ be as in Definition~\ref{def:orl}, with $\Phi$ increasing. 
Let $h \colon (0,+\infty)\to (0,+\infty)$ be any integrable function. 
Then,
\begin{equation}
S_{\Phi}(x,y)=\int_{x}^y h(z)(\Phi(y/z)-1)\, \diff z
\label{eq:scscoring}
\end{equation}
is a strictly consistent scoring function for the Orlicz premium $H_{\Phi}$.
\end{theorem}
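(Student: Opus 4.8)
The plan is to verify directly that the expected score $x \mapsto \E[S_\Phi(x,Y)]$ is minimized precisely at $x = H_\Phi(Y)$, using the first-order behavior of the integral in $x$. Write
\[
\E[S_\Phi(x,Y)] = \E\!\left[\int_x^Y h(z)\big(\Phi(Y/z)-1\big)\,\diff z\right],
\]
and (after justifying the interchange of expectation and the $z$-integral via Fubini, using integrability of $h$ and boundedness of $Y$ together with the fact that $\Phi$ is finite and nondecreasing on the relevant compact range) differentiate formally in $x$. Since $\frac{\diff}{\diff x}\int_x^Y h(z)(\Phi(Y/z)-1)\,\diff z = -h(x)(\Phi(Y/x)-1)$, we get
\[
\frac{\diff}{\diff x}\,\E[S_\Phi(x,Y)] = -h(x)\,\E\!\left[\Phi(Y/x)-1\right] = -h(x)\big(\E[\Phi(Y/x)]-1\big).
\]
Because $h>0$, the sign of the derivative is the sign of $1 - \E[\Phi(Y/x)]$. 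By Proposition~\ref{prop:orl}(c) (recall $\Phi$ is increasing), $\E[\Phi(Y/x)] > 1 \iff x < H_\Phi(Y)$ and $\E[\Phi(Y/x)] < 1 \iff x > H_\Phi(Y)$, with equality exactly at $x = H_\Phi(Y)$. Hence the derivative is strictly negative for $x < H_\Phi(Y)$ and strictly positive for $x > H_\Phi(Y)$, so $\E[S_\Phi(x,Y)]$ is strictly decreasing then strictly increasing, giving a unique minimizer at $x = H_\Phi(Y)$, which is \eqref{eq:eli}.

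Separately one should check the elementary requirements that $S_\Phi$ is a genuine scoring function in the sense of Definition~\ref{def:eli}: nonnegativity and the identification $S_\Phi(x,y)=0 \iff x=y$. For $y > x$, the integrand $h(z)(\Phi(y/z)-1)$ is positive on $(x,y)$ since $z < y$ forces $y/z > 1$ hence $\Phi(y/z) > \Phi(1) = 1$ (using $\Phi$ increasing), so the integral is strictly positive; for $y < x$ the same observation applied to $\int_y^x h(z)(1-\Phi(y/z))\,\diff z$ — now $y/z < 1$ so $\Phi(y/z) < 1$ — again yields a strictly positive value; and for $y = x$ the integral is empty, hence $0$. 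This confirms $S_\Phi \geq 0$ with equality iff $x = y$.

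The main obstacle I anticipate is making the differentiation-under-the-expectation rigorous rather than formal: one must justify that $x \mapsto \E[S_\Phi(x,Y)]$ is differentiable with the claimed derivative. The cleanest route is to avoid differentiation altogether and argue directly with the sign of increments. For $x_1 < x_2$,
\[
\E[S_\Phi(x_2,Y)] - \E[S_\Phi(x_1,Y)] = -\E\!\left[\int_{x_1}^{x_2} h(z)\big(\Phi(Y/z)-1\big)\,\diff z\right] = -\int_{x_1}^{x_2} h(z)\big(\E[\Phi(Y/z)]-1\big)\,\diff z,
\]
the last step by Fubini. If $x_2 \le H_\Phi(Y)$ then on all of $(x_1,x_2)$ one has $\E[\Phi(Y/z)] - 1 \ge 0$, with strict inequality on a set of positive length, so the increment is strictly negative; if $x_1 \ge H_\Phi(Y)$ it is strictly positive by the symmetric argument. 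This establishes strict unimodality with minimum at $H_\Phi(Y)$ without any smoothness hypothesis on $\Phi$ beyond what Definition~\ref{def:orl} already gives, and it is the step where the integrability of $h$ and the use of Proposition~\ref{prop:orl}(c) do the real work.
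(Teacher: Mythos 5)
Your increment argument is exactly the paper's proof: both compute $\E[S_\Phi(x_2,Y)]-\E[S_\Phi(x_1,Y)]=-\int_{x_1}^{x_2}h(z)\bigl(\E[\Phi(Y/z)]-1\bigr)\,\diff z$ via Fubini and sign the integrand with Proposition~\ref{prop:orl}(c), the only cosmetic difference being that the paper takes one endpoint equal to $H_\Phi(Y)$ while you state it for general $x_1<x_2$. Your additional verification that $S_\Phi\geq 0$ with equality iff $x=y$ is a correct (and welcome) check of Definition~\ref{def:eli} that the paper leaves implicit.
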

\begin{proof}
Let $x>H_{\Phi}(Y)$. 
We compute
\begin{align*}
&\E[S_{\Phi}(x,Y)]-\E[S_{\Phi}(H_{\Phi}(Y),Y)]\\
&=\E\sqb{\int_x^Yh(z)\of{\Phi(Y/z)-1}\,\diff z}-\E\sqb{\int_{H_{\Phi}(Y)}^Yh(z)\of{\Phi(Y/z)-1}\,\diff z}
\\
&=-\E\sqb{\int_{H_{\Phi}(Y)}^xh(z)\of{\Phi(Y/z)-1}\,\diff z}=-\int_{H_{\Phi}(Y)}^xh(z)\E[\Phi(Y/z)-1]\,\diff z> 0,
\end{align*}
where in the last line we have used Fubini's Theorem and item~(c) of Proposition~\ref{prop:orl}. 
The same argument shows that if $x<H_{\Phi}(Y)$, then again
\[
\E[S_{\Phi}(x,Y)]-\E[S_{\Phi}(H_{\Phi}(Y),Y)]=\int_{x}^{H_{\Phi}(Y)} h(z)\E[\Phi(Y/z)-1]\, \diff z > 0,
\]
so $S_{\Phi}$ is a strictly consistent scoring function for the Orlicz premium.
\end{proof}
\medskip 

Two particularly interesting cases arise by taking $h(z)=1/z$ and $h(z)=1/z^2$. 
In the first case, with the change of variable $y/z=e^t$, we obtain
\begin{align}\label{eq:orl-score}
S_{\Phi}(x,y)&=\int_x^y \frac{1}{z}(\Phi(y/z)-1)\,\diff z=\int_0^{\log(y/x)}(\Phi(e^t)-1)\, \diff t \notag \\
&=\varphi (\log (y/x)),
\end{align}
where
\begin{equation}\label{eq:varphi}
\varphi(t)= \int_0^t \left [ \Phi (e^s) -1 \right] \mathrm{d}s.
\end{equation}
In the second case, with the change of variable $y/z=t$, we obtain
\begin{equation}\label{eq:orl-score-2}
S_{\Phi}(x,y)=\int_1^{y/x} \frac{\Phi(t)-1}{y} \, \diff t.
\end{equation}

We now present a collection of examples of strictly consistent scoring functions for Orlicz premia, using Orlicz functions that are commonly adopted in the literature. 
As we will see, in some cases the general approach based on Theorem~\ref{th:scores} or the specific forms in equations \eqref{eq:orl-score} and \eqref{eq:orl-score-2} recover scoring functions that are already known in the literature, whereas in many other cases new families of scoring functions are obtained. 
In several cases the corresponding Orlicz premium $H_\Phi$ admits an explicit expression, whereas in some other cases it has to be computed numerically. 
All our examples, including some not discussed below, are summarized in Table~\ref{tab:1}.

\begin{example}[Mean]\label{ex:mean}
Let $\Phi(x)=x$. 
Then, $H_\Phi(Y)=\E[Y]$ and, from \eqref{eq:orl-score} and \eqref{eq:varphi}, we obtain
\begin{equation}\label{eq:S-mean}
S_{\Phi}(x,y)=\frac{y}{x} - \log \left (\frac{y}{x} \right) -1,
\end{equation}
which is an alternative scoring function for the mean. 
From the classical result of \cite{S71}, as recalled e.g., in Theorem~7 of \cite{G11}, the most general class of strictly consistent scoring functions for the mean belongs to the family of Bregman functions, of the form
\[
S(x,y) = \phi(y) - \phi(x) - \phi^\prime(x)(y-x),
\]
where $\phi$ is a convex function with subgradient $\phi^\prime$. 
The scoring function given in equation~\eqref{eq:S-mean} arises if $\phi(x) = -\log x$. 
This scoring function is known as the quasi-likelihood (QLIKE) scoring function in the econometrics literature, and it is of common use in assessing forecasts of nonnegative quantities such as volatility (see e.g.,\ \cite{P11} and the references therein).
\end{example}

\begin{example}[Expectiles]\label{ex:expectile_b}
Let $\Phi_q(x)=1+ q(x-1)^{+} -(1-q)(x-1)^{-}$ with $0 < q <1$, as in Example~\ref{ex:expectiles}.\
The corresponding Orlicz premium is the $q$-expectile and, from \eqref{eq:orl-score} and \eqref{eq:varphi}, the corresponding scoring function 
is given by
\begin{equation}
S_{\Phi}(x,y)=q(y/x-\log(y/x)-1)^{+}+(1-q)(y/x-\log(y/x)-1)^{-}.
\label{eq:S-expectile}
\end{equation}
The class of strictly consistent scoring functions for expectiles has been characterized in Theorem~10 of \cite{G11}, as an asymmetric extension of Bregman functions, defined by
\[
S(x,y) = \vert 1_{x \geq y} -q \vert \{ \phi(y) - \phi(x) - \phi^\prime(x)(y-x) \},
\]
and as in the case of the mean our scoring function~\eqref{eq:S-expectile} corresponds to the case $\phi(x) = -\log x$.
\end{example}

\begin{example}[$p$-norms]\label{ex:p-norms}
Let $\Phi(x)=x^p$ with $p\geq 1$. 
Then, the Orlicz premium $H_{\Phi}$ is given by
\[
H_{\Phi}(Y)=\Vert Y \Vert _p,
\]
and the corresponding scoring function from \eqref{eq:orl-score} and \eqref{eq:varphi} takes the form
\[
S_{\Phi}(x,y)=\frac{1}{p}\of{\frac{y^p}{x^p}-1}-\log\of{\frac{y}{x}}.
\]
In view of \eqref{eq:S-mean}, which arises as a special case when $p\equiv 1$, we will refer to this novel scoring function as the \textit{PQLIKE} scoring function.

When $0<p<1$, the resulting Orlicz premium is no longer a norm, but the scoring function is still valid. 
Taking a (suitably normalized and scaled) limit for $p\rightarrow 0$, such that $\Phi(x)=1+\log(x)$ arises, 
the Orlicz premium $H_{\Phi}$ is the logarithmic certainty equivalent given by 
\[
H_{\Phi}(Y)=\exp\of{\E[\log Y]}.
\]
The corresponding scoring function from \eqref{eq:orl-score} and \eqref{eq:varphi} takes the form
\[
S_{\Phi}(x,y)=\left(\log\of{\frac{y}{x}}\right)^2.
\]
\end{example}

\begin{example}[Mean-variance]\label{ex:MV}
Let $\Phi(x)=\lambda x+(1-\lambda)x^2$ with $0\leq \lambda \leq 1$ as in Section~5.1 of \cite{HG82}.
Then, the Orlicz premium $H_{\Phi}$ is given by
\[
H_{\Phi}(Y)=
\E[Y]
\left (\frac{\lambda}{2}+\sqrt{\left(\frac{\lambda}{2}-1\right)^2 +(1-\lambda) \frac{\Var[Y]}{\E^2[Y]}} \right),
\]
and the corresponding scoring function from \eqref{eq:orl-score} and \eqref{eq:varphi} is
\[
S_{\Phi}(x,y)=\frac{\lambda y}{x}+\frac{(1-\lambda)}{2}\frac{y^2}{x^2}-\log\left(\frac{y}{x}\right)-\frac{\lambda+1}{2}.
\]
\end{example}


\begin{example}\label{ex:exp}
Let $\Phi(x)=\frac{e^{\alpha x}-1}{e^{\alpha}-1}$ with $\alpha>0$ as in Section~5.2 of \cite{HG82}. 
The corresponding Orlicz premium $H_\Phi$ does not in general admit an explicit expression. 
It is the solution of the equation
$
f_Y(\alpha/H_\Phi)=e^\alpha,
$
where $f_Y(t)=\E[\exp(tY)]$ is the moment generating function of $Y$. 
For example, if $Y$ has a Gamma distribution with shape parameter $\theta>0$ and rate parameter $\gamma>0$, we have
\[
H_{\Phi}(Y)=  \frac{\alpha}{\gamma(1-\exp(-\alpha/\theta))}
=\frac{\alpha e^{\alpha/\theta}}{\theta(e^{\alpha/\theta}-1)}\E[Y].
\]
The corresponding scoring function from \eqref{eq:orl-score-2} is given by
\[
S_{\Phi}(x,y)=\frac{e^{\alpha}}{(e^{\alpha}-1)}\of{\frac{1}{\alpha y}\of{e^{\alpha\of{\frac{y}{x}-1}}-1}+\frac{1}{y}-\frac{1}{x}}.
\]
\end{example}

%

\landscape
\renewcommand\arraystretch{2.5}
\begin{table}
{\footnotesize\begin{tabular}{|c|c|c|}
\hline \hline 
Orlicz Function $\Phi(x)$ & Orlicz Premium $H_\Phi(Y)$ & Scoring Function $S_\Phi(x,y)$ \\
\hline
$x$ & $\E[Y]$ & $\frac{y}{x}-\log\of{\frac{y}{x}}-1$ \\
\hline
$\alpha+1_{\{x>1\}}$, $0<\alpha<1$ & $q_{\alpha}(Y)$ & $(1_{\{x\geq y\}}-\alpha)\log\of{\frac{x}{y}}$ \\
\hline
$1+q(x-1)^{+}-(1-q)(x-1)^{-}$, $0<q<1$ & $e_q(Y)$ & $q\of{\frac{y}{x}-\log\of{\frac{y}{x}}-1}^{+}+(1-q)\of{\frac{y}{x}-\log\of{\frac{y}{x}}-1}^{-}$ \\
\hline
$1+\log(x)$ & $\exp(\E[\log Y])$ & $\left(\log\of{\frac{y}{x}}\right)^2$ \\
\hline
$x^p$, $p\geq 1$ & $\Vert Y \Vert_p$ & $\frac{1}{p}\of{\frac{y^p}{x^p}-1}-\log\of{\frac{y}{x}}$ \\
\hline
$\lambda x^p+(1-\lambda)x^{2p}$, $0\leq\lambda\leq 1$, $p\geq 1$ & $\of{\frac{1}{2}\of{\lambda\E[Y^p]+\sqrt{\lambda^2\E[Y^p]^2+4(1-\lambda)\E[Y^{2p}]}}}^{\frac{1}{p}}$ & $\frac{1-\lambda}{2p}\of{\frac{y}{x}}^{2p}+\frac{\lambda}{p}\of{\frac{y}{x}}^p-\log\of{\frac{y}{x}}-\frac{\lambda+1}{2p}$ \\
\hline
$x\log(e-1+x)$ & \text{ no explicit form } & $\of{\frac{y}{x}+e-1}\of{\log\of{\frac{y}{x}+e-1}-1}-\log\of{\frac{y}{x}}$ \\
\hline
$x^p+1_{\{x>1\}}x^p\log(x)$, $p\geq 1$ & \text{ no explicit form } & $-\log\of{\frac{y}{x}}+\frac{1}{p}\of{\frac{y^p}{x^p}-1}+1_{\{y>x\}}\of{\frac{1}{p}\frac{y^p}{x^p}\log\of{\frac{y}{x}}-\frac{1}{p^2}\of{\frac{y^p}{x^p}-1}}$ \\
\hline
$\frac{e^{\alpha x}-1}{e^{\alpha}-1}$, $\alpha>0$ & $\frac{\alpha e^{\alpha/\theta}}{\theta(e^{\alpha/\theta}-1)}\E[Y]$ & $\frac{e^{\alpha}}{(e^{\alpha}-1)}\of{\frac{1}{\alpha y}\of{e^{\alpha\of{\frac{y}{x}-1}}-1}+\frac{1}{y}-\frac{1}{x}}$ \\
\hline
$xe^{\alpha(x^2-1)}$, $\alpha>0$ & $\sqrt{\frac{2}{\pi}}\of{\frac{1}{2}e^{-\alpha}+\sqrt{\frac{1}{4}e^{-2\alpha}+\pi\alpha}}\sigma$ & $\frac{1}{2\alpha y}\Big(e^{\alpha \of{\frac{y^2}{x^2}-1}}-1\Big)+\frac{1}{y}-\frac{1}{x}$ \\
\hline
$\frac{e^x-x-1}{e-2}$ & \text{ no explicit form } & $\frac{1}{e-2}\of{\frac{e^{\frac{y}{x}}-\frac{1}{2}}{y}+\frac{(2-2e)x-y}{2x^2}}$ \\
\hline \hline
\end{tabular}
\caption{\footnotesize{Examples of Orlicz functions with the corresponding Orlicz premia and the corresponding strictly consistent scoring functions. 
Eqn.~\eqref{eq:orl-score} is used to obtain the scoring functions of the first eight examples. 
For the remaining examples, \eqref{eq:orl-score-2} is used.
Although $\Phi(x)=\alpha+1_{\{x>1\}}$ does not satisfy the conditions of Theorem~\ref{th:scores}, it still gives rise to a strictly consistent scoring function.
When $\Phi(x)=\frac{e^{\alpha x}-1}{e^{\alpha}-1}$, the corresponding Orlicz premium does not in general admit an explicit expression. 
The expression given in the table corresponds to the case in which $Y$ follows a Gamma distribution with shape parameter $\theta>0$ and rate parameter $\gamma>0$.
When $\Phi(x)=xe^{\alpha(x^2-1)}$, the corresponding Orlicz premium also does not in general admit an explicit expression. 
The given expression corresponds to the case in which $Y=\abs{Z}$, where $Z$ follows a normal distribution with mean $0$ and standard deviation $\sigma$.}}
\label{tab:1}}
\end{table}
\endlandscape

\subsection{Mixture representations}\label{sec:mixture}

It is clear that a scoring function of the form~\eqref{eq:scscoring} depends on the choice of the function $h$. 
Hence, the ranking of competing forecasts may depend on this choice, in particular in finite samples and under model misspecification (see e.g., \cite{P20}). 
To remedy the dependence of the ranking on the specific choice of the strictly consistent scoring function, \cite{EGJK16} develop a method to compare forecasts simultaneously with respect to a class of strictly consistent scoring functions by considering so-called Murphy diagrams. 
This method relies on the availability of a mixture representation of the strictly consistent scoring functions under consideration, 
in terms of elementary scoring functions depending on a low-dimensional parameter. 
Mixture representations for the class of strictly consistent scoring functions for quantiles and expectiles have been given in \cite{EGJK16}. 
A mixture representation of strictly consistent scoring functions for the triplet of Range Value-at-Risk and its two associated Value-at-Risks has been given in \cite{FZ21}. 
In the following theorem, we provide such a mixture representation for our new family of scoring functions in~\eqref{eq:scscoring}.

\begin{theorem}\label{th:mixrep}
Any strictly consistent scoring function for the Orlicz premium $H_{\Phi}$ of the form~\eqref{eq:scscoring} admits a representation of the form
\begin{equation}\label{eq:mixrep}
S_{\Phi}(x,y)=\int_{0}^{+\infty}S_z(x,y)\,\diff H(z),
\end{equation}
for a positive measure $H$, where $S_z(x,y)=\abs{\Phi\of{\frac{y}{z}}-1}1_{\cb{x\leq z < y}\cup \cb{y\leq z < x}}$.  
Conversely, for any choice of the positive measure $H$, we obtain a strictly consistent scoring function of the form~\eqref{eq:scscoring} for the Orlicz premium $H_{\Phi}$.
\end{theorem}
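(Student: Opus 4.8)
The plan is to prove the two directions of the equivalence by a direct, elementary computation with the change-of-variables $y/z=t$ already used for the scoring functions of relative-error form, combined with Fubini's theorem. First I would fix $x<y$ (the case $x>y$ being symmetric, and $x=y$ trivial). For the forward direction, I would start from the defining formula~\eqref{eq:scscoring}, namely $S_{\Phi}(x,y)=\int_x^y h(z)(\Phi(y/z)-1)\,\diff z$. Since $\Phi$ is nondecreasing with $\Phi(1)=1$, on the interval $x\le z<y$ we have $y/z>1$, hence $\Phi(y/z)-1\ge 0$, so that $h(z)(\Phi(y/z)-1)=h(z)\,\bigl|\Phi(y/z)-1\bigr|$; this lets me replace the integrand by $h(z)\,|\Phi(y/z)-1|\,1_{\{x\le z<y\}}$ and extend the integration to $(0,+\infty)$. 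Recognizing this as $\int_0^{+\infty} h(z)\,S_z(x,y)\,\diff z$ exactly gives~\eqref{eq:mixrep} with $\diff H(z)=h(z)\,\diff z$, a positive measure because $h$ is positive (and integrable, so $H$ is $\sigma$-finite, indeed finite on bounded intervals).

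For the converse, I would take an arbitrary positive measure $H$ on $(0,+\infty)$ (implicitly one for which the integral in~\eqref{eq:mixrep} is finite, say $H$ locally finite; I would state the needed integrability assumption explicitly, matching the integrability of $h$ in Theorem~\ref{th:scores}), and show that $S_{\Phi}(x,y):=\int_0^{+\infty} S_z(x,y)\,\diff H(z)$ has the form~\eqref{eq:scscoring}. Writing out $S_z(x,y)=|\Phi(y/z)-1|\,1_{\{x\le z<y\}}$ for $x<y$ and using that $\Phi(y/z)-1\ge 0$ there, I get $S_{\Phi}(x,y)=\int_{(x,y)}(\Phi(y/z)-1)\,\diff H(z)$, and the symmetric expression for $x>y$; if $H$ is absolutely continuous with density $h$ this is literally~\eqref{eq:scscoring}, and more generally one interprets~\eqref{eq:scscoring} as a Lebesgue--Stieltjes integral against $H$. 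Then strict consistency follows by repeating verbatim the computation in the proof of Theorem~\ref{th:scores}: for $x>H_\Phi(Y)$,
\[
\E[S_{\Phi}(x,Y)]-\E[S_{\Phi}(H_\Phi(Y),Y)]=-\int_{(H_\Phi(Y),x]}\E[\Phi(Y/z)-1]\,\diff H(z)>0,
\]
using Fubini (justified by positivity of the integrands) and item~(c) of Proposition~\ref{prop:orl}, which gives $\E[\Phi(Y/z)-1]<0$ for $z>H_\Phi(Y)$ when $\Phi$ is increasing; the case $x<H_\Phi(Y)$ is analogous with the sign reversed. This establishes that every such $S_\Phi$ is strictly consistent for $H_\Phi$.

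The main point requiring care — rather than a genuine obstacle — is the bookkeeping around the half-open indicator $1_{\{x\le z<y\}}$ and the measure-theoretic status of~\eqref{eq:scscoring}: one must be consistent about whether the endpoints $z=x$ and $z=y$ are included, and about whether~\eqref{eq:scscoring} is read as a Riemann/Lebesgue integral (when $H$ has a density $h$) or as a Lebesgue--Stieltjes integral against a general positive measure $H$. Since the value of either integral is unaffected by the behaviour on the single points $\{x\}$ and $\{y\}$ unless $H$ has atoms there, I would simply adopt the convention in the statement ($S_z$ uses $\{x\le z<y\}\cup\{y\le z<x\}$) throughout and note that it matches the original $h(z)\,\diff z$ formulation when $H=h\,\diff z$. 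The Fubini interchanges are all of nonnegative integrands (thanks to monotonicity of $\Phi$ and positivity of $h$/$H$), so no integrability subtleties arise beyond assuming $S_\Phi(x,y)<+\infty$, which holds since $\Phi$ is finite-valued and $H$ is locally finite and $x,y$ are in the interior of $(0,+\infty)$.
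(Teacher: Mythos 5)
Your proposal is correct and takes essentially the same route as the paper, whose proof just writes $S_{\Phi}(x,y)=\int_x^y\left(\Phi(y/z)-1\right)\diff H(z)=\int_0^{+\infty}S_z(x,y)\,\diff H(z)$ with $\diff H(z)=h(z)\,\diff z$ and notes that positivity of $h$ makes the integral well-defined; your handling of the converse (re-running the argument of Theorem~\ref{th:scores} against a general measure $H$, with the endpoint and Lebesgue--Stieltjes bookkeeping) is in fact more explicit than the paper's, which leaves that direction implicit. The one caveat worth stating is that \emph{strict} consistency in the converse requires $H$ to charge every nondegenerate subinterval of $(0,+\infty)$ (otherwise the displayed inequality need not be strict), which is automatic when $\diff H=h\,\diff z$ with $h>0$ as in Theorem~\ref{th:scores}.
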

\begin{proof}
By using~\eqref{eq:scscoring}, we have
\begin{align*}
S_{\Phi}(x,y)&=\int_x^y h(z)\left(\Phi(y/z)-1\right)\,\diff z \\
&=\int_x^y\left(\Phi(y/z)-1\right)\,\diff H(z) \\
&=\int_0^{+\infty}S_z(x,y)\,\diff H(z),
\end{align*}
where $\diff H(z)=h(z)\,\diff z$ and $S_z(x,y)=\abs{\Phi\of{\frac{y}{z}}-1}1_{\cb{x\leq z < y}\cup \cb{y\leq z < x}}$. 
Note that, since the function $h$ in~\eqref{eq:scscoring} is strictly positive, the Riemann integral $\int_0^{+\infty}S_z(x,y)\,\diff H(z)$ is well-defined.
\end{proof}
\medskip 
	
As a corollary, we provide a mixture representation for the scoring functions of $p$-norms.
\begin{corollary}\label{cor:mixreppnorm}
Any strictly consistent scoring function of the form~\eqref{eq:scscoring} with $\Phi(x)=x^p$, $p\geq 1$, admits a representation of the form:
\begin{equation}\label{eq:mixreppnorm}
S_{\Phi}(x,y)=\int_{0}^{+\infty}S^p_z(x,y)\,\diff H(z),
\end{equation}
for a positive measure $H$, where $S^p_z(x,y)=\abs{y^p-z^p}1_{\cb{x\leq z < y}\cup \cb{y\leq z < x}}$. 
\end{corollary}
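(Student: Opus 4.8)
The plan is to specialize Theorem~\ref{th:mixrep} to the Orlicz function $\Phi(x)=x^p$ with $p\geq 1$, which satisfies the hypotheses of Definition~\ref{def:orl} and is increasing, so that Theorem~\ref{th:scores} applies and the scoring function $S_\Phi$ of the form~\eqref{eq:scscoring} is strictly consistent for $H_\Phi(Y)=\Vert Y\Vert_p$. First I would simply substitute $\Phi(t)=t^p$ into the elementary building blocks $S_z(x,y)=\abs{\Phi(y/z)-1}1_{\{x\leq z<y\}\cup\{y\leq z<x\}}$ appearing in~\eqref{eq:mixrep}. This gives
\begin{equation*}
S_z(x,y)=\abs{\tfrac{y^p}{z^p}-1}\,1_{\{x\leq z<y\}\cup\{y\leq z<x\}}=\tfrac{1}{z^p}\,\abs{y^p-z^p}\,1_{\{x\leq z<y\}\cup\{y\leq z<x\}}.
\end{equation*}

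The only subtlety is the factor $z^{-p}$: the elementary functions in the statement of the corollary are $S^p_z(x,y)=\abs{y^p-z^p}1_{\{x\leq z<y\}\cup\{y\leq z<x\}}$, without this factor. To reconcile the two, I would absorb $z^{-p}$ into the mixing measure. Concretely, given any strictly consistent scoring function of the form~\eqref{eq:scscoring} with $\Phi(x)=x^p$, Theorem~\ref{th:mixrep} yields $S_\Phi(x,y)=\int_0^{+\infty}S_z(x,y)\,\diff H(z)$ for the positive measure $\diff H(z)=h(z)\,\diff z$; substituting the computation above,
\begin{equation*}
S_\Phi(x,y)=\int_0^{+\infty}\tfrac{1}{z^p}\,\abs{y^p-z^p}\,1_{\{x\leq z<y\}\cup\{y\leq z<x\}}\,\diff H(z)=\int_0^{+\infty}S^p_z(x,y)\,\diff\widetilde H(z),
\end{equation*}
where $\widetilde H$ is the positive measure defined by $\diff\widetilde H(z)=z^{-p}\,\diff H(z)$, i.e.\ $\diff\widetilde H(z)=z^{-p}h(z)\,\diff z$. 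Since $z^{-p}>0$ on $(0,+\infty)$, $\widetilde H$ is again a positive measure, which establishes~\eqref{eq:mixreppnorm}. (One should note that $h$ in~\eqref{eq:scscoring} is merely integrable and strictly positive, so $\widetilde H$ need not be finite, but it is a well-defined positive measure and the Riemann integral is well-defined exactly as argued in the proof of Theorem~\ref{th:mixrep}.)

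I do not expect any serious obstacle here; the content is entirely a change of variable in the mixing measure, and the main point to state carefully is just that multiplying the density by the strictly positive factor $z^{-p}$ preserves positivity of the measure, so that the class of representable scoring functions is unchanged. A converse remark could be appended for symmetry with Theorem~\ref{th:mixrep}: for any positive measure $\widetilde H$, setting $\diff H(z)=z^p\,\diff\widetilde H(z)$ recovers a scoring function of the form~\eqref{eq:scscoring}, hence strictly consistent for $\Vert\cdot\Vert_p$ by Theorem~\ref{th:scores}.
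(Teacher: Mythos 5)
Your proposal is correct and follows essentially the same route as the paper: specialize Theorem~\ref{th:mixrep} to $\Phi(x)=x^p$ and absorb the factor $z^{-p}$ into the mixing measure via $\diff\widetilde H(z)=z^{-p}\,\diff H(z)$. The paper's proof is exactly this computation, so your additional remarks on positivity of $\widetilde H$ and the converse are just harmless elaborations.
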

\begin{proof}
From~\Cref{th:mixrep} with $\Phi(x)=x^p$, we obtain
\begin{align*}
S(x,y)&=\int_0^{+\infty} \abs{\frac{y^p}{z^p}-1}1_{\cb{x\leq z < y}\cup \cb{y\leq z < x}}\,\diff H(z) \\
&=\int_0^{+\infty}\abs{y^p-z^p}1_{\cb{x\leq z < y}\cup \cb{y\leq z < x}}\,\diff \widetilde{H}(z),
\end{align*}
where $\diff\widetilde{H}(z):=\frac{1}{z^p}\,\diff H(z)$.
\end{proof}
\medskip 

We conduct two simulation experiments to illustrate how one can use Theorem~\ref{th:mixrep} to rank competing forecasts. 
In particular, we will generate Murphy diagrams for logarithmic certainty equivalents, $p$-norms and expectiles by using the corresponding elementary scoring functions.
\begin{example}\label{ex:p-normMurphy}
We first suppose that the true distribution of the outcome variable $Y$ is given by $\log(Y)|\mu\ \sim\mathcal{N}(\mu,\sigma_{Y}^{2})$ where $\mu\sim\mathcal{N}(0,\sigma_{\mu}^{2})$.
We take $\sigma_{Y}=\sigma_{\mu}=0.2$. 
We consider four different forecasters, who will be referred to as perfect, unconditional, unfocused and sign-reversed, similar to \cite{GBR07} and \cite{EGJK16}, suitably modified to the current setting. 
The perfect forecaster issues the true distribution of the outcome $Y$ as predictive distribution. 
Therefore, his/her point forecasts of the logarithmic certainty equivalent (LCE) and $p$-norm are $\exp(\mu)$ and $\exp(\mu+\sigma_{Y}^{2}p/2)$ with $\sigma_{Y}=0.2$. 
The unconditional forecaster does not have knowledge of $\mu$ and issues the unconditional distribution of $\log(Y)$ as predictive distribution: $\mathcal{N}(0,\sigma_{\mu}^{2}+\sigma_{Y}^{2})$. 
Therefore, his/her point forecasts of the LCE and $p$-norm are $\exp(0)$ and $\exp((\sigma_{\mu}^{2}+\sigma_{Y}^{2})p/2)$ with $\sigma_{Y}=\sigma_{\mu}=0.2$.  
The remaining two forecasters, unfocused and sign-reversed, have knowledge of $\mu$, but their predictive distributions fail to be ideal. 
The unfocused forecaster issues a mixture distribution as predictive distribution of $\log(Y)$, involving an independent random variable $\tau$ that takes the values $0.2$ and $-0.2$ with probability $1/2$, leading to $\tfrac{1}{2}(\mathcal{N}(\mu,\sigma_{Y}^{2})+\mathcal{N}(\mu+\tau,\sigma_{Y}^{2}))$ yielding $\exp(\mu+\tau/2)$ and $\exp(\mu+\tau/2+\sigma_{Y}^{2}p/4)$ with $\sigma_{Y}=0.2$ as the corresponding forecasts of the LCE and $p$-norm. 
The sign-reversed forecaster issues a predictive distribution of $\log(Y)$ with the sign of $\mu$ flipped: $\mathcal{N}(-\mu,\sigma_{Y}^{2})$. 
Therefore, his/her point forecasts are $\exp(-\mu)$ and $\exp(-\mu+\sigma_{Y}^{2}p/2)$. 
The point forecasts generated by the four predictive distributions are summarized in Table~\ref{tab:forecastp-norm}. 
Using $10\mathord{,}000$ simulations of sample size $1\mathord{,}000$ each, we obtain the Murphy diagrams displayed in Figure~\ref{fig:p-norm} for the LCE and $p$-norm with $p=1,2,3$.
As can be seen in Figure~\ref{fig:p-norm}, the perfect forecaster dominates the other forecasters for the LCE and all $p$-norms considered, as expected.
Although not clearly visible, the expected scores for the other three forecasters intersect in all four cases, such that none of these forecasters dominates the other. 
\begin{table}[h!]
{\small \centering
\begin{tabular}{|c|c|c|}
\hline
Forecaster & Predictive distribution of $\log(Y)$ & Point forecast of $p$-norm \\
\hline \hline
Perfect & $\mathcal{N}(\mu,\sigma_{Y}^{2})$ & $\exp(\mu+\sigma_{Y}^{2}p/2)$ \\
\hline
Unconditional & $\mathcal{N}(0,\sigma_{\mu}^{2}+\sigma_{Y}^{2})$ & $\exp((\sigma_{\mu}^{2}+\sigma_{Y}^{2})p/2)$ \\
\hline
Unfocused & $\tfrac{1}{2}(\mathcal{N}(\mu,\sigma_{Y}^{2})+\mathcal{N}(\mu+\tau,\sigma_{Y}^{2}))$ & $\exp(\mu+\tau/2+\sigma_{Y}^{2}p/4)$ \\
\hline
Sign-reversed & $\mathcal{N}(-\mu,\sigma_{Y}^{2})$ & $\exp(-\mu+\sigma_{Y}^{2}p/2)$ \\
\hline \hline
\end{tabular}
\caption{Predictive distributions and point forecasts. 
The point forecasts for the LCE arise by taking $p\equiv 0$.}
\label{tab:forecastp-norm}}
\end{table}
		
\begin{figure}[h!]
\centering
\begin{subfigure}{0.49\textwidth}
\includegraphics[width=\textwidth]{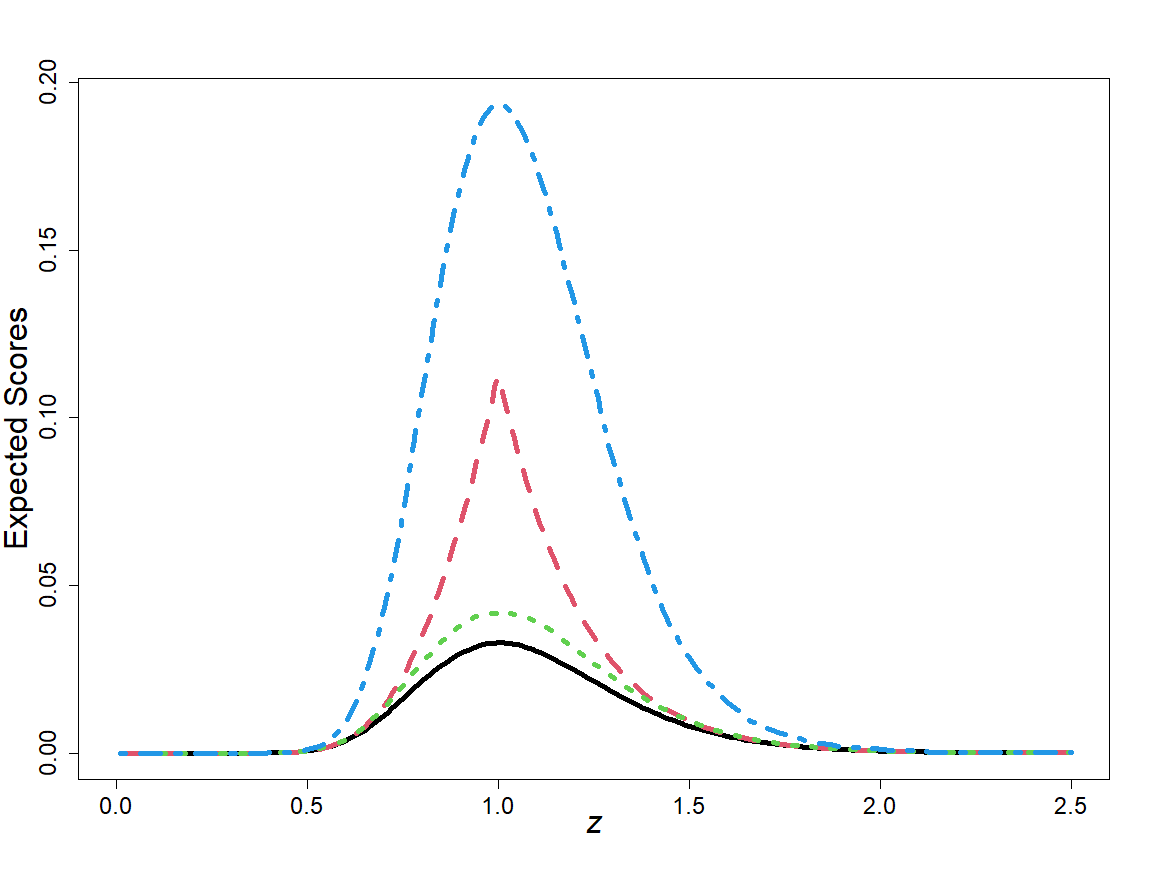}
\caption{Expected scores for\\ the LCE}
\end{subfigure}
\begin{subfigure}{0.49\textwidth}
\includegraphics[width=\textwidth]{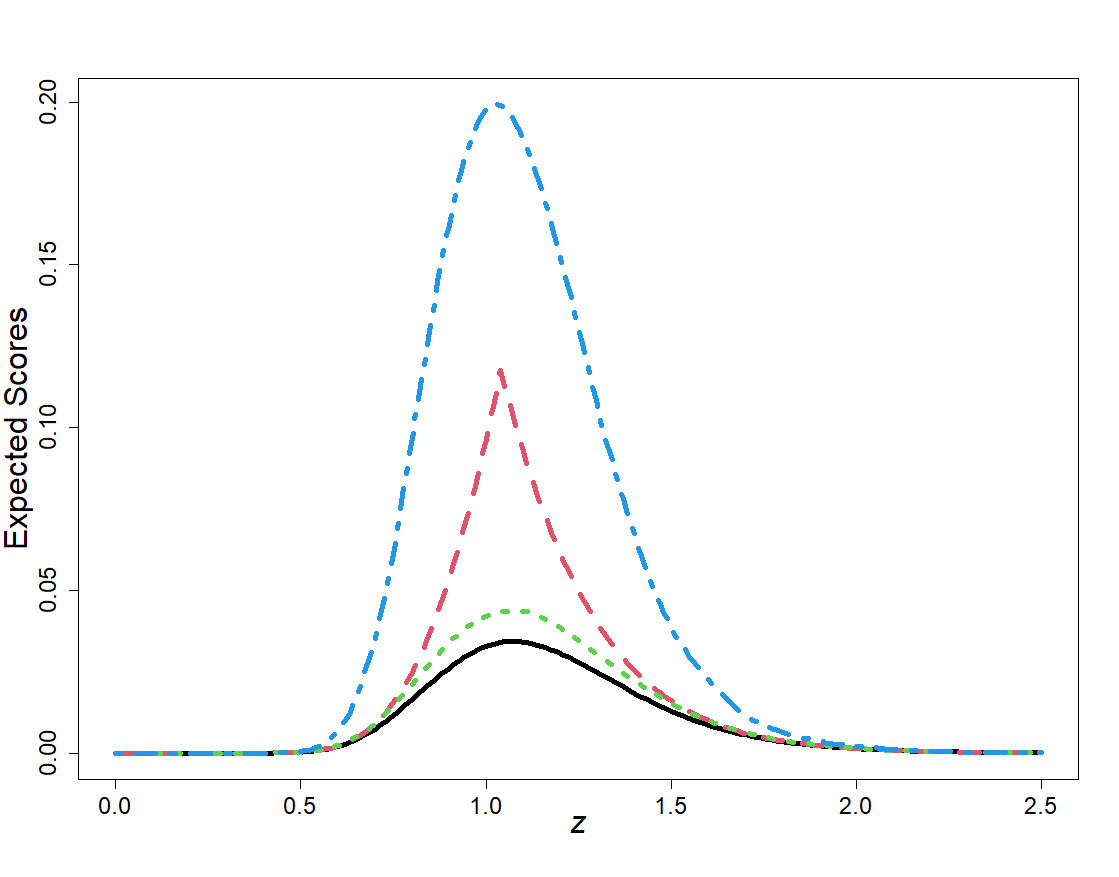}
\caption{Expected scores for\\ $p=1$}
\end{subfigure}
\hfill
\begin{subfigure}{0.49\textwidth}
\includegraphics[width=\textwidth]{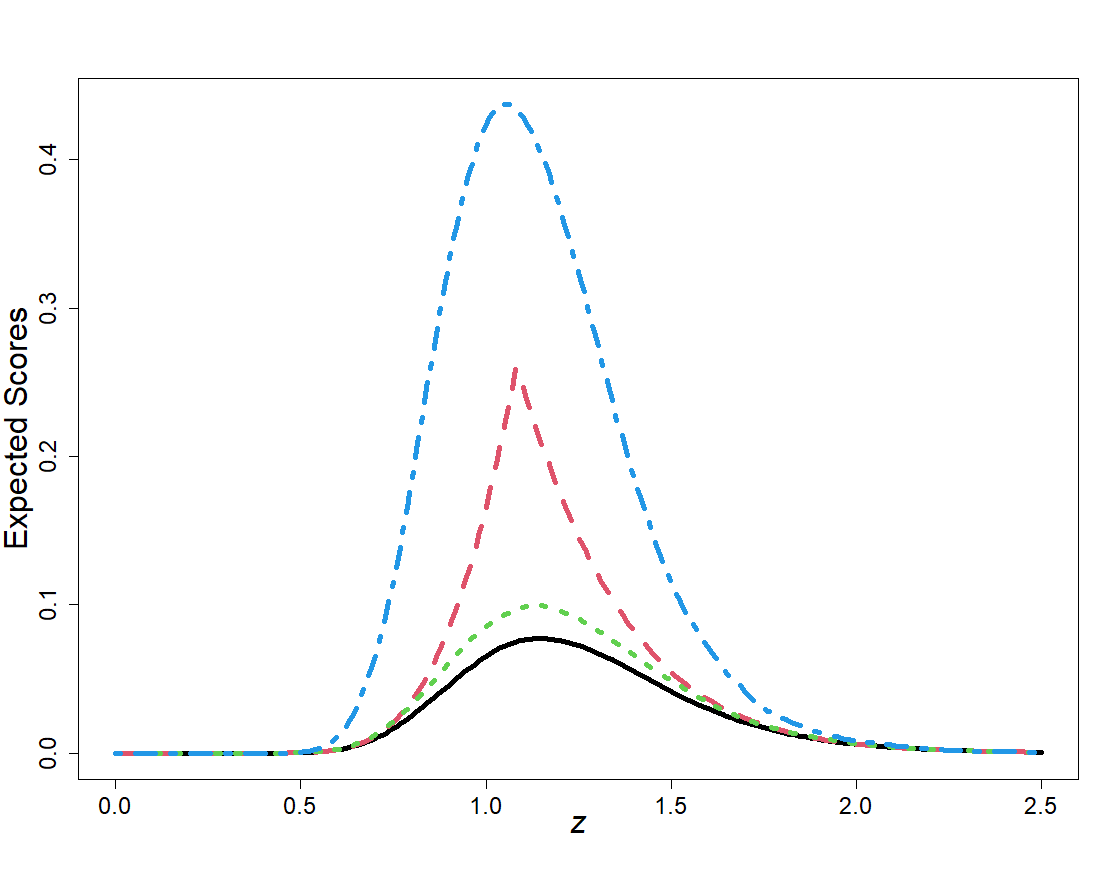}
\caption{Expected scores for \\$p=2$}
\end{subfigure}
\hfill
\begin{subfigure}{0.49\textwidth}
\includegraphics[width=\textwidth]{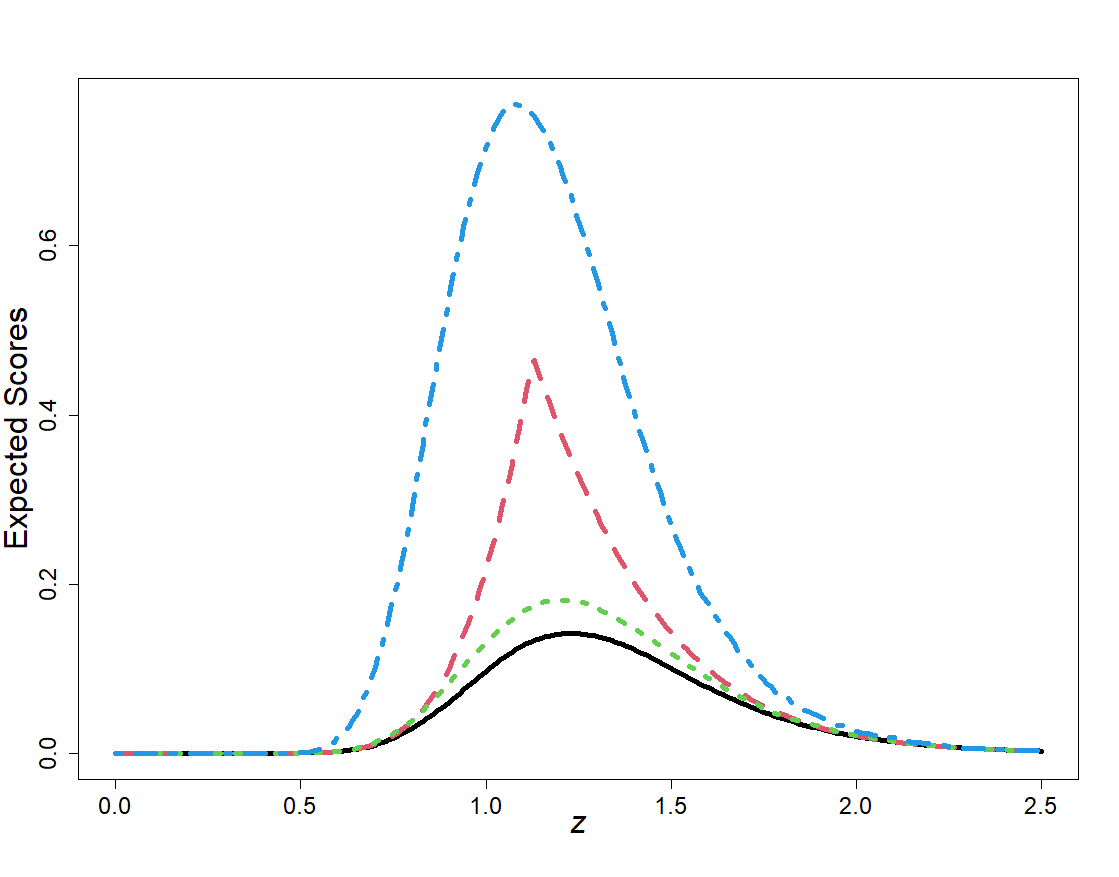}
\caption{Expected scores for \\$p=3$}
\end{subfigure}
\caption{Murphy diagrams of the four forecasters for the LCE and different $p$-norms (perfect: black; unconditional: red; unfocused: green; sign-reversed: blue)}
\label{fig:p-norm}
\end{figure}
\end{example}

\begin{example}
For $q$-expectiles, from Theorem~\ref{th:mixrep} with $\Phi_q(x)=1+ q(x-1)^{+} -(1-q)(x-1)^{-}$, $0 < q <1$, we have 
\begin{align*}
S(x,y)&=\int_0^{+\infty}\abs{q\of{\frac{y}{z}-1}^{+}-(1-q)\of{\frac{y}{z}-1}^{-}}1_{\cb{x\leq z <y}\cup \cb{y\leq z < x}}\diff H(z)\\
&=\int_0^{+\infty}\abs{q(y-z)^{+}-(1-q)(y-z)^{-}}1_{\cb{x\leq z <y}\cup \cb{y\leq z < x}}\diff \widetilde{H}(z),
\end{align*}
where $\diff \widetilde{H}(z):=\frac{1}{z}\diff H(z)$. 
Therefore, the elementary scoring function for expectiles can be expressed as 
\begin{equation}
S_z^q(x,y)=q(y-z)1_{\cb{x\leq z <y}}+(1-q)(z-y)1_{\cb{y\leq z <x}}.
\label{eq:elmscoringexpectile}
\end{equation}

Suppose now that the true distribution of the outcome variable $Y$ is given by $Y|\lambda\sim\exp(\lambda)$ where $\log(\lambda) \sim \mathcal{N}(0,\sigma^{2}_{\lambda})$.
We take $\sigma_{\lambda}=0.2$.
We consider three different forecasters: perfect, unfocused and mean-reversed, similar to Example~\ref{ex:p-normMurphy}. 
The perfect forecaster issues the true distribution of $Y$ as predictive distribution. 
The unfocused forecaster issues $\exp(\tau\lambda)$ as 
predictive distribution, involving an independent random variable $\tau$ that takes the values $5/4$ and $4/5$ each with probability $1/2$.  
The mean-reversed forecaster issues a predictive distribution with the mean reversed: $\exp(1/\lambda)$. 
The point forecasts generated by the three predictive distributions are displayed in Table~\ref{tab:forecastq-expectile}. 
Using $10\mathord{,}000$ simulations of sample size $1\mathord{,}000$ each, we obtain the Murphy diagrams displayed in Figure~\ref{fig:q-expectile} for $q=0.5,0.7,0.9,0.95$.
As we see from the figure, the perfect forecaster dominates the other forecasters, as expected. 
There is no ordering relationship between the unfocused and mean-reversed forecasters, because their expected scores intersect.
\begin{table}[h!]
{\small \centering
\begin{tabular}{|c|c|c|}
\hline
Forecaster & Predictive distribution of $Y$ & Point forecast of $q$-expectile \\
\hline \hline
Perfect & $\exp(\lambda)$& $\frac{1}{\lambda}\of{1+W\big(\frac{2q-1}{(1-q)e}\big)}$ \\
\hline
Unfocused & $\exp(\tau\lambda)$ & $\frac{1}{\tau\lambda}\of{1+W\big(\frac{2q-1}{(1-q)e}\big)}$ \\
\hline
Mean-Reversed & $\exp(1/\lambda)$ & $\lambda\of{1+W\big(\frac{2q-1}{(1-q)e}\big)}$ \\
\hline \hline
\end{tabular}
\caption{Predictive distributions and point forecasts. 
$W$ denotes the Lambert function. 
}
\label{tab:forecastq-expectile}}
\end{table}

\begin{figure}[h!]
\centering
\begin{subfigure}{0.49\textwidth}
\includegraphics[width=\textwidth]{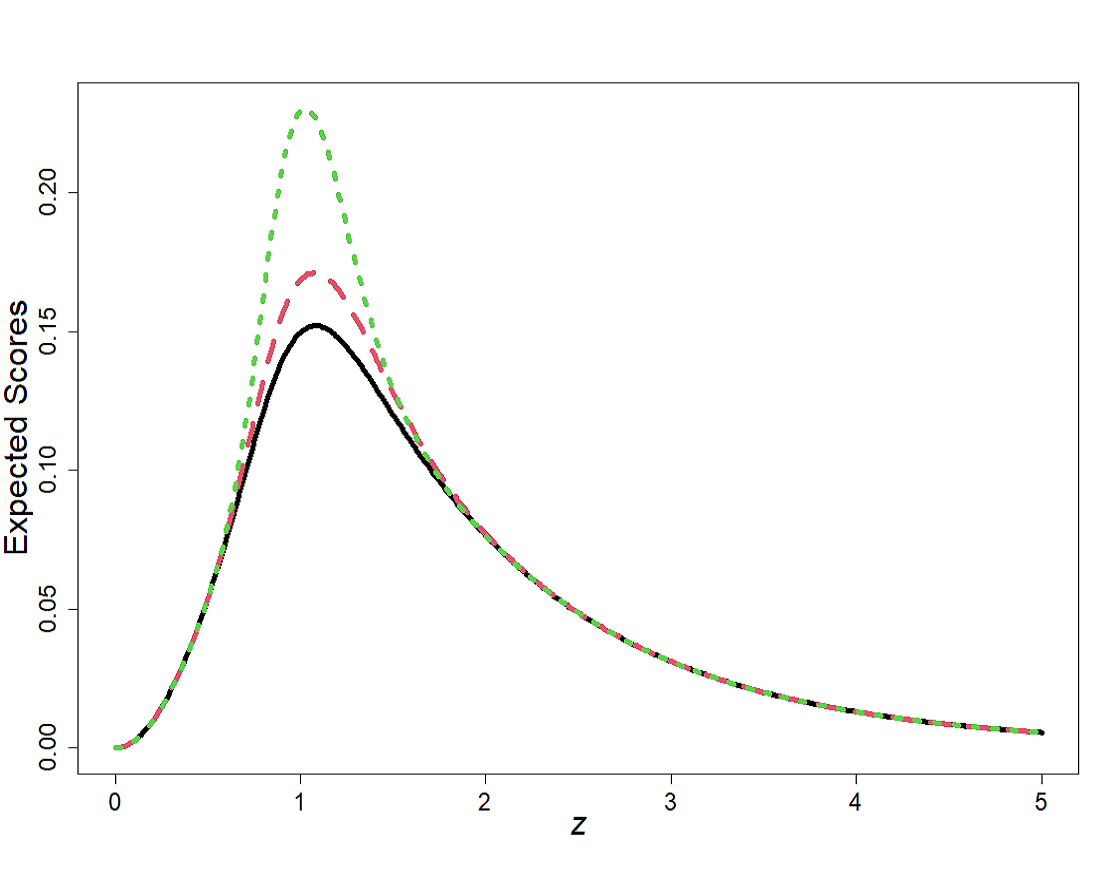}
\caption{Expected scores for \\$q=0.5$}
\end{subfigure}
\hfill
\begin{subfigure}{0.49\textwidth}
\includegraphics[width=\textwidth]{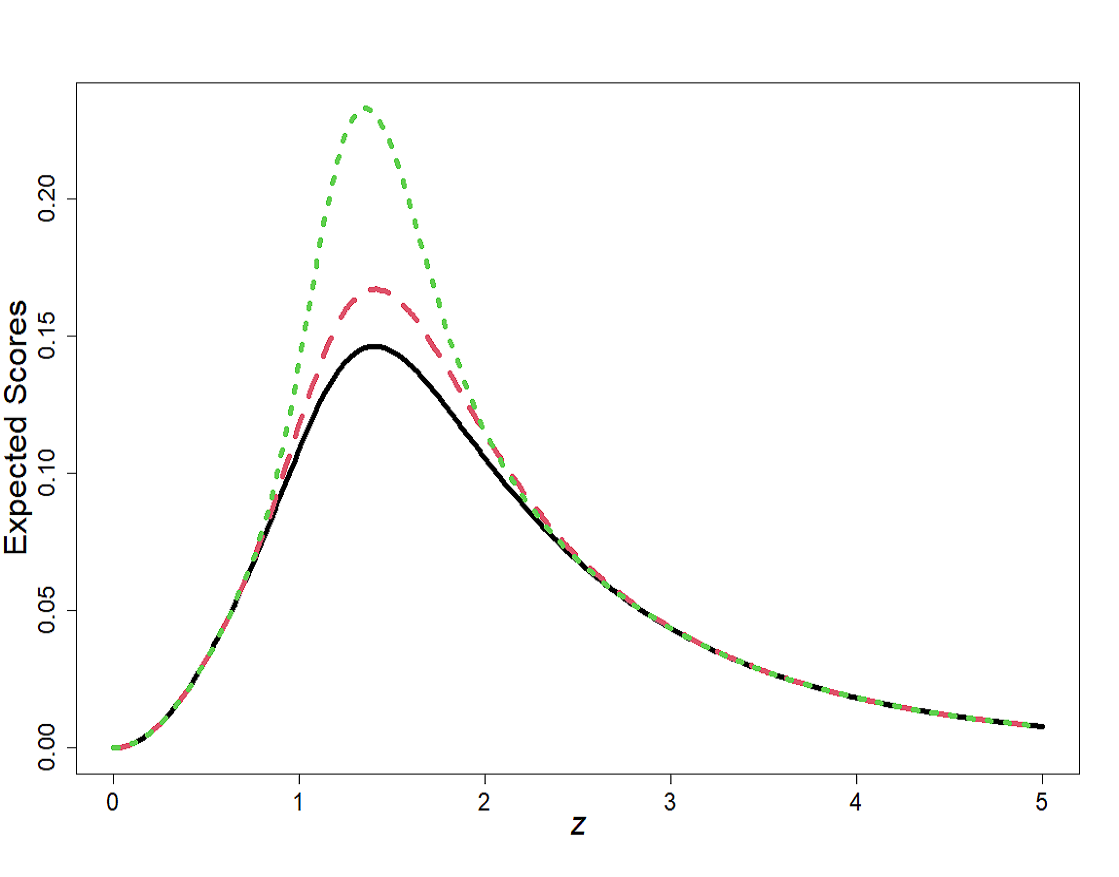}
\caption{Expected scores for \\$q=0.7$}
\end{subfigure}
\hfill
\begin{subfigure}{0.49\textwidth}
\includegraphics[width=\textwidth]{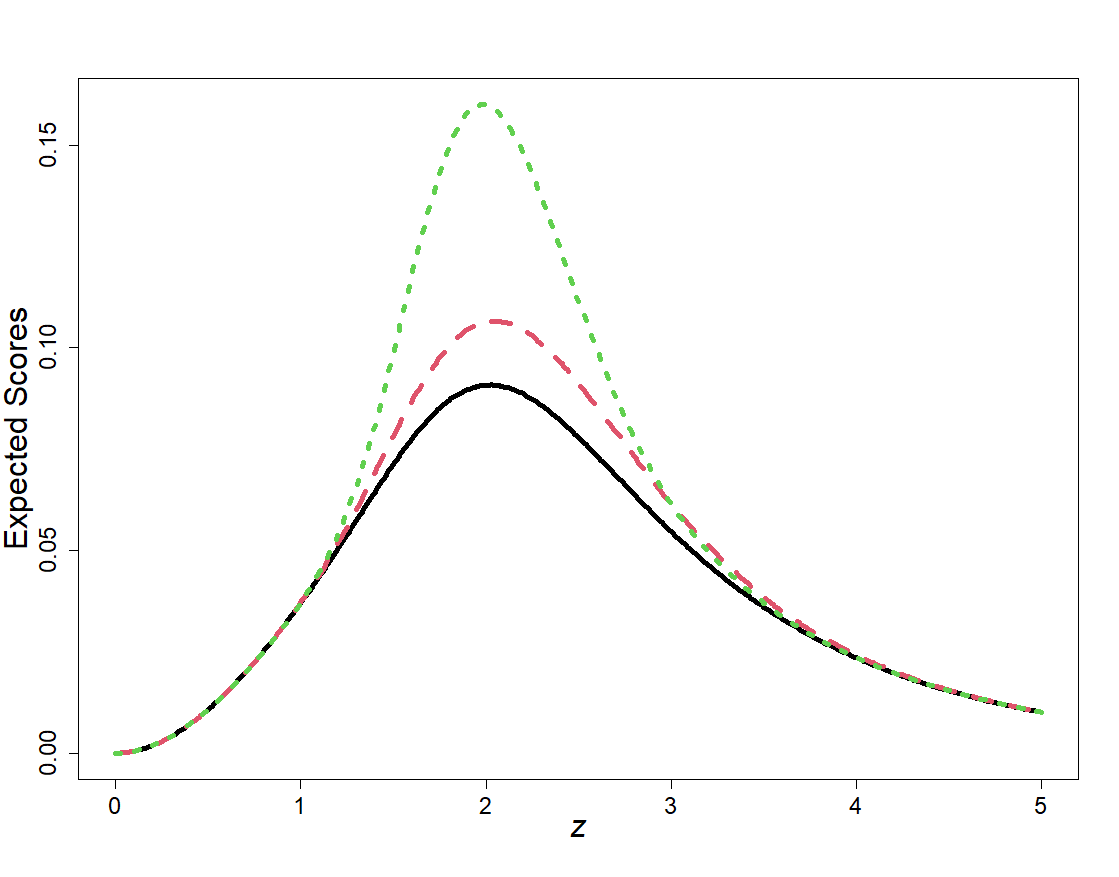}
\caption{Expected scores for \\$q=0.9$}
\end{subfigure}
\hfill
\begin{subfigure}{0.49\textwidth}
\includegraphics[width=\textwidth]{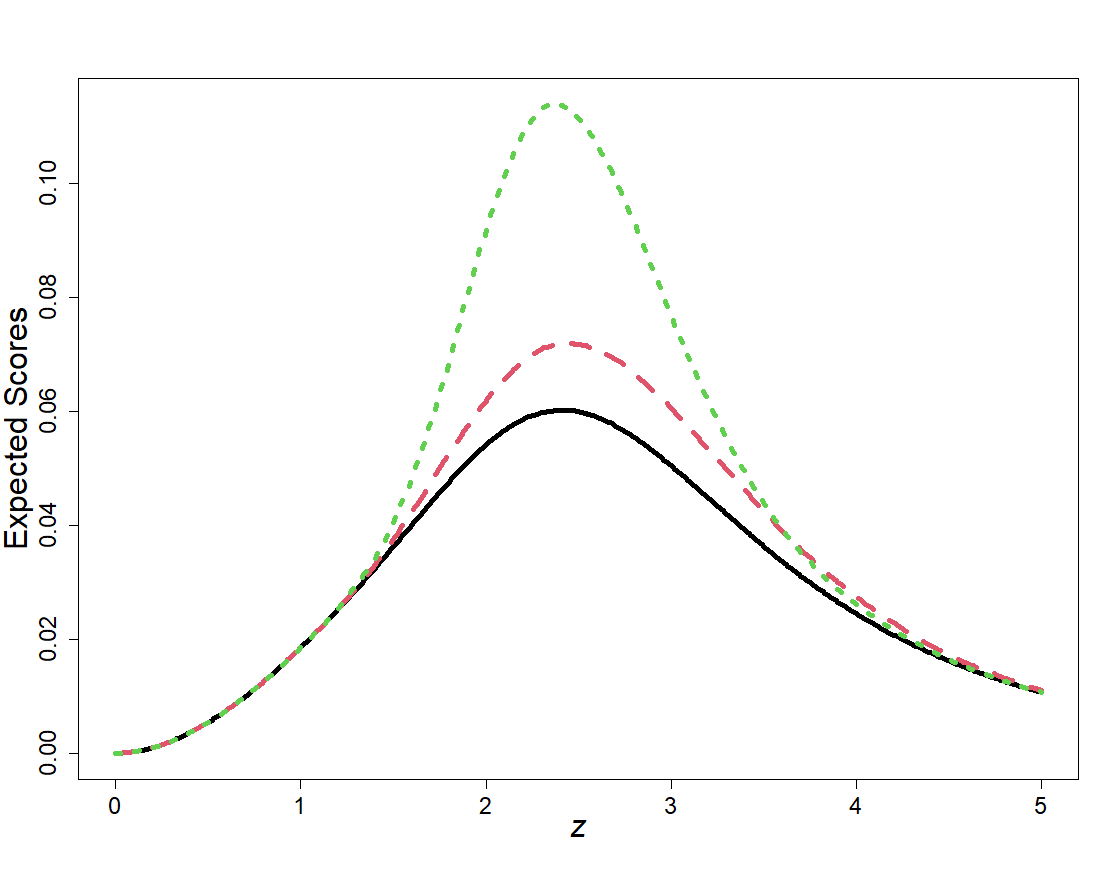}
\caption{Expected scores for \\$q=0.95$}
\end{subfigure}
\caption{Murphy diagrams of the three forecasters for different $q$-expectiles (perfect: black; unfocused: green; mean-reversed: red)}
\label{fig:q-expectile}
\end{figure}
\end{example}
 
\newpage\ \newpage

\section{Appendix}
\begin{lemma}\label{lemma:AA-GA}
Let $f \colon (0,+\infty) \to \R$ be nondecreasing and convex. 
Then, $f$ is GA-convex.
\end{lemma}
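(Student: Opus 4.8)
The plan is to chain together three elementary facts, with no auxiliary constructions needed. First, I would invoke the weighted arithmetic--geometric mean inequality: for all $x,y>0$ and $\lambda \in (0,1)$ one has $x^{\lambda} y^{1-\lambda} \leq \lambda x + (1-\lambda) y$. Second, since $f$ is nondecreasing, applying $f$ to both sides preserves the inequality, giving $f(x^{\lambda} y^{1-\lambda}) \leq f(\lambda x + (1-\lambda) y)$. Third, the convexity of $f$ yields $f(\lambda x + (1-\lambda) y) \leq \lambda f(x) + (1-\lambda) f(y)$. Concatenating the last two inequalities is exactly the GA-convexity inequality $f(x^{\lambda} y^{1-\lambda}) \leq \lambda f(x) + (1-\lambda) f(y)$, so the proof is complete.

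There is essentially no obstacle here; the only points worth recording are that the AM--GM step is valid precisely because $x,y>0$, i.e.\ on the stated domain of $f$, and that monotonicity is used in the correct direction (the geometric mean is the \emph{smaller} of the two candidate arguments, hence maps to the smaller value under $f$). An alternative, equally short route would be to set $g(t):=f(e^{t})$ on $\R$, observe that $g$ is convex as the composition of the nondecreasing convex map $f$ with the convex map $\exp$, and then read off $f(x^{\lambda}y^{1-\lambda}) = g(\lambda\log x + (1-\lambda)\log y) \leq \lambda g(\log x) + (1-\lambda) g(\log y) = \lambda f(x) + (1-\lambda) f(y)$. I would present the direct AM--GM argument as the primary proof, since it avoids even mentioning the logarithmic reparametrization.
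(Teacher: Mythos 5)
Your proposal is correct and coincides with the paper's own proof: both apply the weighted AM--GM inequality, then monotonicity, then convexity of $f$. The alternative via $g(t)=f(e^{t})$ is also fine but unnecessary.
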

\begin{proof}
For $x,y>0$ and $\lambda \in (0,1)$ from the AM-GM inequality it holds that $\lambda x +(1-\lambda)y\geq x^{\lambda}y^{1-\lambda}$. 
Since $f$ is nondecreasing and convex it follows that
\[
f(x^{\lambda}y^{1-\lambda})\leq f(\lambda x +(1-\lambda)y)\leq \lambda f(x)+(1-\lambda)f(y),
\]
which gives the thesis.
\end{proof}\medskip

\begin{proof}[Proof of Lemma~\ref{lemma:ident}]
Since $\trho$ is law invariant and positively homogeneous with $\trho(1)=1$, it follows that $\trho(\delta_y) =y$, for each $y>0$. 
From identifiability, it follows that
\begin{align*}
I(x,y) &= 0 \iff x=y, \\
I(x,y) &> 0 \iff x<y.
\end{align*}

For each $0 < y_1 < x < y_2$, define

\begin{align*}
\bar{p} &= \frac{I(x, y_2)}{I(x, y_2) - I(x, y_1)} \in (0,1), \\
\bar{F} &= \bar{p} \delta_{y_1} + (1-\bar{p}) \delta_{y_2}.
\end{align*}
Since
\begin{align*}
\int I(x,y) \,\mathrm{d}\bar{F}(y) &=\bar{p} I(x,y_1)+ (1-\bar{p}) I(x, y_2)  \\
&= \frac{I(x, y_1)I(x, y_2)}{I(x, y_2) - I(x, y_1)} + \frac{- I(x, y_2)I(x, y_1)}{I(x, y_2) - I(x, y_1)}  = 0,
\end{align*}
it follows that $\trho(\bar{F}) = x$. 
From the positive homogeneity of $\trho$, it follows that for each $\lambda >0$,
\[
\trho ((1-\bar{p}) \delta_{\lambda y_1} + \bar{p} \delta_{\lambda y_2}) = \lambda x,
\]
and from identifiability
\[
\int I(\lambda x,y) \,\mathrm{d} \left[ (1-\bar{p}) \delta_{\lambda y_1} + \bar{p} \delta_{\lambda y_2} \right] =0,
\]
which gives
\[
\frac{- I(x, y_2)I(\lambda x, \lambda y_1)}{I(x, y_1) - I(x, y_2)} + \frac{I(x, y_1)I(\lambda x, \lambda y_2)}{I(x, y_1) - I(x, y_2)} = 0,
\]
so we can conclude that
\begin{equation*}
\frac{I(x, y_1)}{I(x,y_2)} = \frac{I( \lambda x, \lambda y_1)}{I( \lambda x, \lambda y_2)},
\end{equation*}
and letting $\lambda =1/x$ we find that
\begin{equation} \label{eq:main}
\frac{I(x, y_1)}{I(x, y_2)} = \frac{I( 1, y_1/x)}{I( 1 , y_2/x)},
\end{equation}
for each $0 < y_1 < x < y_2$. 

We now want to prove that
\begin{equation}\label{eq:thesis}
I(x,y) = g(y/x) \cdot h(x),
\end{equation}
with $h(x) >0$, from which the thesis follows immediately. We consider two cases.

If $y >x$, we set $y_1 =x/2$ and $y_2 = y$ in \eqref{eq:main}, obtaining
\[
\frac{I(x, x/2)}{I(x, y)} = \frac{I(1, 1/2)}{I(1 , y/x)},
\]
which gives
\begin{equation}\label{eq:case1}
I(x, y) = {I(1, y/x)} \cdot \frac{I(x, x/2)}{I(1, 1/2)}.
\end{equation}

If instead $y<x$, we set $y_1 =y$ and $y_2 = 2x$ in \eqref{eq:main}, obtaining
\[
\frac{I(x, y)}{I(x, 2x)} = \frac{I(1, y/x)}{I(1, 2)},
\]
which gives
\begin{equation}\label{eq:case2}
I(x, y) = I(1, y/x) \cdot \frac{I(x, 2x)}{I(1, 2)}.
\end{equation}
Notice also that from~\eqref{eq:main} it follows that
\[
\frac{I(x, x/2)}{I(1, 1/2)} = \frac{I(x, 2x)}{I(1, 2)},
\]
so combining \eqref{eq:case1} and \eqref{eq:case2} it follows that \eqref{eq:thesis} is satisfied with
\begin{align*}
g(t) &= I(1,t), \\
h(x) &= \frac{I(x, x/2)}{I(1, 1/2)},
\end{align*}
from which the thesis follows.
\end{proof}


\setstretch{0.92}

{\small }

\end{document}